\documentclass[a4paper,USenglish,cleveref,autoref,thm-restate,pdfa]{lipics-v2021}
\nolinenumbers 
\hideLIPIcs

\usepackage{tcolorbox}
\usepackage{xcolor}
\usepackage{xspace}
\usepackage{xargs}

\title{The Parameterized Complexity of Coloring~Mixed~Graphs}

\author{Antonio Lauerbach}
{Julius-Maximilians-Universität Würzburg, Germany \and \url{https://go.uniwue.de/lauerbach-antonio}}
{lauerbach@informatik.uni-wuerzburg.de}
{https://orcid.org/0009-0007-9093-3443}
{}
\author{Konstanty Junosza-Szaniawski}{Warsaw University of Technology, Poland}{}{https://orcid.org/0000-0003-0352-8583}{}
\author{{Marie Diana} Sieper}{Julius-Maximilians-Universität Würzburg, Germany}{}{https://orcid.org/0009-0003-7491-2811}{}
\author{Alexander~Wolff}{Julius-Maximilians-Universität Würzburg, Germany \and \url{https://www.informatik.uni-wuerzburg.de/en/algo/team/wolff-alexander/}}{}{https://orcid.org/0000-0001-5872-718X}{}

\authorrunning{A.~Lauerbach, K.~Junosza-Szaniawski, M.\,D.~Sieper, and A.~Wolff}
\Copyright{Antonio Lauerbach, Konstanty Junosza-Szaniawski, Marie Diana Sieper, Alexander Wolff}

\ccsdesc[500]{Theory of computation~Parameterized complexity and exact algorithms}
\ccsdesc[300]{Mathematics of computing~Graph coloring}
\keywords{Mixed Graphs, Coloring, Parameterized Complexity, Structural Graph Parameters}

\acknowledgements{This work was started at the workshop Homonolo 2024
  in Nov\'a Louka.  We thank the organizers and the other
  participants.}

\graphicspath{{figures/}}

\newcommand{\arxiv}[2]{#1}

\usepackage{tikz}
\usetikzlibrary{arrows,math,patterns,shapes,fit,decorations.pathreplacing,shapes.geometric}

\usepackage{optidef}

\newcommand{\repeatcaption}[2]{%
  \renewcommand{\thefigure}{\ref{#1}}%
  \captionsetup{list=no}%
  \caption{#2}
  \addtocounter{figure}{-1}
}

\newif\ifinappendix
\let\oldappendix\appendix
\renewcommand{\appendix}{
  \oldappendix
  \inappendixtrue
}

\newcommand{\restateref}[1]{\ifinappendix{\hyperref[#1]{$\star$}}\else{\hyperref[#1*]{$\star$}}\fi}

\usepackage{nicefrac}

\usepackage{booktabs}

\DeclareRobustCommand{\abbrevcrefs}{%
\crefname{theorem}{Thm.}{Thms.}%
\crefname{corollary}{Cor.}{Cors.}%
}
\DeclareRobustCommand{\cshref}[1]{{\abbrevcrefs\cref{#1}}}

\definecolor{defblue}{rgb}{0.121,0.47,0.705}
\definecolor{linkblue}{rgb}{0.098,0.098,0.4392}
\let\emph\relax
\DeclareTextFontCommand{\emph}{\color{defblue}\em}

\let\leq\leqslant
\let\geq\geqslant
\let\le\leqslant
\let\ge\geqslant
\let\epsi\varepsilon
\let\rho\varrho

\newcommand{\ceil}[1]{{\left\lceil #1 \right\rceil}}

\newtheorem{probspecint}{Problem}

\newcommand{\defproblem}[4]{
  \begin{tcolorbox}%
    \hspace{0ex}\hspace*{-2.8ex}
    \begin{minipage}{0.99\textwidth}
      \vspace{0ex}\vspace*{-1ex}
      \begin{tabular}{@{}l@{~~}p{0.9\textwidth}@{}}
        {\sf\bfseries\color{gray}\normalsize Problem:} & \normalsize#1\\[.1ex]
        {\sf\bfseries\color{gray}\normalsize Input:} & \normalsize#2\\[.1ex]
        {\sf\bfseries\color{gray}\normalsize #4:} & \normalsize#3\\[-1ex]
      \end{tabular}
    \end{minipage}
  \end{tcolorbox}
}
\newcommand{\defdecproblem}[3]{\defproblem{#1}{#2}{#3}{Question}}

\newcommand{\NP}{\ensuremath{\mathsf{NP}}\xspace}

\newcommand{\FPT}{\ensuremath{\mathsf{FPT}}\xspace}
\newcommand{\XP}{\ensuremath{\mathsf{XP}}\xspace}
\newcommand{\W}[1]{\ensuremath{\mathsf{W[#1]}}\xspace}

\newcommand{\PMulticoloredClique}{\textsc{MulticoloredClique}\xspace}
\newcommand{\PShortestSuperstring}{\textsc{ShortestSuperstring}\xspace}
\newcommand{\PPrecedenceConstrainedScheduling}{\textsc{PrecedenceConstrainedScheduling}\xspace}

\usepackage{algorithm2e}
\SetKwProg{Fn}{}{:}{} 
\SetKwProg{Alg}{Algorithm}{}{} 

\DeclareMathOperator{\ag}{AG} 

\usepackage{mathtools}
\DeclarePairedDelimiter\set{\{}{\}} 
\DeclarePairedDelimiter\abs{\lvert}{\rvert} 
\DeclarePairedDelimiter\floor{\lfloor}{\rfloor} 
\DeclarePairedDelimiter\angles{\langle}{\rangle} 

\def\to{\ensuremath{\rightarrow}} 
\def\Oh{\ensuremath{\mathcal{O}}} 

\newcommand{\Poly}{\textsf{P}}
\newcommand{\paraNP}{\textsf{paraNP}}

\newcommand{\N}{\mathbb{N}} 

\newcommand{\SETH}{\textsf{SETH}}

\newcommand{\XColoring}[1]{\textsc{#1Coloring}}
\newcommand{\Coloring}{\XColoring{}}
\newcommand{\MixedColoring}{\XColoring{Mixed}}
\newcommand{\ListColoring}{\XColoring{List}}

\newcommand{\SAT}{\textsc{SAT}}
\newcommand{\ILP}{\textsc{ILP}}

\newcommand{\cw}{\ensuremath{\mathsf{cw}}} 
\newcommand{\tw}{\ensuremath{\mathsf{tw}}} 
\newcommand{\pw}{\ensuremath{\mathsf{pw}}} 
\newcommand{\td}{\ensuremath{\mathsf{td}}} 
\newcommand{\vc}{\ensuremath{\mathsf{vc}}} 
\newcommand{\fvs}{\ensuremath{\mathsf{fvs}}} 
\newcommand{\ndu}{\ensuremath{\mathsf{nd}}} 
\newcommand{\ndm}{\ensuremath{\mathsf{mnd}}} 
\newcommand{\cwu}{\ensuremath{\mathsf{cw}}} 
\newcommand{\cwm}{\ensuremath{\mathsf{mcw}}} 
\newcommand{\cwd}{\ensuremath{\mathsf{dcw}}} 

\newcommand{\maxrank}{\ensuremath{\Lambda}}

\newcommand{\inrank}{\ensuremath{\lambda^-}}

\newcommand{\neigh}{\ensuremath{{N}}}
\newcommand{\inneigh}{\ensuremath{{N}^-}}
\newcommand{\outneigh}{\ensuremath{{N}^+}}
\newcommand{\uneigh}{\ensuremath{{N}^{\mathrm{u}}}}

\newcommand{\cwedge}{\eta}
\newcommand{\cwarc}{\alpha}
\newcommand{\cwunion}{\oplus}
\newcommand{\cwrelabel}{\rho}

\begin{document}

\maketitle

\begin{abstract}
  A mixed graph contains (undirected) edges as well as
  (directed) arcs, thus generalizing undirected and directed graphs.
  A proper coloring~$c$ of a mixed graph~$G$ assigns a positive
  integer to each vertex such that $c(u)\neq c(v)$ for every
  edge~$\set{u,v}$ and~$c(u)<c(v)$ for every arc~$(u,v)$ of~$G$. 
  As in classical coloring, the objective is to minimize the number of colors.
  Thus, mixed (graph) coloring generalizes classical coloring of
  undirected graphs and allows for more general applications, such as
  scheduling with precedence constraints, modeling metabolic pathways,
  and process management in operating systems; see a survey by Sotskov
  [Mathematics, 2020].

  We initiate the systematic study of the parameterized complexity of
  mixed coloring.  We focus on structural graph parameters that lie
  between cliquewidth and vertex cover, primarily with respect to the
  underlying undirected graph.  Unlike classical coloring, which is
  fixed-parameter tractable (\FPT) parameterized by treewidth or
  neighborhood diversity, we show that mixed coloring is \W{1}-hard
  for treewidth and even \paraNP-hard for neighborhood diversity.  To
  utilize the directedness of arcs, we introduce and analyze natural
  generalizations of neighborhood diversity and cliquewidth to mixed
  graphs, and show that mixed coloring becomes \FPT{} when
  parameterized by (the generalized) mixed neighborhood diversity.  Further, we
  investigate how these parameters are affected if we add transitive
  arcs, which do not affect colorings.  Finally, we provide tight bounds on
  the chromatic number of mixed graphs, generalizing known bounds on
  mixed interval graphs.
\end{abstract}

\section{Introduction}
\label{sec:introduction}
The problem of coloring graphs, i.e., assigning as few colors as possible to vertices of an undirected graph such that adjacent vertices receive distinct colors, is an old and extensively studied problem in graph theory with applications in various areas, such as scheduling, frequency assignment, and graph drawing~\cite{Kubale04GraphColorings}. 
An example of a scheduling problem that can be modeled using graph coloring is timetabling, where courses (vertices) need to be scheduled in certain time slots (colors) while respecting conflicts (edges), e.g., due to two courses having the same teacher or students. These types of problems are referred to as chromatic scheduling problems. In real world applications, there are usually several additional constraints, such as certain courses having to be scheduled in specific time slots, certain teachers not being available at certain times, courses using multiple consecutive time slots, or a lecture having to be held before the corresponding exercises. Thus, there exist several extensions of coloring to accommodate these additional constraints, such as using list coloring to restrict courses to certain slots~\cite{Werra97RestrictedColorModelsTimetabling}, or using interval coloring to enforce consecutive time slots~\cite{Kubale89IntervalVertexColoring}.
To account for precedence constraints, for example, if a course has to
be scheduled before another, Hansen et
al.~\cite{HansenKdW97MixGraphColor} added arcs to the graph, with the
idea that the arc~$(u,v)$ expresses that~$u$ has to be scheduled
before~$v$.
This leads to the notion of a \emph{mixed graph}~$G$, which consists of a set \emph{$V(G)$} of \emph{vertices}, a set \emph{$E(G)$} of (undirected) \emph{edges}, and a set \emph{$A(G)$} of (directed) \emph{arcs}.  A mixed graph combines the concepts of undirected and directed graphs. 
We require mixed graphs to be \emph{simple}, i.e., without loops and without parallel edges or arcs.

A \emph{$k$-coloring}~$c \colon V(G)\to\set{1,\dots,k}$ of a mixed graph~$G$ is \emph{proper} if it holds for every edge $\set{u,v}\in E(G)$ that~$c(u)\neq c(v)$ and it holds for every arc~$(u,v)\in A(G)$ that~$c(u)<c(v)$.
We call the problem of deciding whether a mixed graph can be properly colored with~$k$ colors {\color{defblue}$k$-\MixedColoring{}}. 
As a mixed graph cannot be properly colored if it contains a directed cycle, we consider only mixed graphs without directed cycles.
Apart from scheduling, mixed graph colorings have also been applied in graph drawing in order to compact layered orthogonal drawings~\cite{GutowskiMRSWZ22ColoringMixedDirIntGraph}. See the survey by Sotskov~\cite{Sotskov20SurveyMixGraphColor} for more details and applications of mixed graph colorings.

Since \MixedColoring{} generalizes the classical \Coloring{} problem for undirected graphs, it is also \NP-hard for three or more colors. 
As such, it is natural to study its parameterized complexity. 
However, to our knowledge, there has not yet been a systematic study of the parameterized complexity of \MixedColoring{}. 
The only known results in this direction are an \FPT-algorithm
parameterized by the number of edges (i.e., not counting arcs) by Hansen et al.~\cite{HansenKdW97MixGraphColor}, with subsequent improvements by Damaschke~\cite{Damascke19ParaMixGraphColor}, as well as an \XP-algorithm parameterized by treewidth by Ries and de Werra~\cite{Ries-deWerra08TwoColorProbMixGraph}.

Meanwhile, the parameterized complexity of \Coloring{} has been extensively studied.
On the positive side, results include the well-known dynamic program that runs in~$\Oh^*(k^\tw)$ time, where~$k$ is the number of colors, $\tw$ is the treewidth of the graph, and the $\Oh^*$-notation hides polynomial factors. Since treewidth bounds the chromatic number of undirected graphs, this implies that \Coloring{} is \FPT{} parameterized by treewidth. Lokshtanov et al.~\cite{LokshtanovMS18KnownAlgoTWProbOpt} even showed that under \SETH{} (a hypothesis about the complexity of \SAT{} with at most~$k$ literals per clause) there is no~$\Oh^*((k-\epsi)^\tw)$-time algorithm for any~$\epsi>0$, i.e., the~$\Oh^*(k^\tw)$ algorithm is probably optimal. Further results include \Coloring{} being \FPT{} parameterized by neighborhood diversity, as shown by Ganian~\cite{Ganian12NeighborDiversityColoringFPT}.
On the negative side, Fomin et al.~\cite{FominGLS10IntractabilityCliqueWidth} showed that \Coloring{} is \W{1}-hard parameterized by cliquewidth. However, \Coloring{} still admits an \XP-algorithm parameterized by cliquewidth, as shown by Kobler and Rotics~\cite{KoblerRotics03XPColoringCliqueWidth}. Further, applying Courcelle's theorem~\cite{Courcelle90Theorem,CourcelleEngelfriet12GSaMSOL}, a famous meta-theorem that yields \FPT-algorithms, it follows that \Coloring{} is \FPT parameterized by cliquewidth plus chromatic number.\looseness=-1

Understanding the relationships between various structural graph parameters is crucial for analyzing the parameterized complexity of a problem. We say that a graph parameter~$\alpha$ \emph{(upper) bounds} another parameter~$\beta$ if there exists a computable function~$f$ such that for every (mixed) graph~$G$, it holds that~$\beta(G)\leq f(\alpha(G))$. If no such function exists, we say that~$\alpha$ \emph{does not (upper) bound}~$\beta$. If~$\alpha$ (upper) bounds~$\beta$ and~$\beta$ does not (upper) bound~$\alpha$, we say that~$\alpha$ \emph{strictly (upper) bounds}~$\beta$. Meanwhile, if neither parameter bounds the other, we say that they are \emph{incomparable}. Conversely, if two parameters bound each other, we say that they are \emph{equivalent}.
For example, since treewidth is bounded by vertex cover, this implies that any problem that is \FPT parameterized by treewidth, such as \Coloring{}, is also \FPT parameterized by vertex cover. 

For mixed graphs however, it is not immediately clear how to apply the existing parameters, as they are defined for undirected graphs.
The approach taken in previous works is to consider the parameters w.r.t.\ the \emph{underlying undirected graph}, i.e., the graph obtained by replacing all arcs with edges.
The downside of this approach is that it neglects the directedness of arcs. In fact, Courcelle and Olariu~\cite{CourcelleOlariu00BoundsCliqueWidth} originally defined cliquewidth for undirected as well as for directed graphs, yielding a natural generalization for mixed graphs.
Further, Fernau et al.~\cite{FernauFMPN25ParaPathPartitions}
generalized neighborhood diversity to directed graphs by
distinguishing between incoming and outgoing neighbors; this
can also be generalized to mixed graphs.

\begin{figure}[tb]
  \centering
  \begin{minipage}[b]{0.57\textwidth}
    \centering
    \includegraphics{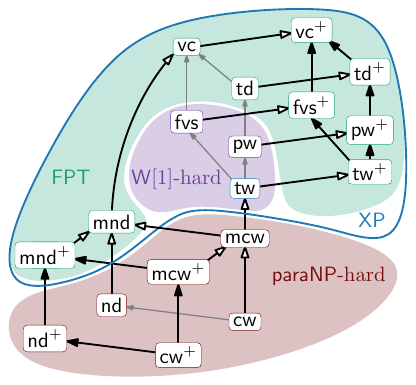}
  \end{minipage}
  \hfill
  \begin{minipage}[b]{0.39\textwidth}
    \centering
    \begin{tabular}{@{}lcr@{}}
      \toprule
      Parameter & Complexity & Reference \\
      \midrule
      $\cwm+\chi$ & \FPT & \cshref{thm:logic-fpt-cw-k}\\
      $\ndm$ & \FPT & \cshref{thm:fpt-ndm}\\
      $\ndu+\chi$ & \FPT & \cshref{thm:fpt-ndu-chrom}\\
      $\tw$ & \XP & Cor.\hspace*{1.9ex}\ref{cor:tw-fpt-xp}\\
      \midrule
      $\fvs+\pw$ & \W{1}-hard & \cshref{thm:w1-fvs-pw}\\
      $\cwm+\maxrank$ & \paraNP-hard & \cshref{thm:para-np-cwm-rank}\\
      $\ndu^+$ & \paraNP-hard & \cshref{thm:paraNP-ndu}\\
      \bottomrule
    \end{tabular}
    \vspace*{1ex}
  \end{minipage}
  
  \begin{minipage}[t]{0.57\textwidth}
    \captionof{figure}{Overview of the hierarchy of structural
      parameters of mixed graphs (without directed cycles), and the
      parameterized complexity landscape for \MixedColoring{}.  Arrows
      represent strict (upper) bounds between parameters.  Bold arrows
      indicate new relationships established in this work.
      White-tipped arrows highlight results that are not
      straightforward consequences of known results.  Composite
      parameters are omitted for clarity.}
    \label{fig:mix-para-hierarchy}
    \label{fig:para-complexity}
  \end{minipage}
  \hfill
  \begin{minipage}[t]{0.39\textwidth}  
    \captionof{table}{Our parameterized complexity results for
      \MixedColoring{} including results for composite parameters. The
      complexity of other parameters can be inferred from the
      parameter hierarchy in \cref{fig:mix-para-hierarchy}.}
    \label{tab:para-complex-results}
  \end{minipage}
\end{figure}
\vspace{-1.1ex}

\subparagraph{Contribution}
In order to better explore the parameterized complexity of \MixedColoring{}, we first introduce and analyze the generalizations of neighborhood diversity and cliquewidth to mixed graphs, which we call mixed neighborhood diversity ($\ndm$) and mixed cliquewidth ($\cwm$), respectively. Since in \MixedColoring{} we can assume that the input graph does not contain directed cycles, we only analyze these parameters for mixed graphs without directed cycles. 
We explore their relationships with existing parameters defined on the underlying undirected graph, such as vertex cover ($\vc$), treedepth ($\td$), pathwidth ($\pw$), feedback vertex set ($\fvs$), treewidth ($\tw$), as well as neighborhood diversity ($\ndu$) and cliquewidth ($\cwu$); see \arxiv{\cref{appx:parameters}}{the full version~\cite{FullVersion}} for detailed definitions. Specifically, we show that $\cwm$ strictly bounds $\cwu$ (\cref{prop:cwm-cwu}) and is strictly bounded by $\tw$ (\cref{prop:cwm-tw}). Similarly, we show that $\ndm$ strictly bounds $\ndu$ (\cref{prop:ndm-ndu}) and is strictly bounded by $\vc$ (\cref{prop:ndm-vc}). Furthermore, we show that $\ndm$ strictly bounds $\cwm$ (\cref{prop:ndm-cwm}), while~$\cwm$ and~$\ndu$ are incomparable (\cref{prop:cwm-ndu-incomp}).

Since arcs enforce precedence constraints on the colors of the vertices, \emph{transitive arcs}, i.e., arcs between vertices connected by a directed path along other arcs, do not influence the properness of colorings of mixed graphs. Adding all transitive arcs (and removing any edges parallel to transitive arcs) results in the \emph{transitive closure}~{$G^+$} of a mixed graph~$G$ (without directed cycles). 
We denote with~$\alpha^+$ the \emph{transitive closure counterpart} of the parameter~$\alpha$, i.e.,~$\alpha^+(G)=\alpha(G^+)$.
We show that $\vc$, $\td$, $\pw$, $\fvs$, and $\tw$ are strictly bounded by their transitive closure counterparts (\cref{prop:param-tc-upper-bound}), while for $\cwm$ and $\ndm$ the situation is inverted, with $\cwm$ and $\ndm$ strictly bounding $\cwm^+$ and $\ndm^+$, respectively (\cref{prop:ndm/cwm-upper-bound-tc}). Meanwhile, $\cwu$ and $\ndu$ are incomparable to their transitive closure counterparts (\cref{prop:ndu-cwu-tc-incomparable}).
The resulting hierarchy of parameters is visualized in \cref{fig:mix-para-hierarchy}.

Furthermore, we also consider bounds for the \emph{chromatic number}~(\emph{$\chi$}) of mixed graphs, i.e., the minimum number of colors needed for a proper coloring. 
After providing some simple lower bounds, i.e., that the chromatic number of a mixed graph is lower-bounded by the chromatic number of the underlying undirected graph as well as by \emph{maxrank}~(\emph{$\maxrank$}), the length of the longest directed path, we generalize and tighten a bound by Gutowski et al.~\cite{GutowskiJKR0Z23ColorRecMixIntGraph} on the chromatic number of mixed interval graphs (\cref{thm:chi-upper-bound}). Based on this, we show that the chromatic number of a mixed graph is bounded by~$\tw^+$ as well as~$\td$ and~$\vc$ (\cref{cor:chi-bounds}).

Using the newly established parameter relationships, we analyze the parameterized complexity of \MixedColoring{} in \cref{sec:para-complexity}. The results are summarized in \cref{tab:para-complex-results} and visualized in \cref{fig:para-complexity}.
We conclude with an outlook on open problems in \cref{sec:conclusion}.

\subparagraph{Coloring Directed Graphs}
For directed graphs, Neumann-Lara~\cite{Neumann-Lara82DichromaticNumberDigraph} introduced the dichromatic number and the corresponding coloring, where each color class induces an acyclic subgraph.
This problem, known as \textsc{DigraphColoring}~\cite{HarutyunyanEtAl24DigraphColoringDistanceAcyclicity}, also generalizes coloring undirected graphs, as each edge can be replaced by two opposite arcs.
Therefore, \textsc{DigraphColoring} is \NP-hard.
In contrast, \MixedColoring{} is easy on directed graphs: if the graph is acyclic, greedily coloring vertices in topological order yields an optimal coloring.
Otherwise, the graph cannot be properly colored. 
This difference arises due to the fact that \textsc{DigraphColoring}
treats arcs very differently: in \MixedColoring, there
cannot be arcs between vertices of the same color, whereas in
\textsc{DigraphColoring} there can be arcs between vertices of the
same color, as long as these arcs do not form a directed cycle.

On directed acyclic graphs, \MixedColoring{}, especially when
considering the transitive closure, behaves somewhat similar to
\textsc{OrientedColoring}~\cite{Sopena16HomomorphismsColouringsOrientedGraphsUpdatedSurvey},
where, between each pair of color classes, all arcs must have the same
orientation.  However, contrary to \MixedColoring{},
\textsc{OrientedColoring} does not forbid directed cycles.

\subparagraph{Conventions}
For positive integer~$k$, we use~\emph{$[k]$} as shorthand for~$\set{1,\dots,k}$. 
We say that a mixed graph~$G$ is a \emph{partial orientation} of its underlying undirected graph.
The full versions of proofs marked with a (clickable) star~(\restateref{}) can be found in the \arxiv{appendix}{full version of this paper~\cite{FullVersion}}.

\section{Mixed Graph Parameters}
\label{sec:mixed-params}
We define the \emph{mixed cliquewidth} of a mixed graph~$G$, denoted by~\emph{$\cwm(G)$}, to be the minimum number of distinct labels needed to construct~$G$ using the following operations:
\begin{itemize}
  \item creation of a new vertex with label~$i$, denoted by~\emph{$i$}.
  \item disjoint union of two labeled mixed graphs, denoted by~\emph{$\cwunion$}.
  \item adding an (undirected) edge between every vertex with label~$i$ and every vertex with label~$j$, for~$i\neq j$, denoted by~\emph{$\cwedge_{i,j}$}.
  \item adding a (directed) arc from every vertex with label~$i$ to every vertex with label~$j$, for~$i\neq j$, denoted by~\emph{$\cwarc_{i,j}$}.
  \item renaming label~$i$ to label~$j$, denoted by~\emph{$\cwrelabel_{i \to j}$}.
\end{itemize}

These are exactly the operations used by Courcelle and Olariu~\cite{CourcelleOlariu00BoundsCliqueWidth} in their definitions of cliquewidth for undirected and directed graphs. Instead of using only~$\cwedge_{i,j}$ for undirected graphs and only~$\cwarc_{i,j}$ for directed graphs, we allow both operations to be used in order to construct mixed graphs, thereby generalizing the concept of cliquewidth to mixed graphs.
We call the sequence of such operations constructing a given mixed graph using~$\ell$ labels a \emph{mixed $\ell$-expression}. For example, the directed path of length~$4$ is constructed by the mixed $3$-expression $\cwarc_{2,3}(\cwrelabel_{3\to 2}(\cwrelabel_{2\to 1}(\cwarc_{2,3}(\cwarc_{1,2}(1\cwunion 2)\cwunion 3)))\cwunion 3)$.
As it turns out, mixed cliquewidth is not bounded by cliquewidth.

\begin{proposition}\label{prop:cwm-cwu}
  For any mixed graph~$G$ (without directed cycles), $\cwu(G)\leq \cwm(G)$.
  Furthermore,~$\cwu$ does not bound~$\cwm$.
\end{proposition}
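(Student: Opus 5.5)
The inequality $\cwu(G)\leq\cwm(G)$ follows by translating expressions. Take a mixed $\ell$-expression for~$G$ and replace every arc operation~$\cwarc_{i,j}$ by the edge operation~$\cwedge_{i,j}$, leaving all other operations unchanged. By induction on the expression, the result constructs, with the same labels at every step, the underlying undirected graph of the graph built by the original subexpression. At each stage, the vertex set and the labeling are identical, and every arc $(u,v)$ created by $\cwarc_{i,j}$ corresponds to the edge $\set{u,v}$ created by $\cwedge_{i,j}$.

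One subtlety is simplicity. If the mixed expression adds an arc $(u,v)$ and also an edge $\set{u,v}$, the undirected translation adds the same edge twice. The undirected cliquewidth operations simply do not create duplicate edges, so this causes no harm. Since $G$ is simple, such a pair does not occur in $G$ anyway. Hence we obtain an $\ell$-expression of the underlying undirected graph, and $\cwu(G)\leq\cwm(G)$.

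For the second part, we need a family of mixed graphs without directed cycles whose underlying graphs have bounded cliquewidth but whose mixed cliquewidth is unbounded. The natural choice is acyclic orientations of complete graphs: the underlying graph $K_n$ has cliquewidth at most~$2$. The transitive tournament is probably too easy, since it is constructed by repeatedly adding a vertex with a fresh label, directing arcs from the old label to it, and relabeling, which uses only two labels. Instead, I would take a partial orientation of $K_n$ that encodes a structure of large cliquewidth. Vertices $1,\dots,n$ are connected by an arc from smaller to larger index if some pair-dependent condition holds, and by an undirected edge otherwise. Orienting arcs only by increasing index keeps the graph acyclic. The condition would be chosen so that the ``arc relation'' forms a graph $H$ of large cliquewidth, for instance an $n\times n$ grid, whose cliquewidth is known to be unbounded.

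The key lemma is then that $\cwm(G)\geq\cwu(H)$, up to a constant, where $H$ is the undirected graph formed by the arc pairs. To prove it, I would take a mixed $\ell$-expression of $G$, keep only the $\cwarc$ operations, and turn each of them into $\cwedge$, dropping all $\cwedge$ operations. The resulting expression builds exactly the underlying graph of the arc set, namely $H$, with $\ell$ labels.

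This leaves the main obstacle: ensuring that $H$ can be realised as the set of arcs of such a graph while the remaining pairs are undirected edges, so that the underlying graph is exactly complete. This is free, since any graph $H$ on $[n]$ works: orient each edge of $H$ from smaller to larger index and fill all other pairs with undirected edges. The orientation is acyclic because arcs always go from smaller to larger index. Hence $\cwu(G)\leq 2$, while $\cwm(G)\geq\cwu(H)$ is unbounded when $H$ ranges over grids, so $\cwu$ does not bound~$\cwm$.
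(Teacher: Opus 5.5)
Your proof is correct and takes essentially the same approach as the paper: the first part is the identical arc-to-edge translation, and for the second part the paper also orients an $(\ell\times\ell)$-grid acyclically and adds edges between all remaining pairs so that the underlying graph is complete. The paper then bounds $\cwm(G_\ell)\geq\cwm(\ag(G_\ell))\geq\cwu(\ag(G_\ell))=\ell+1$, which is your one-step lemma (delete the $\cwedge$ operations, turn each $\cwarc$ into $\cwedge$) split into two steps; the only nit is that your grid has $n^2$ vertices, so it should not be placed on $[n]$.
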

\begin{proof}
  The first part follows directly from the definitions, as replacing all arc operations by edge operations in a mixed $\ell$-expression constructing~$G$ yields an $\ell$-expression constructing the underlying undirected graph of~$G$.

  Let~\emph{$\ag(G)$} be the subgraph of~$G$ resulting from omitting all edges. It holds that $\cwm(\ag(G))\leq \cwm(G)$, as by omitting the edge operations in a mixed $\ell$-expression constructing~$G$ we obtain a mixed $\ell$-expression for~$\ag(G)$.
  Using this, we construct a family of mixed graphs~$(G_\ell)_{\ell\geq 1}$ with~$\cwu(G_\ell)=2$ and~$\cwm(G_\ell)\geq \ell+1$.
  We construct~$G_\ell$ by orienting an $(\ell\times \ell)$-grid graph such that no directed cycles are created. As shown by Golumbic and Rotics~\cite{GolumbicRotics00CliquewidthPerfectGraphClasses}, the cliquewidth of an $(\ell\times \ell)$-grid graph is~$\ell+1$ and thus $\cwm(G_\ell)\geq\cwm(\ag(G_\ell))\geq\cwu(\ag(G_\ell))=\ell+1$. Furthermore, between any pair of vertices of~$G_\ell$ that are not connected by an arc, we add an edge. This results in the underlying undirected graph of~$G_\ell$ being a complete graph, which has cliquewidth~$2$, and thus~$\cwu(G_\ell)=2$.\looseness=-1
\end{proof}

Courcelle and Olariu~\cite{CourcelleOlariu00BoundsCliqueWidth} showed that the cliquewidth of undirected as well as of directed graphs is bounded by treewidth. We can utilize their result for directed graphs to show that mixed cliquewidth (on mixed graphs without directed cycles) is also bounded by treewidth. To this end, we transform a mixed graph~$G$ without directed cycles into its \emph{corresponding directed graph}~\emph{$D(G)$} by replacing each edge~$\set{u,v}$ by two opposite arcs~$(u,v)$ and~$(v,u)$. Since mixed graphs without directed cycles cannot contain opposite arcs (as they would form a directed cycle of length two) this transformation is injective. We define the \emph{directed cliquewidth} of a mixed graph~$G$ without directed cycles, denoted~$\cwd(G)$, to be the (directed) cliquewidth of the corresponding directed graph~$D(G)$. As we show now, mixed and directed cliquewidth are, while not equal, equivalent under the lens of parameterized complexity.
\begin{restatable}[\restateref{prop:cwm-tw}]{proposition}{CWMtoTW}
  \label{prop:cwm-tw}
  For any mixed graph~$G$ (without directed cycles), it holds
  that $\cwd(G)\leq\cwm(G)\le 2\,\cwd(G)\leq 2(2^{\tw(G)+1}+1)$.
  Further, there exists a mixed graph~$\tilde{G}$ such
  that~$\cwd(\tilde{G})\neq\cwm(\tilde{G})$.
\end{restatable}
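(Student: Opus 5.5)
The proof splits into four parts: the two inequalities between $\cwd$ and $\cwm$, the bound by treewidth, and the separating example.

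\textbf{First inequality, $\cwd(G)\leq\cwm(G)$.} I would translate a mixed $\ell$-expression for $G$ into a directed $\ell$-expression for $D(G)$ using the same labels. Every $\cwedge_{i,j}$ is replaced by $\cwarc_{i,j}\cwarc_{j,i}$, and every other operation is kept unchanged. Since the operations commute with this translation, an induction over the expression shows that the produced labeled digraph is exactly $D$ of the labeled mixed graph at each step. For this I need that $G$ has no opposite arcs, since then no arc of $D(G)$ is ambiguous. Simplicity of $G$ guarantees that an edge and an arc never coincide.

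\textbf{Second inequality, $\cwm(G)\leq 2\,\cwd(G)$.} I start from a directed $\ell$-expression for $D(G)$. The difficulty is that an $\cwarc_{i,j}$ operation in it may create an arc $(u,v)$ whose reverse $(v,u)$ is produced later. Together the two should become the edge $\set{u,v}$, but the mixed expression would first commit to an arc. To avoid this I would work on the parse tree and treat each pair $u,v$ at the first operation at which they become adjacent in $D(G)$. There, either both directions are created at once by the same union-level structure, or one is created later. Handling the "later" case is the \emph{main obstacle}.

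My first attempt would be a normal-form argument. In a cliquewidth expression we may assume that all arcs between two vertices are created at the lowest union node separating them, i.e. at the node whose two subtrees contain $u$ and $v$ respectively. This is achieved by a standard rearrangement: push every arc operation down to directly after the disjoint union where its endpoints first meet, replacing it with the corresponding operations on the labels present at that point. The labels at that point are obtained by tracking relabelings backwards, and the resulting label count does not increase.

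After this normalization, at each union node $t$ the set of arcs added between the left and right parts is determined by label pairs. Within one label class these arcs are uniform, so the rule is: an arc between a left vertex with label $i$ and a right vertex with label $j$ exists in direction $\to$, in direction $\leftarrow$, or both. To realize "both" as an edge in the mixed expression, I double the labels. Each label $i$ becomes $i_L$ or $i_R$ according to the side of the current union it comes from, which uses $2\ell$ labels. Then for each pair $(i,j)$ I apply $\cwedge_{i_L,j_R}$ if both arcs $(i,j)$ and $(j,i)$ are added, and otherwise apply the single corresponding $\cwarc$. Afterwards I relabel everything back to the unsided labels. Correctness of this step needs that vertices of the same side never receive new arcs between them at $t$, which is exactly what the normalization provides.

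\textbf{Treewidth bound.} The final bound $2\,\cwd(G)\leq 2(2^{\tw(G)+1}+1)$ is obtained by citing Courcelle and Olariu's bound for directed graphs, namely that the cliquewidth of a digraph is at most $2^{\tw+1}+1$, where $\tw$ is the treewidth of the underlying undirected graph. The underlying undirected graph of $D(G)$ is the underlying undirected graph of $G$, so the bound applies directly.

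\textbf{Separating example.} For $\tilde G$ I would search for a small graph on which $\cwd(\tilde G)<\cwm(\tilde G)$. A natural candidate is two vertices $u,v$ joined by an edge together with a third vertex $w$ and arcs $(u,w)$ and $(w,v)$, or some similar small configuration. In the directed setting, one label operation can add one direction to many pairs and the other direction later, which saves labels compared to the mixed setting. I would verify the claimed widths by a short case analysis over the last union in an optimal expression.
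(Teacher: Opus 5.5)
Your arguments for $\cwd(G)\le\cwm(G)$ and for $\cwm(G)\le 2\,\cwd(G)$ are fine, and so is the treewidth bound. For the second inequality you push every arc to the lowest union separating its endpoints, split labels into left/right copies, and merge two opposite arc operations across a union into one $\cwedge$. This essentially matches the paper's use of canonical separated irredundant expressions.

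The gap is the separating example: your concrete candidate does not work. Take the triangle with edge $\set{u,v}$ and arcs $(u,w)$, $(w,v)$. Any expression on three vertices ends with a union of one vertex $x$ with the other two, $p$ and $q$. Every connection between $x$ and $p$ or $q$ is created after that union.
\begin{itemize}
\item Since the underlying graph is complete, $x$ needs a label different from those of $p$ and $q$.
\item In $D(\tilde G)$ each vertex relates differently to the other two: $u$ has both directions with $v$ but only an outgoing arc to $w$; $v$ has both directions with $u$ but only an incoming arc from $w$; $w$ has an incoming arc from $u$ and an outgoing arc to $v$. So $p$ and $q$ need different labels too.
\end{itemize}
Hence $\cwd(\tilde G)=\cwm(\tilde G)=3$. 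The case analysis you plan would refute this candidate rather than confirm it.

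The label saving you describe in words requires two non-adjacent vertices that can share a label. The paper uses the partial orientation of $K_{1,2}$ with centre $c$, an edge $\set{c,a}$ and an arc $(c,b)$.

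For $D(\tilde G)$: introduce $c$ with label~$1$ and $a$ with label~$2$, and apply $\cwarc_{2,1}$. Then take the union with $b$ introduced with label~$2$, and apply $\cwarc_{1,2}$, which creates $(c,a)$ and $(c,b)$ simultaneously. Thus $\cwd(\tilde G)=2$.

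In a mixed expression the edge $\set{c,a}$ must be created by a single operation. Consider the moment the later of the two connections at $c$ is created. Then $a$, $b$, $c$ must carry pairwise distinct labels, because the two leaves receive different kinds of connections to $c$ and are not adjacent to each other. Hence $\cwm(\tilde G)=3$.
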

\begin{proof}[Idea]
  Converting a mixed $\ell$-expression for~$G$ into a directed $\ell$-expression for~$D(G)$ is straightforward, as we can replace each edge operation by two opposite arc operations. 
  The inverse requires a careful manipulation of the $\ell$-expression
  for~$D(G)$, to ensure that we can replace pairs of opposite arc
  operations by edge operations.

  The partial orientation~$\tilde{G}$ of the star~$K_{1,2}$, where the center vertex is connected to one leaf by an edge and to the other leaf by an outgoing arc, has~$\cwd(\tilde{G})=2$ and~$\cwm(\tilde{G})=3$.
\end{proof}

In order to generalize neighborhood diversity to mixed graphs without directed cycles, we first need to define the different types of neighbors that occur in a mixed graph. 
Given a vertex~$v$ of a mixed graph~$G$, the \emph{outgoing neighbors} of~$v$, denoted by~\emph{$\outneigh(v)$}, are all vertices~$u$ to which~$v$ has an \emph{outgoing arc}~$(v,u)$. Analogously, the \emph{incoming neighbors} of~$v$, denoted by~\emph{$\inneigh(v)$}, are all vertices~$u$ from which~$v$ has an \emph{incoming arc}~$(u,v)$. The \emph{undirected neighbors} of~$v$, denoted by~\emph{$\uneigh(v)$}, are all vertices~$u$ which are connected to~$v$ by an edge~$\set{u,v}$. 
Thus, the \emph{neighborhood} of a vertex~$v$, denoted by~\emph{$\neigh(v)$}, is the union of its incoming, outgoing, and undirected neighbors. 

The parameter neighborhood diversity is defined for undirected graphs via an equivalence relation on the vertices, where two vertices~$u,v$ are of the same type, denoted~$u\sim_\ndu v$, if they have the same neighbors, i.e., if~$\neigh(u)\setminus\set{v}=\neigh(v)\setminus\set{u}$. 
We generalize this equivalence relation to mixed graphs by further distinguishing between incoming, outgoing, and undirected neighbors. 
Therefore, two vertices~$u,v$ of a mixed graph~$G$ are of the same \emph{(mixed) type}, denoted~$u\sim_\ndm v$, if they have the same in-, out-, and undirected neighbors, i.e., if~$\uneigh(u)\setminus\set{v}=\uneigh(v)\setminus\set{u}$,~$\inneigh(u)=\inneigh(v)$, and~$\outneigh(u)=\outneigh(v)$.
\begin{restatable}[\restateref{lem:ndm-equiv-relation}]{lemma}{NDMEquivRelation}
  \label{lem:ndm-equiv-relation}
  For any mixed graph~$G$ (without directed cycles), the relation~$\sim_\ndm$ is an equivalence relation.
  Further, vertices of the same type induce an (undirected) clique or an independent set.
\end{restatable}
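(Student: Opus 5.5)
Reflexivity and symmetry follow immediately from the definition: for $u=v$ all three conditions are trivial, and each condition is symmetric in $u$ and $v$. The real work is transitivity. We also have to handle the fact that the undirected condition removes the partner vertex, while the directed conditions do not.

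\textbf{Directed conditions force non-adjacency by arcs.} First we note that if $u\sim_\ndm v$ with $u\neq v$, then there is no arc between $u$ and $v$. Suppose $(u,v)\in A(G)$. Then $u\in\inneigh(v)=\inneigh(u)$, so $u$ would have an arc to itself, contradicting simplicity (no loops). The case $(v,u)\in A(G)$ is symmetric. Hence $u$ and $v$ are either joined by an edge or non-adjacent. Within the undirected part, the standard argument for neighborhood diversity then applies.

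\textbf{Transitivity.} Let $u\sim_\ndm v$ and $v\sim_\ndm w$ with $u,v,w$ pairwise distinct; the degenerate cases are trivial. The in- and out-neighborhood conditions are plain set equalities, so they chain directly: $\inneigh(u)=\inneigh(v)=\inneigh(w)$, and likewise for $\outneigh$.

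For the undirected part we must show $\uneigh(u)\setminus\set{w}=\uneigh(w)\setminus\set{u}$. This is the classical transitivity of the neighborhood-diversity relation restricted to edges. We argue it directly.
\begin{itemize}
\item For $x\notin\set{u,v,w}$, we have $x\in\uneigh(u)\iff x\in\uneigh(v)\iff x\in\uneigh(w)$.
\item For $x=v$, we have $v\in\uneigh(u)\iff u\in\uneigh(v)$. Since $u\neq w$, this holds iff $u\in\uneigh(v)\setminus\set{w}=\uneigh(w)\setminus\set{v}$, that is, iff $u\in\uneigh(w)$. Similarly, $v\in\uneigh(w)\iff w\in\uneigh(u)$. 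Symmetrically, $v\in\uneigh(w)\iff w\in\uneigh(v)\iff w\in\uneigh(u)$, using $u\sim v$ and $w\neq u$.
\end{itemize}
Combining these equivalences: $u\in\uneigh(w)$ iff $v\in\uneigh(u)$ iff $w\in\uneigh(u)$... wait, rather directly, $v\in\uneigh(u)\iff u\in\uneigh(w)\iff w\in\uneigh(u)\iff v\in\uneigh(w)$. This yields $v\in\uneigh(u)\iff v\in\uneigh(w)$, which is the remaining membership to check. Hence the undirected condition holds.

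\textbf{Cliques and independent sets.} Take a type class $C$ with at least three vertices; classes of size at most two are trivially a clique or an independent set. By the first step, vertices of $C$ are joined only by edges or not at all. Take distinct $u,v,w\in C$. From $u\sim w$ we get $v\in\uneigh(u)\iff v\in\uneigh(w)$. So whether $v$ is adjacent to the others does not depend on which other vertex we pick. Applying this with every choice of the roles shows that all pairs in $C$ have the same adjacency status. Hence $C$ is an undirected clique or an independent set.

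The main care point is the bookkeeping in the undirected transitivity step, in particular the membership of the middle vertex $v$. The observation that equal in- and out-neighborhoods rule out arcs inside a class is what makes the second claim hold with \emph{undirected} cliques.
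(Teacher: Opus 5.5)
Your proof is correct and follows essentially the same route as the paper's: the in- and out-neighborhood equalities chain directly, arcs inside a class are ruled out by the no-loop argument, and the undirected condition reduces to chasing the adjacency of the middle vertex $v$ via $v\in\uneigh(u)\iff u\in\uneigh(w)\iff w\in\uneigh(u)\iff v\in\uneigh(w)$. Your explicit pair-chaining argument for the clique/independent-set claim spells out what the paper leaves implicit; in a final write-up, delete the stray ``wait'' and the redundant ``Symmetrically'' sentence.
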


We call the partition of the vertices into equivalence classes
under~$\sim_\ndm$ a \emph{mixed neighborhood partition}. The
\emph{mixed neighborhood diversity} of a mixed graph~$G$, denoted
\emph{$\ndm(G)$}, is the number of equivalence classes
under~$\sim_\ndm$.
Note that this generalizes the definition of neighborhood diversity for undirected graphs. Furthermore, for mixed graphs without directed cycles, this is also a generalization of the definition of neighborhood diversity for directed graphs by Fernau et al.~\cite{FernauFMPN25ParaPathPartitions}, as in directed graphs without directed cycles there are no opposite arcs.
Note further that a mixed neighborhood partition can be obtained easily: it suffices to
check, for each pair of vertices, whether they are of the same type.

As is the case for mixed cliquewidth and cliquewidth, the mixed
neighborhood diversity can differ arbitrarily from the neighborhood
diversity of the underlying undirected graph since each vertex in a
directed path has to be of a different type, as we show now.

\begin{lemma}
  \label{lem:ndm-dir-path}
  Given a mixed graph~$G$ (without directed cycles) that contains a directed path of length~$\ell$, it holds that~$\ndm(G)\geq \ell+1$, with each vertex on the path being of a different type.
\end{lemma}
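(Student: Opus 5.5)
Let the directed path be $v_0,v_1,\dots,v_\ell$, with arcs $(v_{i-1},v_i)\in A(G)$ for $i\in[\ell]$. The path has $\ell+1$ vertices, so it suffices to show that $v_i\not\sim_\ndm v_j$ for all $0\le i<j\le\ell$. Then the $\ell+1$ path vertices lie in pairwise distinct equivalence classes, and $\ndm(G)\ge\ell+1$ follows directly from the definition of $\ndm$ as the number of classes.

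Fix $i<j$ and suppose for contradiction that $v_i\sim_\ndm v_j$. The main tool is the condition $\outneigh(v_i)=\outneigh(v_j)$ (the incoming condition could be used symmetrically). Since $i<j\le\ell$, the vertex $v_i$ has the outgoing arc $(v_i,v_{i+1})$, so $v_{i+1}\in\outneigh(v_i)=\outneigh(v_j)$ and $(v_j,v_{i+1})\in A(G)$. Now I split into two cases. If $j=i+1$, this arc is a loop $(v_j,v_j)$, which contradicts simplicity. (Equivalently, $v_j\in\outneigh(v_i)$ would force $v_j\in\outneigh(v_j)$.) If $j>i+1$, then the subpath $v_{i+1},\dots,v_j$ together with the arc $(v_j,v_{i+1})$ forms a directed cycle, contradicting the assumption that $G$ has no directed cycles.

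Both cases give a contradiction, so all path vertices have distinct types. I do not expect a real obstacle here. The only care needed is to handle the adjacent case $j=i+1$ separately: there the contradiction comes from the looplessness required of simple mixed graphs rather than from acyclicity. I also note that the definition of $\sim_\ndm$ does not remove $u,v$ from $\outneigh$, unlike the undirected condition, and the argument uses exactly this.
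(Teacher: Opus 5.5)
Your proof is correct and is essentially the paper's argument in mirror image: the paper uses $v_{j-1}\in\inneigh(v_j)=\inneigh(v_i)$ to obtain the arc $(v_{j-1},v_i)$, which closes a directed cycle with the subpath from $v_i$ to $v_{j-1}$, whereas you use out-neighborhoods at $v_i$ instead. Your separate treatment of $j=i+1$ via looplessness corresponds to the paper's remark (from Lemma~\ref{lem:ndm-equiv-relation}) that no arc can join two vertices of the same type.
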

\begin{proof}
  We know from \cref{lem:ndm-equiv-relation} that there cannot be an arc between two vertices of the same type. 
  Let~$\angles{v_0,v_1,\dots,v_\ell}$ be a directed path of length~$\ell$ in~$G$.
  Suppose that there are two vertices~$v_i,v_j$ with~$i<j$ that are of the same type. 
  Since~$v_{j-1}\in \inneigh(v_j)$, it follows that~$v_{j-1}\in \inneigh(v_i)$. However, as~$i\leq j-1$, it follows that there exists a directed path from~$v_i$ to~$v_{j-1}$ in~$G$. Together with the arc~$(v_{j-1},v_i)$, this yields a directed cycle, a contradiction to~$G$ containing no directed cycles.
\end{proof}

\begin{proposition}\label{prop:ndm-ndu}
  For any mixed graph~$G$ (without directed cycles), $\ndu(G)\leq \ndm(G)$.
  Furthermore,~$\ndu$ does not bound~$\ndm$.
\end{proposition}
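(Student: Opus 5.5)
The first claim, $\ndu(G)\leq\ndm(G)$, follows by showing that $\sim_\ndm$ refines $\sim_\ndu$ on the underlying undirected graph. Suppose $u\sim_\ndm v$. Then $\inneigh(u)=\inneigh(v)$ and $\outneigh(u)=\outneigh(v)$. By \cref{lem:ndm-equiv-relation}, there is no arc between $u$ and $v$, so neither of these sets contains $u$ or $v$. Together with $\uneigh(u)\setminus\set{v}=\uneigh(v)\setminus\set{u}$, this gives $\neigh(u)\setminus\set{v}=\neigh(v)\setminus\set{u}$, and $\neigh$ coincides with the neighborhood in the underlying undirected graph. Hence $u\sim_\ndu v$. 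Every mixed type is therefore contained in an undirected type, so the number of undirected types is at most the number of mixed types.

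For the second claim, I will construct a family $(G_\ell)_{\ell\geq1}$ with $\ndu(G_\ell)$ bounded by a constant and $\ndm(G_\ell)\geq\ell+1$. Take the complete graph $K_{\ell+1}$ on vertices $v_0,\dots,v_\ell$. Make $\angles{v_0,\dots,v_\ell}$ a directed path by orienting $(v_i,v_{i+1})$ as arcs, and keep every other pair as an undirected edge. There is no directed cycle, since all arcs go from lower to higher index. The underlying undirected graph is $K_{\ell+1}$, whose vertices are pairwise of the same type, so $\ndu(G_\ell)=1$. By \cref{lem:ndm-dir-path}, the directed path of length $\ell$ forces $\ndm(G_\ell)\geq\ell+1$. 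Thus no function of $\ndu$ bounds $\ndm$.

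The only delicate point is in the first part: excluding $u,v$ from the in- and out-neighborhoods, which the definition of $\sim_\ndm$ does not do explicitly. This is handled by invoking \cref{lem:ndm-equiv-relation}. If one prefers a direct argument: an arc $(u,v)$ with $u\sim_\ndm v$ would give $u\in\inneigh(v)=\inneigh(u)$, which is impossible in a simple graph without loops.
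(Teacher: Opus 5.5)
Your proposal is correct and follows essentially the same route as the paper: you show that $\sim_\ndm$ refines $\sim_\ndu$ because vertices of the same mixed type cannot be joined by an arc, and you separate the parameters with a partial orientation of a complete graph containing a directed Hamiltonian path, where $\ndu=1$ and \cref{lem:ndm-dir-path} forces $\ndm$ to be unbounded. Your $K_{\ell+1}$ indexing is only a cosmetic difference from the paper's $K_\ell$. Your direct argument excluding an arc between $u$ and $v$ (it would force the loop $u\in\inneigh(u)$) is a sound alternative to citing \cref{lem:ndm-equiv-relation}.
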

\begin{proof}
  It follows directly from the definitions that two vertices of the same type w.r.t.~$\ndm$ must also be of the same type w.r.t.~$\ndu$. 
  Indeed, if~$u\sim_\ndm v$, it holds that \[{\neigh(u)\setminus\set{v}=\inneigh(u)\cup\outneigh(u)\cup\uneigh(u)\setminus\set{v}=\inneigh(v)\cup\outneigh(v)\cup\uneigh(v)\setminus\set{u}=\neigh(v)\setminus\set{u}}\] (as there cannot be any arc between~$u$ and~$v$) and therefore~$u\sim_{\ndu}v$.
  
  The inverse does not hold, as two vertices of the same type w.r.t.~$\ndu$ may differ in their in- or out-neighbors, as the following family~$(G_\ell)_{\ell\geq 1}$ of graphs shows. The graph~$G_\ell$ is a partial orientation of the complete graph~$K_\ell$ obtained by orienting~$\ell-1$ edges such that the resulting mixed graph contains a directed Hamiltonian path. It holds that~$\ndu(G_\ell)=1$ but, due to \cref{lem:ndm-dir-path}, $\ndm(G_\ell)\geq\ell$.
\end{proof}

Lampis~\cite{Lampis12NeighborhoodDiversity} showed that neighborhood diversity is bounded by vertex cover, as the type of vertices not contained in the cover is determined uniquely by their neighbors (in the cover), resulting in a bound of~$\vc+2^{\vc}$. Since mixed neighborhood diversity further distinguishes between in-, out-, and undirected neighbors, we obtain the following bound.

\begin{proposition}
  \label{prop:ndm-vc}
  For any mixed graph~$G$ (without directed cycles), $\ndm(G)\leq \vc(G)+4^{\vc(G)}$.
\end{proposition}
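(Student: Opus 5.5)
We fix a minimum vertex cover~$C$ of the underlying undirected graph of~$G$, so $\abs{C}=\vc(G)$ and $I=V(G)\setminus C$ is an independent set. Each vertex of~$C$ is placed in its own class, which contributes at most~$\vc(G)$ classes. The remaining task is to show that the vertices of~$I$ fall into at most~$4^{\vc(G)}$ types under~$\sim_\ndm$.

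For $v\in I$, all neighbors of~$v$ lie in~$C$, because~$I$ is independent in the underlying undirected graph. Since~$G$ is simple and contains no directed cycles, there are no opposite arcs. An edge parallel to an arc is also excluded, as simplicity rules out parallel connections between the same pair. Hence every $w\in C$ stands in exactly one of four relations to~$v$: it is not adjacent to~$v$, it is an undirected neighbor, it is an incoming neighbor, or it is an outgoing neighbor. We define the signature $\sigma_v\colon C\to\set{0,\mathrm{u},-,+}$ by recording this relation for each $w\in C$. There are at most $4^{\abs{C}}$ possible signatures.

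Next we check that two vertices $u,v\in I$ with $\sigma_u=\sigma_v$ satisfy $u\sim_\ndm v$. Their undirected, incoming and outgoing neighborhoods are all contained in~$C$ and are determined by the signature, so $\inneigh(u)=\inneigh(v)$ and $\outneigh(u)=\outneigh(v)$. Moreover $\uneigh(u)=\uneigh(v)$, and $u\notin\uneigh(v)$ because $u,v\in I$. Therefore $\uneigh(u)\setminus\set{v}=\uneigh(v)\setminus\set{u}$.

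Each equivalence class under~$\sim_\ndm$ that meets~$I$ is thus a union of signature classes. Counting the singleton classes of~$C$ separately (they may merge with other classes, which only lowers the count), we obtain $\ndm(G)\le\vc(G)+4^{\vc(G)}$.

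The only delicate step is the claim that there are exactly four relations rather than more. This needs the simplicity assumption, which forbids an edge and an arc between the same pair, together with acyclicity, which forbids opposite arcs.
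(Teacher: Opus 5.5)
Your proof is correct and takes the same approach as the paper: fix a minimum vertex cover $C$, allow at most $\abs{C}$ types inside $C$, and bound the types in the independent set $V(G)\setminus C$ by the $4^{\abs{C}}$ possible not-adjacent/undirected/incoming/outgoing patterns toward $C$. You also spell out two points the paper leaves implicit: that vertices with equal patterns are indeed $\sim_\ndm$-equivalent, and that simplicity together with the absence of opposite arcs is what limits each pair to four relations.
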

\begin{proof}
  Let~$C$ be a vertex cover of~$G$ of size~$\vc(G)$. Since each vertex in~$V(G)\setminus C$ only contains neighbors in~$C$, there are at most~$4^{\vc(G)}$ different types of vertices in~$V(G)\setminus C$, as each vertex in~$C$ can be an in-, out-, or undirected neighbor, or not a neighbor at all. Further, there can be at most~$\vc(G)$ different types of vertices in~$C$, as~$C$ contains~$\vc(G)$ vertices. Thus, the total number of types is bounded by~$\vc(G)+4^{\vc(G)}$.
\end{proof}

If a parameter does not bound~$\ndu$ ($\cwu$) on undirected graphs (or vice versa), it also does not bound~$\ndm$ ($\cwm$) (or vice versa), as it holds that~$\ndm=\ndu$ ($\cwm=\cwu$) on undirected graphs.
Lampis~\cite{Lampis12NeighborhoodDiversity} showed that $\ndu$ bounds $\cw$ and that $\tw$ (and therefore $\cw$) does not bound $\ndu$ ($\ndu$ is even incomparable to $\tw$). It follows from this that $\cwm$ does not bound $\ndm$. Using a proof analogous to Lampis, we further obtain the following.
\begin{restatable}[\restateref{prop:ndm-cwm}]{proposition}{NDMtoCWM}
  \label{prop:ndm-cwm}
  For any mixed graph~$G$ (without directed cycles), $\ndm(G)+1\geq \cwm(G)$. 
  Further,~$\cwm$ does not bound~$\ndm$.
\end{restatable}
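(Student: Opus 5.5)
Following Lampis's argument for undirected graphs, we build a mixed $(\ndm(G)+1)$-expression for $G$ directly from a mixed neighborhood partition $V_1,\dots,V_d$ with $d=\ndm(G)$. By \cref{lem:ndm-equiv-relation}, each class $V_i$ induces either an undirected clique or an independent set. Moreover, two vertices of the same type have no arc between them; this is the fact already used in the proof of \cref{lem:ndm-dir-path}.

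The first step is to show that the relationship between two distinct classes $V_i,V_j$ is homogeneous. Take $u\in V_i$ and $w\in V_j$. If $w\in\outneigh(u)$, then $w\in\outneigh(u')$ for every $u'\in V_i$, because $u\sim_\ndm u'$. Applying the same argument from the side of $V_j$, we get $u'\in\inneigh(w')$ for all $w'\in V_j$. The same holds for incoming arcs and for edges; for edges the removal of $\set{v}$ in the definition is harmless since $u\neq w$ lie in different classes. Hence between $V_i$ and $V_j$ there are either no connections, or all edges, or all arcs from $V_i$ to $V_j$, or all arcs from $V_j$ to $V_i$. Simplicity rules out mixing an edge and an arc on the same pair, so these four cases are exhaustive.

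The second step is the construction, using labels $1,\dots,d$ and one spare label $d+1$.
\begin{itemize}
  \item For each $i$, build $V_i$ with label $i$. If $V_i$ is independent, take the disjoint union of its vertices, all labelled $i$.
  \item If $V_i$ is a clique, add its vertices one at a time. Create the new vertex with label $d+1$, take the union with the current part, apply $\cwedge_{i,d+1}$, then apply $\cwrelabel_{d+1\to i}$. This uses labels $i$ and $d+1$ only.
  \item Take the disjoint union of all $d$ labelled parts.
  \item For each pair of classes, apply $\cwedge_{i,j}$, $\cwarc_{i,j}$, or $\cwarc_{j,i}$ according to the first step, or do nothing if they are unconnected.
\end{itemize}
Because the relations between classes are homogeneous, these operations create exactly the edges and arcs of $G$ between classes and nothing else. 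This gives $\cwm(G)\le d+1$.

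For the second part, there is nothing new to prove. On undirected graphs $\ndm=\ndu$ and $\cwm=\cwu$. By Lampis, $\cw$ does not bound $\ndu$; for example, paths have bounded cliquewidth but unbounded neighborhood diversity. The same family of graphs therefore shows that $\cwm$ does not bound $\ndm$.

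The only real subtlety is the homogeneity claim for arcs. We must check that an arc $(u,w)$ forces every arc $(u',w')$ in the same direction, never an edge or the reverse arc. This follows from the in- and out-neighborhood conditions in $\sim_\ndm$ together with the simplicity of $G$.
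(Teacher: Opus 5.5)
Your proof is correct and follows the paper's argument. You build the same $(\ndm(G)+1)$-expression type by type, using one auxiliary label for the clique types, then take the union and apply one edge or arc operation per pair of types. For the second part you use the same reduction to Lampis's undirected result via $\ndm=\ndu$ and $\cwm=\cwu$ on undirected graphs.

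Your explicit check that the connections between two types are homogeneous is a step the paper leaves implicit. One small correction there: ruling out arcs in both directions between two types comes from acyclicity (opposite arcs form a directed $2$-cycle), not from simplicity. The construction does not actually need that case excluded anyway.
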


Regarding the relationship between $\cwm$ and $\ndu$, we already know that $\cwm$ does not bound $\ndu$ (due to $\cwu$ not bounding $\ndu$). However, while $\ndu$ bounds $\cwu$, it does not bound $\cwm$.
We show this by reusing the construction that we used to show that $\cwu$ does not bound $\cwm$ (\cref{prop:cwm-cwu}), as detailed in the following.

\begin{restatable}[\restateref{prop:cwm-ndu-incomp}]{proposition}{CWMtoNDUIncomp}
  \label{prop:cwm-ndu-incomp}
  The parameters $\cwm$ and $\ndu$ are incomparable.
\end{restatable}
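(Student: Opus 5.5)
Incomparability requires two separations: $\cwm$ does not bound $\ndu$, and $\ndu$ does not bound $\cwm$.

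\textbf{$\cwm$ does not bound $\ndu$.} This already follows from the remarks before the statement. On undirected graphs we have $\cwm=\cwu$ and $\ndm=\ndu$, and $\cwu$ (even $\tw$) does not bound $\ndu$ by Lampis~\cite{Lampis12NeighborhoodDiversity}. Concretely, take undirected paths $P_n$. Their cliquewidth is at most $3$, so $\cwm(P_n)\le 3$. On the other hand, $\ndu(P_n)$ grows linearly in $n$: in a path on at least four vertices, two distinct vertices of the same type would have to share their neighborhoods up to each other, which happens only for very few pairs. I would state this briefly and cite Lampis.

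\textbf{$\ndu$ does not bound $\cwm$.} Here I reuse the family $(G_\ell)_{\ell\ge1}$ from the proof of \cref{prop:cwm-cwu}. Each $G_\ell$ is an acyclic orientation of the $(\ell\times\ell)$-grid, completed by edges between all pairs not joined by an arc. The underlying undirected graph of $G_\ell$ is the complete graph $K_{\ell^2}$. All vertices of a complete graph are of the same type, since $\neigh(u)\setminus\set{v}=V\setminus\set{u,v}=\neigh(v)\setminus\set{u}$. Hence $\ndu(G_\ell)=1$ for every $\ell$. The proof of \cref{prop:cwm-cwu} already established
\[
\cwm(G_\ell)\ge\cwm(\ag(G_\ell))\ge\cwu(\ag(G_\ell))=\ell+1 .
\]
It uses that dropping edge operations from a mixed expression yields an expression for $\ag(G_\ell)$, and that the $(\ell\times\ell)$-grid has cliquewidth $\ell+1$ by Golumbic and Rotics. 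So $\cwm$ is unbounded on a family with $\ndu\equiv1$, and no computable $f$ can satisfy $\cwm\le f(\ndu)$.

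The only points needing care are that $G_\ell$ has no directed cycles, and that $\ag(G_\ell)$ is exactly the oriented grid. The first holds by choosing a concrete orientation, for example all arcs pointing right or down, which is acyclic because the coordinate sum strictly increases along every arc. The second holds because the added edges are removed when passing to $\ag(G_\ell)$. Combining the two directions yields incomparability.
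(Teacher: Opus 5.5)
Your proposal is correct and follows the same route as the paper. The paper inherits ``$\cwm$ does not bound $\ndu$'' from Lampis's undirected result, using that $\cwm=\cwu$ on undirected graphs. It then reuses the family $G_\ell$ from \cref{prop:cwm-cwu}: the underlying graph of $G_\ell$ is complete, so $\ndu(G_\ell)=1$, while $\cwm(G_\ell)\geq \ell+1$.
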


\subparagraph{Transitive Closure}
Recall that an arc~$(u,v)$ is transitive if there exists a directed path from~$u$ to~$v$ using other arcs. The transitive closure~$G^+$ of a mixed graph~$G$ is the mixed graph obtained by adding every transitive arc to~$G$ (and removing any resulting parallel edges). 
As transitively closed graphs are a subset of mixed graphs, all bounds between parameters on mixed graphs also apply between their transitive closure counterparts. 
Further, if a parameter~$\alpha$ does not bound a parameter~$\beta$ on undirected graphs, then~$\alpha^+$ does not bound~$\beta^+$, as it holds that~$\alpha^+=\alpha$ and~$\beta^+=\beta$ on undirected graphs.
Thus, it remains to analyze the relationship between the parameters and their transitive closure counterparts.

All parameters considered in this paper, except for neighborhood diversity and cliquewidth, have the well-known property that they bound their value on any subgraph. Since the underlying graph of the transitive closure of a mixed graph is a subgraph of the underlying graph of the original graph, it follows that these parameters are bounded by their transitive closure counterparts. 

\begin{restatable}[\restateref{prop:param-tc-upper-bound}]{proposition}{ParamTCUpperBound}
  \label{prop:param-tc-upper-bound}
  For every mixed graph~$G$ (without directed cycles) and parameter $\alpha\in\set{\vc,\td,\fvs,\pw,\tw}$ it holds that $\alpha(G)\leq \alpha^+(G)$.
  Furthermore, these parameters do not bound their transitive closure counterparts.
\end{restatable}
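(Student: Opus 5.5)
The proof has two parts: the inequality, which comes from monotonicity under subgraphs, and the non-boundedness, which comes from an explicit family.

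\textbf{The inequality.} The underlying undirected graph of~$G$ is a subgraph of the underlying undirected graph of~$G^+$. The transitive closure only adds arcs, and removing an edge parallel to a new transitive arc does not change the underlying undirected graph, since that pair stays adjacent. Each of $\vc$, $\td$, $\fvs$, $\pw$, and $\tw$ is monotone under taking subgraphs:
\begin{itemize}
  \item a vertex cover restricts to a vertex cover;
  \item an elimination forest restricts to one for the subgraph (we may also use that $\td$ of a subgraph is at most $\td$ of the graph);
  \item a feedback vertex set stays one;
  \item path and tree decompositions remain valid after deleting edges.
\end{itemize}
Hence $\alpha(G)\le\alpha(G^+)=\alpha^+(G)$ for every $\alpha\in\set{\vc,\td,\fvs,\pw,\tw}$.

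\textbf{Non-boundedness.} We use the directed path $P_n=\angles{v_1,\dots,v_n}$ with arcs $(v_i,v_{i+1})$. Its underlying graph is a path, so $\vc(P_n)\le\ceil{n/2}$. More importantly, every parameter in the list except $\vc$ and $\td$ is constant on $P_n$: $\fvs=0$, $\pw\le1$, and $\tw\le1$. The transitive closure $P_n^+$ is a transitive tournament, so its underlying undirected graph is $K_n$. This gives $\tw(K_n)=\pw(K_n)=n-1$, $\fvs(K_n)=n-2$, $\vc(K_n)=n-1$, and $\td(K_n)=n$. Thus $\fvs$, $\pw$, and $\tw$ are constant on $P_n$ while their closure counterparts are unbounded.

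For $\vc$ and $\td$ the path alone is not enough, since $\vc(P_n)$ grows linearly and $\td(P_n)=\Theta(\log n)$. Instead we use the star-like graph $S_n$: a center $c$ with arcs $(u_i,c)$ and $(c,w_i)$ for $i\in[n]$. Its underlying graph is a star, so $\vc(S_n)=1$ and $\td(S_n)\le2$. The closure $S_n^+$ adds all arcs $(u_i,w_j)$, so its underlying graph contains $K_{n,n}$ and has vertex cover at least $n$. Treedepth is also unbounded, since $\td\ge\tw+1$ and $\tw(K_{n,n})=n$. The same family works for the other three parameters as well, because $S_n$ is a tree ($\fvs=0$, $\pw\le1$, $\tw=1$) while $\tw(S_n^+)\ge n$ and $\fvs(S_n^+)\ge n-1$. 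I would therefore use $S_n$ uniformly for all five parameters.

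The step needing most care is checking that $S_n$ has no directed cycles and that $S_n^+$ contains exactly the claimed arcs. Both are immediate: every directed path in $S_n$ has the form $u_i\to c\to w_j$. The remaining work is quoting the standard values of these parameters on complete bipartite graphs.
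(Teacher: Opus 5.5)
Your proposal is correct and matches the paper's proof. The inequality comes from the underlying graph of $G$ being a subgraph of that of $G^+$, together with subgraph-monotonicity. The separation uses exactly the paper's oriented star (arcs $(u_i,c)$ and $(c,w_i)$), whose closure contains $K_{n,n}$; the paper just shortcuts your per-parameter checks by noting that $\vc=1$ bounds all five parameters on the star and $\tw\geq n$ lower-bounds all five on the closure, so your directed-path detour is correct but unnecessary.
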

\begin{proof}[Idea]
  The second part can be shown using a family of oriented stars, where half of the edges are oriented towards the center vertex and the other half are oriented away from the center vertex.
\end{proof}

However, for neighborhood diversity and cliquewidth (and their mixed variants), the situation is different. Contrary to the other parameters, these parameters are not bounded by their transitive closure counterparts, as shown in the following.

\begin{proposition}\label{prop:nd-cw-tc-not-upper-bound}
  The parameters~$\ndm$,~$\ndu$,~$\cwm$, and~$\cwu$ are not bounded by their transitive closure counterparts.
\end{proposition}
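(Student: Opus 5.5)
The plan is to exhibit one family $(G_\ell)_{\ell\geq 1}$ of mixed graphs (without directed cycles) that handles all four parameters simultaneously. We want $\ndm(G_\ell^+)$, and hence $\ndu$, $\cwm$, $\cwu$ of $G_\ell^+$, to be bounded by a constant, while $\cwu(G_\ell)$, and hence all four parameters on $G_\ell$, grows with $\ell$. The idea is to use transitive arcs to ``fill in'' a complicated bipartite structure until it becomes a complete bipartite one.

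\textbf{Construction.} Take the $(\ell\times\ell)$-grid graph $H_\ell$. It is bipartite, so let its colour classes be $S$ and $T$, and orient every grid edge from $S$ to $T$. Add one new vertex $m$ together with the arcs $(s,m)$ for all $s\in S$ and $(m,t)$ for all $t\in T$. The resulting graph $G_\ell$ has no edges, and every arc goes $S\to T$, $S\to m$ or $m\to T$, so it has no directed cycles. For every $s\in S$ and $t\in T$ the path $s\to m\to t$ exists, so the transitive closure $G_\ell^+$ contains all arcs $S\to T$. No other arcs are added, since the maximum path length is~$2$.

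\textbf{$G_\ell^+$ is simple.} In $G_\ell^+$ all vertices of $S$ have identical in-, out- and undirected neighbourhoods ($\emptyset$, $T\cup\{m\}$ and $\emptyset$ respectively), and likewise all vertices of $T$ do. Hence $\ndm(G_\ell^+)\leq 3$. By \cref{prop:ndm-ndu} this gives $\ndu(G_\ell^+)\leq 3$. By \cref{prop:ndm-cwm}, and the analogous bound of Lampis for the undirected parameters, it also gives $\cwm(G_\ell^+)\leq 4$ and $\cwu(G_\ell^+)\leq 4$.

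\textbf{$G_\ell$ is complex.} The underlying undirected graph of $G_\ell$ contains the grid $H_\ell$ as an induced subgraph (delete $m$). Cliquewidth is monotone under induced subgraphs, because deleting vertex-creation operations from an expression preserves validity. So by Golumbic and Rotics, $\cwu(G_\ell)\geq \ell+1$. Then $\cwm(G_\ell)\geq\cwu(G_\ell)$ by \cref{prop:cwm-cwu}. For neighbourhood diversity we use $\ndu(G_\ell)\geq \cwu(G_\ell)-1\geq \ell$, which is Lampis' undirected analogue of \cref{prop:ndm-cwm}. Finally $\ndm(G_\ell)\geq\ndu(G_\ell)$ by \cref{prop:ndm-ndu}. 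Thus all four parameters are unbounded on $G_\ell$ but bounded by $4$ on $G_\ell^+$, so none of them is bounded by its transitive closure counterpart.

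The only delicate step is verifying the transitive closure exactly: that it adds all $S$--$T$ pairs and nothing else, and that it creates no parallel edges to remove. This is immediate here because $G_\ell$ has no edges and is layered $S\to m\to T$. The monotonicity of cliquewidth under induced subgraphs is standard and should just be cited.
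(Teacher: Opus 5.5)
Your proof is correct, and it takes a genuinely different route from the paper. The paper uses two separate families. For cliquewidth, it takes a partial orientation of the $(\ell\times\ell)$-grid containing a directed Hamiltonian path, so that the closure is an acyclic tournament with $\cwm^+\le 2$ (\cref{claim:cwm-dir-comp-graph}). For neighborhood diversity, it uses an unrelated tripartite construction and counts types directly: $\ndm=\ndu=2\ell+4$ versus $\ndm^+=\ndu^+=6$. The grid family alone cannot handle $\ndm$, because by \cref{lem:ndm-dir-path} every vertex of the resulting tournament has its own mixed type.

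Your apex vertex $m$ avoids exactly this problem. It fills in all of $S\times T$ without creating long directed paths, so the closure has only three mixed types. The hierarchy (\cref{prop:ndm-ndu,prop:ndm-cwm,prop:cwm-cwu}) then caps all four closure parameters at~$4$, while the induced grid forces all four original parameters to be at least~$\ell$. In effect you prove the single stronger statement that $\ndm^+$ does not even bound $\cwu$, from which all four claims follow. Your witnesses also contain no undirected edges, so the separation already holds for directed acyclic graphs.

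In exchange, the paper's two-family approach yields exact parameter values. It also handles neighborhood diversity by direct type counting, without relying on the Golumbic--Rotics lower bound for grids.
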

\begin{proof}
  We first construct a family~$(G_\ell)_{\ell\geq 1}$ of grid graphs where~$2\geq\cwm^+(G_\ell)\geq\cwu^+(G_\ell)$ but~$\cwm(G_\ell)\geq\cwu(G_\ell)=\ell+1$.
  The graph~$G_\ell$ is a partial orientation of the $(\ell\times \ell)$-grid graph where~$\ell^2-1$ edges are oriented such that~$G_\ell$ contains a directed Hamiltonian path, i.e., a directed path of length~$\ell^2-1$ containing all vertices. Golumbic and Rotics~\cite{GolumbicRotics00CliquewidthPerfectGraphClasses} showed that the cliquewidth of $(\ell\times \ell)$-grid graphs is~$\ell+1$. Together with \cref{prop:cwm-cwu} we obtain that~$\cwm(G_\ell)\geq\cwu(G_\ell)=\ell+1$. 
  As~$G_\ell$ contains a directed Hamiltonian path, it follows that~$G_\ell^+$ is the \emph{acylic tournament graph}, the directed graph with exactly one arc between each pair of vertices (and without directed cycles).
  Using the following claim, we obtain that~$2\geq\cwm^+(G_\ell)\geq \cwu^+(G_\ell)$.

  \begin{restatable}[\restateref{claim:cwm-dir-comp-graph}]{claim}{CWMDirCompGraph}
    The mixed cliquewidth of an acyclic tournament graph is (at most)~$2$.
    \label{claim:cwm-dir-comp-graph}
  \end{restatable}

  For neighborhood diversity, we construct a separate family~$(G'_\ell)_{\ell\geq 1}$ based on tripartite graphs. 
  The graph~$G'_\ell$ consists of three independent sets~$\set{u_1,\dots,u_\ell}$, $\set{v_1,\dots,v_\ell}$, and $\set{w_1,\dots,w_\ell}$. We add arcs~$(u_i,v_j)$ and~$(v_i,w_j)$ for~$i,j\in[\ell]$, such that every~$u$ has an outgoing arc to every~$v$ and every~$v$ has an outgoing arc to every~$w$. 
  To ensure that vertices of different sets are of different types, we further add three vertices~$u^*$,~$v^*$, and~$w^*$. We then add edges~$\set{u^*,u_i}$, $\set{v^*,v_i}$, and $\set{w^*,w_i}$ for all~$i\in[\ell]$. 
  Lastly, we add arcs~$(u_i,w_i)$ for~$i\in[\ell]$. 
  Thus, each vertex~$u_i$ has a distinct type, as does every vertex~$w_i$. Meanwhile, all vertices~$v_i$ are of the same type. Further, the vertices~$u^*$,~$v^*$, and~$w^*$ have distinct types.
  This results in~$\ndm(G'_\ell)=\ndu(G'_\ell)=2\ell+4$. However, in the transitive closure, each vertex~$u_i$ has an outgoing arc to every vertex~$w_j$ and therefore all~$u_i$ are of the same type and all~$w_i$ are of the same type, resulting in~$\ndm^+(G'_\ell)=\ndu^+(G'_\ell)=6$.
\end{proof}

In fact, both mixed neighborhood diversity and mixed cliquewidth bound their transitive closure counterparts, as shown in the following.

\begin{restatable}[\restateref{prop:ndm/cwm-upper-bound-tc}]{proposition}{NDMCWMUpperBoundTC}
  \label{prop:ndm/cwm-upper-bound-tc}
  For any mixed graph~$G$ (without directed cycles), it holds that $\ndm^+(G)\leq \ndm(G)$ and $\cwm^+(G)\leq 4^{\cwm(G)}\cdot\cwm(G)$.
\end{restatable}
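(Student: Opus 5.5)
For the bound $\ndm^+(G)\leq\ndm(G)$, I would show that any two vertices of the same mixed type in $G$ still have the same mixed type in $G^+$. The partition of $G^+$ into types is then a coarsening of the one for $G$. Let $u\sim_\ndm v$ in $G$. Then $\inneigh(u)=\inneigh(v)$ and $\outneigh(u)=\outneigh(v)$, and by \cref{lem:ndm-equiv-relation} there is no arc between $u$ and $v$.

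In $G^+$, a vertex $w$ is an in-neighbor of $u$ exactly when $G$ has a directed path from $w$ to $u$ of length at least one. The last arc of such a path enters $u$ from some $x\in\inneigh_G(u)=\inneigh_G(v)$, so the same path prefix extended to $v$ reaches $v$. The only possible problem is $w=v$. That would mean a path $v\to\dots\to x\to u$ with $x\in\inneigh_G(v)$, which closes a directed cycle $v\to\dots\to x\to v$, a contradiction. Outgoing neighbors are symmetric. Undirected neighbors in $G^+$ are the $G$-edges not parallel to a transitive arc. An edge $\{u,w\}$ is removed exactly when $w$ becomes an in- or out-neighbor of $u$ in $G^+$. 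These sets agree for $u$ and $v$, and $\uneigh_G(u)\setminus\{v\}=\uneigh_G(v)\setminus\{u\}$. Finally, the edge $\{u,v\}$ itself is never parallel to an arc, since that would again force a cycle. Together these give $u\sim_\ndm v$ in $G^+$.

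For $\cwm^+(G)\leq 4^{\cwm(G)}\cdot\cwm(G)$, I would convert a mixed $\ell$-expression for $G$ into one for $G^+$ by refining labels. Each vertex gets a new label $(i,S,T)$, where $i$ is its old label. $S\subseteq[\ell]$ records the set of current labels $j$ such that some vertex currently labelled $j$ is reachable from it by a directed path in the graph built so far, and $T$ is the analogous set for reaching it. That gives $\ell\cdot 2^\ell\cdot 2^\ell=4^\ell\ell$ labels.

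The key invariant is that, at every node of the expression, the constructed labelled graph equals the transitive closure of the corresponding subgraph of $G$ (with edges parallel to arcs dropped), and that the $S,T$ components are correct. Since all vertices with the same label behave identically under later operations, reachability between a vertex and a label class is determined by these sets. Creation and $\cwunion$ are direct. A relabel $\cwrelabel_{i\to j}$ becomes relabels that rename $i$ to $j$ inside every $S$ and $T$. An arc operation $\cwarc_{i,j}$ creates new paths $x\to\dots\to a\to b\to\dots\to y$ with $a$ labelled $i$ and $b$ labelled $j$. So I add arcs from every label class with $i\in S$ (plus class $i$ itself) to every class with $j\in T$ (plus class $j$), and then update the $S$ and $T$ sets of all affected classes by relabelling. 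An edge operation $\cwedge_{i,j}$ must add edges only between pairs not already joined by an arc. I would try to decide this from the labels, since $a$ (label $i$) and $b$ (label $j$) are comparable exactly when $j\in S_a$ or $j\in T_a$, at the granularity of label classes. Edges are added only between class pairs for which this is false. Edges already present that later become parallel to a new arc must be replaced at the moment that arc is added.

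I expect that replacement to be the main obstacle: the operations only add adjacencies and cannot delete them. I would handle it by deferring. In an arc-free subexpression, an edge could still turn into an arc later, so I would carry extra label information, or else use the flexibility in the definition of $G^+$ (removing parallel edges), so that an edge is only ever added between classes that can never become comparable afterward. If this fails, the fallback is to argue that whether a pair later becomes comparable is determined by their final label and reachability profile. The edge operations can then be postponed to the point where that profile is fixed. The label count of $4^\ell\ell$ suggests the intended proof needs no more than $(i,S,T)$, so the sets should suffice to decide this.
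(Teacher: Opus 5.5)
Your first part is correct and is the paper's argument: types of $G$ survive in $G^+$. You are in fact more careful than the paper about the case $w=v$ and about edges that disappear in $G^+$. Your labelling $(i,S,T)$, and your handling of creation, $\cwunion$, relabelling, arc operations and the added transitive arcs, also match the paper's.

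The gap is the step you flag yourself, and your proposed way around it fails. The criterion ``$a$ (label $i$) and $b$ (label $j$) are comparable if and only if $j\in S_a$ or $j\in T_a$'' is false, because $S_a$ only records that $a$ reaches \emph{some} vertex labelled $j$. Here is a counterexample. Build $u_k\to z_k\to w_k$ for $k=1,2$ by $\cwrelabel_{2\to4}(\cwarc_{2,3}(\cwarc_{1,2}(1\cwunion 2\cwunion 3)))$, take the union of the two copies, and apply $\cwedge_{1,3}$. Then $u_1,u_2$ receive identical refined labels: each reaches labels $3,4$ and is reached by nothing. Likewise $w_1,w_2$ receive identical refined labels: each is reached from labels $1,4$ and reaches nothing. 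So any refined operation $\cwedge_{X,Y}$ adds either all four edges $u_kw_m$ or none of them. But $G^+$ needs arcs on $u_1w_1$ and $u_2w_2$, and edges on $u_1w_2$ and $u_2w_1$. Deferring cannot help: the comparability already exists when the edge operation is applied, and no later label distinguishes $u_1w_1$ from $u_1w_2$. For the same reason the ``final label and reachability profile'' fallback fails. The remark that the count $4^{\ell}\ell$ suggests the sets suffice is not an argument, and the example shows they do not.

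The paper disposes of this step in one sentence, by appeal to Courcelle and Olariu's removal of redundant operations. That device only deletes an operation whose adjacencies are all re-created by another operation of the same kind. It cannot split $\cwedge_{X,Y}$ above, so the same example shows the paper's argument is incomplete at exactly this point.

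A repair that does work is to first pass to a canonical separated expression, as in the proof of \cref{prop:cwm-tw}. Then the edge and arc operations following each union all act across its two sides and precede any relabelling. Include each vertex's own label in $S$ and $T$. For such a cross pair, $a$ reaches $b$ if and only if some cross operation $\cwarc_{x,y}$ already performed in that batch has $x\in S_a$ and $y\in T_b$. The same criterion governs comparabilities created in later batches. Since every label update is a function of the current refined label, refined classes only merge over time. Hence, for each refined operation $\cwedge_{X,Y}$, the pairs in $X\times Y$ are either all comparable in $G$ or all incomparable, and the operation can be kept or deleted wholesale. This proves that $\cwm$ bounds $\cwm^+$, but with $2\ell\cdot 16^{\ell}$ labels for $\ell=\cwm(G)$, not the stated $4^{\ell}\ell$.
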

\begin{proof}[Idea]
  For mixed neighborhood diversity, it holds that two vertices~$u,v$ of the same type in~$G$ are also of the same type in~$G^+$, as the set of vertices to (from) which~$u,v$ have outgoing (incoming) paths in~$G$ are equivalent.

  For mixed cliquewidth, we show that given a mixed $\ell$-expression constructing~$G$, we can obtain a $(4^\ell\cdot \ell)$-expression constructing~$G^+$. 
  The idea is to keep track of the labels of vertices to which a given vertex~$v$ has outgoing paths and the labels of vertices from which it has incoming paths. This can be achieved by using~$4^\ell$ additional labels for each original label. 
  These additional labels then allow us to add the transitive arcs, resulting in the claimed bound.
\end{proof}

The same does not hold for neighborhood diversity and cliquewidth, as these are in fact incomparable to their transitive closure counterparts, as the following shows.

\begin{proposition}\label{prop:ndu-cwu-tc-incomparable}
  The parameters~$\ndu$ and~$\cwu$ are incomparable to their transitive closure counterparts.
\end{proposition}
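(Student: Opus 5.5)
Incomparability needs two directions for each of $\ndu$ and $\cwu$. The direction "$\ndu$ ($\cwu$) does not bound $\ndu^+$ ($\cwu^+$)" comes from a family where the transitive closure destroys undirected structure. The other direction, "$\ndu^+$ ($\cwu^+$) does not bound $\ndu$ ($\cwu$)", is already given by \cref{prop:nd-cw-tc-not-upper-bound}. There, the grid family $(G_\ell)$ has $\cwu^+(G_\ell)\le 2$ but $\cwu(G_\ell)=\ell+1$, and the tripartite family $(G'_\ell)$ has $\ndu^+=6$ but $\ndu=2\ell+4$. So I only need families where the original parameter stays bounded while its transitive closure counterpart grows.

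For both parameters I plan to use one family: the underlying undirected graph is complete, but the transitive closure \emph{removes} edges. Recall that edges parallel to transitive arcs are deleted in $G^+$, and that arcs are then typically treated like edges in the underlying graph of $G^+$. I would therefore pick a construction in which $G^+$ has an underlying graph with large $\ndu$ and $\cwu$.

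\textbf{Construction.} Take a large graph $H$ with large cliquewidth and large neighborhood diversity, for instance the $(\ell\times\ell)$-grid, whose cliquewidth is $\ell+1$ by Golumbic and Rotics. Its neighborhood diversity is also at least linear in $\ell^2$, since distinct grid vertices have distinct neighborhoods apart from small exceptions. I want a mixed graph $G_\ell$ whose underlying graph is complete, so that $\ndu=1$ and $\cwu\le 2$, and whose transitive closure has an underlying graph with large parameters. Since $G^+$ only gains arcs and loses parallel edges, its underlying graph is still complete. So plain underlying-graph parameters cannot change this way, and I must instead use the fact that the \emph{number of vertices and adjacencies} of the underlying graph of $G^+$ coincides with that of $G$, apart from the conversion of edges into arcs.

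That approach fails, so the real mechanism must be the reverse one. Adding transitive arcs creates new adjacencies, so I need a sparse or very structured underlying graph of $G$ that becomes unstructured after closure. The plan is to start from a mixed graph whose underlying graph is a disjoint union or a simple structure, e.g.\ a tree-like or complete bipartite arc pattern, and choose the arcs so that reachability encodes a grid. For neighborhood diversity, I would take independent sets $u_1,\dots,u_\ell$ and $w_1,\dots,w_\ell$ with a middle layer such that $u_i$ reaches $w_j$ exactly when $j\le i$ via distinct middle vertices. The middle vertices must nevertheless share types, which requires care. Then in $G^+$ all $u_i$ have distinct types, giving $\ndu^+\ge \ell$ while $\ndu$ stays bounded. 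Achieving this is the main obstacle; I would try a half-graph through a bipartite middle layer in which all middle vertices are twins.

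For cliquewidth, I would orient a graph of small cliquewidth (e.g.\ a directed path plus extra structure, or a co-graph) so that its reachability relation induces a grid-like or half-graph-of-grids structure, with cliquewidth lower-bounded by the grid bound. I expect this to be the hardest step.
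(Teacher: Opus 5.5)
Your bookkeeping is right. \cref{prop:nd-cw-tc-not-upper-bound} already shows that $\ndu^+$ and $\cwu^+$ do not bound $\ndu$ and $\cwu$, so what remains is a family with $\ndu$ and $\cwu$ bounded but $\ndu^+$ and $\cwu^+$ unbounded. However, that family is the entire content of the proposition, and you never produce it. The neighborhood-diversity plan stops at ``the middle vertices must nevertheless share types, which requires care''. The cliquewidth part is only a list of candidate ideas, which you yourself flag as the hardest step. As written, nothing is proved for this direction, so there is a genuine gap.

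The missing idea is to pad with undirected edges so that the orientation becomes invisible to the undirected parameters. Your first attempt with a complete underlying graph fails for the reason you noticed. A complete \emph{bipartite} underlying graph works, because one side is independent and can acquire adjacencies through transitivity.

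Concretely, take the $\ell^2$ vertices of the $(\ell\times\ell)$-grid as an independent set. Orient every grid edge from the black to the white class of the checkerboard coloring. For each grid edge $g\to g'$, add a new vertex $a$ with arcs $(g,a)$ and $(a,g')$, and join $a$ by undirected edges to all other grid vertices. The underlying graph of $G_\ell$ is then complete bipartite, so $\ndu(G_\ell)=\cwu(G_\ell)=2$.

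Every grid vertex is a source or a sink, so the longest directed paths are $g\to a\to g'$. Hence $G_\ell^+$ gains exactly the arcs $(g,g')$ and loses no edges, and the grid vertices induce the $(\ell\times\ell)$-grid. By Golumbic and Rotics and monotonicity of cliquewidth under induced subgraphs, $\cwu^+(G_\ell)\ge\ell+1$. Since $\ndu$ bounds $\cwu$, $\ndu^+(G_\ell)$ is unbounded as well.

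This single family handles both parameters, so separate constructions are unnecessary. Your half-graph idea could be completed with the same padding, but half-graphs have bounded cliquewidth, so it would only settle the $\ndu$ case. Also, an orientation with longer directed paths among grid vertices would create extra transitive arcs and destroy the induced grid.
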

\begin{proof}
  We already know from \cref{prop:nd-cw-tc-not-upper-bound} that these parameters are not bounded by their transitive closure counterparts. Thus, it remains to show that they do not bound their transitive closure counterparts. For this, we construct a family of graphs~$(G_\ell)_{\ell\geq 2}$ such that~$\ndu(G_\ell)=\cwu(G_\ell)=2$ while~$\ndu^+(G_\ell)\geq\cwu^+(G_\ell)\geq\ell+1$. 

  The graph~$G_\ell$ consists of~$\ell^2$ \emph{grid vertices} and~$2\ell(\ell-1)$ \emph{arc vertices} and is constructed such that the grid vertices form an independent set in~$G_\ell$ but induce an $(\ell\times\ell)$-grid in~$G_\ell^+$. This is achieved through the usage of the arc vertices, which connect two grid vertices adjacent in~$G_\ell^+$ with a directed path, thus ensuring that the transitive arc is present in~$G_\ell^+$. Further, each arc vertex is connected to all grid vertices via edges, except for those to which it is connected via an arc. This ensures that all arc vertices are of the same type and that all grid vertices are of the same type, resulting in~$\ndu(G_\ell)=\cwu(G_\ell)=2$; see \cref{fig:grid-tc-example} for an example of the construction for~$\ell=3$.
  As shown by Golumbic and Roctis~\cite{GolumbicRotics00CliquewidthPerfectGraphClasses}, the cliquewidth of an $(\ell\times \ell)$-grid is~$\ell+1$. 
  Furthermore, it is easy to see that the cliquewidth of a graph bounds the cliquewidth of induced subgraphs.
  Thus, it follows that~$\ndu^+(G_\ell)\geq \cwu^+(G_\ell)\geq \ell+1$.
  \begin{figure}[tbp]
    \centering
    \includegraphics{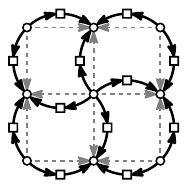}
    \caption{Example of the graph~$G_\ell$ for~$\ell=3$. The grid vertices are depicted as disks and the arc vertices as squares. The grid vertices form an independent set in~$G_\ell$ but induce a $(3\times 3)$-grid in~$G_\ell^+$ (the transitive arcs are depicted as dashed gray). The edges between arc vertices and grid vertices are omitted for visual clarity.}
    \label{fig:grid-tc-example}
  \end{figure}
\end{proof}

\subparagraph{Chromatic Number}
Clearly, any proper coloring of a mixed graph is also a proper coloring of the underlying undirected graph and thus the chromatic number of the underlying undirected graph lower bounds the chromatic number of the mixed graph.
Recall that the maxrank~$(\maxrank)$ of a mixed graph is the length of the longest directed path in the graph. 
For a given vertex~$v$, we define the \emph{inrank}, denoted as~\emph{$\inrank(v)$}, as the length of the longest directed path ending in~$v$. 
Furthermore, as vertices along a directed path must be colored increasingly, every vertex has a color that is at least its inrank. Thus, it holds that the chromatic number bounds the maxrank. 
We summarize these insights in the following proposition.

\begin{proposition}\label{prop:chi-lower-bounds}
  For any mixed graph~$G$ (without directed cycles), $\chi_\mathrm{u}(G)\leq \chi(G)$ and $\maxrank(G)\leq \chi(G)$.
\end{proposition}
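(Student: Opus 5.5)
Both inequalities follow from looking at an arbitrary proper coloring of~$G$; we prove them separately.

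For $\chi_\mathrm{u}(G)\leq\chi(G)$, let $c$ be a proper $\chi(G)$-coloring of~$G$ and let $H$ be the underlying undirected graph. Every edge of~$H$ is either an edge of~$G$ or comes from an arc of~$G$. For an edge $\set{u,v}\in E(G)$, properness gives $c(u)\neq c(v)$ directly. For an arc $(u,v)\in A(G)$, properness gives $c(u)<c(v)$, so again $c(u)\neq c(v)$. Hence $c$ is a proper coloring of~$H$ with $\chi(G)$ colors, and therefore $\chi_\mathrm{u}(G)\leq\chi(G)$.

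For $\maxrank(G)\leq\chi(G)$, let $\angles{v_0,v_1,\dots,v_\ell}$ be a longest directed path, so $\ell=\maxrank(G)$. Each arc $(v_{i-1},v_i)$ forces $c(v_{i-1})<c(v_i)$, and the colors are integers. Since $c(v_0)\geq 1$, induction on~$i$ gives $c(v_i)\geq i+1$. In particular $\chi(G)\geq c(v_\ell)\geq \ell+1>\maxrank(G)$.

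The same induction shows that every vertex~$v$ satisfies $c(v)\geq\inrank(v)+1$, which is consistent with the remark before the statement that every vertex's color is at least its inrank. The inequality as stated is in fact loose by one, since maxrank counts arcs while the path has $\ell+1$ vertices. If $G$ contains directed cycles, no proper coloring exists, so $\chi(G)=\infty$ and both bounds hold trivially. The only point needing care is keeping the path-length convention (arcs, not vertices) straight.
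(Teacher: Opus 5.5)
Your proof is correct and follows the same reasoning the paper gives in the text just before the proposition: a proper mixed coloring is also a proper coloring of the underlying undirected graph, and colors must strictly increase along a directed path. Your sharper bound $\chi(G)\geq\maxrank(G)+1$ is also correct, and it matches the paper's later remark that a directed path of length~$\ell$ has chromatic number~$\ell+1$.
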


For undirected graphs, several (upper) bounds on the chromatic number are known, such as treewidth and maximum degree. However, these bounds do not generalize to mixed graphs, as a simple directed path of length~$\ell$ has treewidth~$1$ and maximum degree~$2$, but chromatic number~$\ell+1$. 
Gutowski et al.~\cite[Theorem 9 \& Proposition 10]{GutowskiJKR0Z23ColorRecMixIntGraph} showed that for mixed interval graphs it holds that~$\chi\leq(\maxrank+1)\cdot\omega$, where~$\omega$ is the clique number, and that this bound is asymptotically tight. 
They achieved this by introducing the concept of \emph{layering}, which is a partition of the vertex set into~$\maxrank+1$ layers~$\angles{L_0,\dots,L_\maxrank}$ such that the layer~$L_i$ contains all vertices of inrank~$i$. In a layering, each arc is oriented towards the higher layer, as it must hold for every arc~$(u,v)$ that~$\inrank(u)<\inrank(v)$. It follows that each subgraph~$G[L_i]$ induced by a layer~$L_i$ is an undirected graph. Thus, we can iteratively color each layer with disjoint color sets, resulting in a proper coloring of the entire graph as formalized in the following theorem.

\begin{restatable}[\restateref{thm:chi-upper-bound}]{theorem}{ChiUpperBound}
  \label{thm:chi-upper-bound}
  For any mixed graph~$G$ (without directed cycles) and its corresponding layering~$\angles{L_0,L_1,\ldots,L_{\maxrank}}$, it holds that~$\chi(G)\leq\sum_{i=0}^{\maxrank}\chi_\mathrm{u}(G[L_i])$. Further, this bound is tight.
\end{restatable}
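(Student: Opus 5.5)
The upper bound is a direct layer-by-layer construction. Let $\chi_i=\chi_\mathrm{u}(G[L_i])$, let $c_i$ be an optimal proper coloring of the underlying undirected graph of $G[L_i]$, and set $s_i=\sum_{j<i}\chi_j$. Define $c(v)=s_i+c_i(v)$ for $v\in L_i$. This uses colors in $[\sum_{i=0}^{\maxrank}\chi_i]$, and layer $L_i$ receives exactly the interval $\{s_i+1,\dots,s_i+\chi_i\}$. The intervals are disjoint and increase with $i$.

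Next I would check properness.
\begin{itemize}
  \item \emph{Edges inside a layer.} These are handled because $c_i$ is proper and the shift by $s_i$ preserves distinctness.
  \item \emph{Edges between layers.} Endpoints lie in disjoint color intervals, so their colors differ.
  \item \emph{Arcs.} For an arc $(u,v)$, any longest directed path ending in $u$ can be extended by $(u,v)$, so $\inrank(u)<\inrank(v)$. Hence $u\in L_i$ and $v\in L_j$ with $i<j$. This also shows there are no arcs inside a layer, so each $G[L_i]$ is undirected and $\chi_i$ is well defined. Every color of $L_i$ is at most $s_i+\chi_i\le s_j$, which is less than every color of $L_j$, so $c(u)<c(v)$.
\end{itemize}
Throughout we need $G$ to have no directed cycles, so that inrank is finite.

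For tightness, it suffices to exhibit graphs where equality holds. The main obstacle is deciding how strong a tightness statement to aim for. I would aim for the statement that, for all values $\maxrank$ and $k_0,\dots,k_\maxrank\ge1$, there is a mixed graph whose layering has $\chi_\mathrm{u}(G[L_i])=k_i$ and $\chi(G)=\sum_i k_i$.

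The construction takes $L_i$ to be a clique $K_{k_i}$ and adds, for every $i<j$, all arcs from $L_i$ to $L_j$. The resulting graph is acyclic, and its inrank layers are exactly the $L_i$, since each vertex of $L_i$ lies at the end of a path through one vertex of each earlier layer. The whole underlying graph is a clique on $\sum_i k_i$ vertices, so every proper coloring uses distinct colors on all vertices. By \cref{prop:chi-lower-bounds}, $\chi(G)\ge\chi_\mathrm{u}(G)=\sum_i k_i$, which matches the upper bound.

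If a sparser witness were wanted, I would use only arcs between consecutive layers, from all of $L_i$ to all of $L_{i+1}$. Then every vertex of $L_{i+1}$ must exceed the maximum color of $L_i$, and induction gives maximum color at least $\sum_{j\le i}k_j$. Either construction suffices.
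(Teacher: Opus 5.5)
Your proposal is correct and follows the paper's proof. The upper bound uses the same layer-by-layer coloring with disjoint, increasing color intervals, relying on every arc going from a lower to a higher inrank layer. For tightness, your main witness (cliques as layers, arcs between all pairs of layers, arbitrary sizes $k_i$) is the transitive closure of a slight generalization of the paper's family $G_{\ell,k}$, which makes the lower bound immediate from $\chi_\mathrm{u}(G)\le\chi(G)$; your sparser consecutive-layer witness is exactly the paper's construction.
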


Regarding the transitive closure, note that~$\maxrank^+=\maxrank$ and~$\chi^+=\chi$, as adding transitive arcs neither changes the length of the longest directed path nor the validity of a coloring.
Thus, we obtain the following bounds on maxrank.

\begin{restatable}[\restateref{prop:rank-bounds}]{proposition}{RankBounds}
  \label{prop:rank-bounds}
  For any mixed graph~$G$ (without directed cycles), it holds that $\maxrank(G)+1\leq \ndm^+(G)$, $\maxrank(G)\leq \tw^+(G)$, $\maxrank(G)\leq 2^{\td(G)}-2$, and~$\maxrank(G)\leq 2\vc(G)$.
\end{restatable}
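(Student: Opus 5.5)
The four inequalities are independent, so I treat them one at a time. In each case I fix a longest directed path $P=\angles{v_0,\dots,v_\maxrank}$ of $G$ and use the fact that in $G^+$ the vertices of $P$ are pairwise joined by arcs $(v_i,v_j)$ for $i<j$. Thus $G^+[V(P)]$ is an acyclic tournament on $\maxrank+1$ vertices.

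\emph{Mixed neighborhood diversity.} The path $P$ is also a directed path of length $\maxrank$ in $G^+$. By \cref{lem:ndm-dir-path} applied to $G^+$, its vertices lie in pairwise distinct types. Hence $\ndm^+(G)=\ndm(G^+)\ge\maxrank+1$.

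\emph{Treewidth.} The underlying undirected graph of $G^+$ contains a clique on the $\maxrank+1$ vertices of $P$. Treewidth is monotone under subgraphs and $\tw(K_{n})=n-1$, so $\tw^+(G)\ge\maxrank$.

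\emph{Treedepth.} A longest directed path is in particular a path in the underlying undirected graph of $G$, and it has $\maxrank+1$ vertices. Treedepth is subgraph-monotone, and the path $P_n$ on $n$ vertices has treedepth $\lceil\log_2(n+1)\rceil$. Hence $\td(G)\ge\log_2(\maxrank+2)$, which rearranges to $\maxrank\le 2^{\td(G)}-2$. The step to check here is the exact formula for the treedepth of a path: it is obtained by removing the middle vertex recursively, and the matching lower bound is the standard one. This formula is exactly what makes the constant $-2$ come out.

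\emph{Vertex cover.} Since $P$ is a path with $\maxrank+1$ vertices in the underlying graph, any vertex cover $C$ must cover its $\maxrank$ edges. Each vertex of $C$ covers at most two edges of $P$, so $|C|\ge\lceil\maxrank/2\rceil$ and therefore $\maxrank\le 2\vc(G)$.

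I expect no real obstacle here. The only delicate points are the sharp treedepth bound for paths and making sure ``length'' consistently means the number of arcs. It is also worth checking that applying \cref{lem:ndm-dir-path} to $G^+$ is legitimate, which it is because $G^+$ has no directed cycles.
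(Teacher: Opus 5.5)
Your proposal is correct and follows essentially the same route as the paper. The paper also uses \cref{lem:ndm-dir-path} on $G^+$ for $\ndm^+$, the $(\maxrank+1)$-clique in the underlying graph of $G^+$ for $\tw^+$, and subgraph monotonicity with the exact values on paths for $\td$ and $\vc$. The only cosmetic difference is that you prove the path lower bound for vertex cover directly, by counting the edges each cover vertex can cover, instead of quoting $\vc(P_\maxrank)=\lceil \maxrank/2\rceil$.
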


As the chromatic number of a subgraph is a lower bound for the chromatic number of the entire graph, it follows from \cref{thm:chi-upper-bound} and \cref{prop:rank-bounds} that any parameter that bounds the chromatic number of undirected graphs as well as the maxrank also bounds the chromatic number of mixed graphs.

\begin{restatable}[\restateref{cor:chi-bounds}]{corollary}{ChiBounds}
  \label{cor:chi-bounds}
  The chromatic number of mixed graphs is bounded by $\tw^+$ as well as $\td$ and $\vc$. 
  In fact, it even holds for every mixed graph~$G$ that~$\chi(G)\leq 2\vc(G)+1$.
\end{restatable}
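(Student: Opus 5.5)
The plan is to combine \cref{thm:chi-upper-bound} with \cref{prop:rank-bounds} and with known bounds on the undirected chromatic number of the layers. Fix the layering $\angles{L_0,\dots,L_\maxrank}$ of $G$. By \cref{thm:chi-upper-bound}, $\chi(G)\le\sum_{i=0}^{\maxrank}\chi_\mathrm{u}(G[L_i])\le(\maxrank(G)+1)\cdot\max_i\chi_\mathrm{u}(G[L_i])$. It therefore suffices to bound two quantities by the parameter in question: the maxrank, and the undirected chromatic number of each layer.

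\textbf{The three qualitative bounds.} For $\tw^+$, I use $\maxrank(G)\le\tw^+(G)$ from \cref{prop:rank-bounds}. Each $G[L_i]$ is undirected, and its underlying graph is a subgraph of the underlying graph of $G$, which in turn is a subgraph of that of $G^+$. Hence $\chi_\mathrm{u}(G[L_i])\le\tw(G)+1\le\tw^+(G)+1$, where the first inequality uses the standard degeneracy bound and the second uses \cref{prop:param-tc-upper-bound}. This gives $\chi(G)\le(\tw^+(G)+1)^2$. For $\td$, I use $\maxrank\le 2^{\td}-2$ together with $\chi_\mathrm{u}\le\td$, since the treedepth of a graph bounds its clique number and indeed its chromatic number (color each vertex by its depth). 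For $\vc$, I use $\maxrank\le2\vc$ together with $\chi_\mathrm{u}(G[L_i])\le\vc+1$. All of these bounds are computable functions of the parameter.

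\textbf{The sharper bound $\chi(G)\le 2\vc(G)+1$.} The crude product bound is too weak here, so I argue directly. Let $C$ be a minimum vertex cover, so $I=V(G)\setminus C$ is independent. The key observation is that a layer $L_i$ containing no vertex of $C$ is independent in $G$ and hence needs only one color. A layer containing $j\ge1$ vertices of $C$ has $\chi_\mathrm{u}(G[L_i])\le j+1$, because $C\cap L_i$ takes $j$ colors and the independent set $I\cap L_i$ takes one more. Summing, layers meeting $C$ contribute at most $\abs{C}+m$, where $m\le\abs{C}$ is the number of such layers. Layers avoiding $C$ contribute one color each.

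\textbf{The main obstacle} is controlling the number of layers avoiding $C$. Every directed path alternates through $C$ at least every other vertex, because no arc has both ends in $I$. So the layers are roughly interleaved, and I expect to count at most $\abs{C}+1-m$ such layers, or at least to charge them to cover vertices. Adding the two contributions would give $\chi\le 2\abs{C}+1$.

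If the layer counting is messy, I would instead color directly. Give each cover vertex $v$ a private color, and order these colors consistently with a topological order on $C$ chosen according to inrank. Then give each vertex of $I$ the color equal to its inrank position, slotted between the colors of the cover vertices. This uses at most $\abs{C}$ colors for $C$ plus at most $\abs{C}+1$ slots for $I$. Checking that the slot structure is respected is the step I expect to require the most care.
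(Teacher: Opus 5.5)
Your derivation of the three qualitative bounds is fine and is the paper's argument: combine \cref{thm:chi-upper-bound} and \cref{prop:rank-bounds} with the standard bounds on $\chi_\mathrm{u}$ in terms of $\tw$, $\td$, and $\vc$. The gap is in the sharp bound $\chi(G)\le 2\vc(G)+1$, where the count you expect is false.

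Take the directed path $v_0\to v_1\to\dots\to v_{2k}$ with $C=\set{v_1,v_3,\dots,v_{2k-1}}$. Every layer is a single vertex and $m=k$. Then $k+1$ layers avoid $C$, not $\abs{C}+1-m=1$.

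The accounting itself is also too lossy. Charging $j+1$ colors to every layer that meets $C$ already gives $\abs{C}+m+(k+1)=3k+1>2k+1$ on this example. So no correct count of the $C$-free layers can rescue it; charging them to cover vertices via their in-neighbors only yields $\abs{C}+2m+1$.

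The missing idea is to pay the extra color only in layers that contain a vertex outside $C$. That is, use $\chi_\mathrm{u}(G[L_i])\le\abs{C\cap L_i}+1$ if $L_i\not\subseteq C$, and $\chi_\mathrm{u}(G[L_i])\le\abs{C\cap L_i}$ otherwise. Then show that at most $\abs{C}+1$ layers contain a vertex outside $C$:
if $v\in L_i\setminus C$ with $i\ge 1$, the predecessor $u$ of $v$ on a longest directed path ending in $v$ has inrank exactly $i-1$. Since $(u,v)$ is an arc and $v\notin C$, we get $u\in C\cap L_{i-1}$. Hence $i\mapsto i-1$ maps these layers injectively to layers meeting $C$, and together with $L_0$ there are at most $\abs{C}+1$ of them. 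Summing gives $\chi(G)\le\abs{C}+(\abs{C}+1)$.

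With this repair your route is genuinely different from the paper's. It shows that the layered coloring of \cref{thm:chi-upper-bound} already achieves the sharp bound, instead of building a coloring by hand.

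Your fallback is essentially the paper's proof, but you left exactly its decisive step unverified. The paper orders $C$ topologically in the transitive closure, i.e., in $G^+[C]$, and gives the $i$-th cover vertex the color $2i$. Each $v\notin C$ gets the odd color $c^-(v)+1$, where $c^-(v)$ is the largest color among the in-neighbors of $v$ (or $0$ if there are none). The crucial point is that every in-neighbor $u^-$ of $v$ gets a smaller color than every out-neighbor $u^+$ of $v$, because $(u^-,u^+)$ is an arc of $G^+$. Since cover colors are even, an odd color is then free strictly between them, and it avoids all edge-neighbors of $v$, which lie in $C$.

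Ordering $C$ by inrank is a linear extension of $G^+[C]$. So your version works if you put each $v\notin C$ into the odd slot directly after all cover vertices of smaller inrank. An order that is merely topological for $G[C]$ would fail: take $a\to u^-\to v\to u^+\to b$ with $C=\set{u^-,u^+}$ and $u^+$ ordered before $u^-$.
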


The resulting hierarchy is visualized in \cref{fig:mix-para-hierarchy}.

\section{Parameterized Complexity of \MixedColoring{}}
\label{sec:para-complexity}

Recall that a problem is \emph{fixed-parameter tractable}
{\color{defblue}(\FPT)} with respect to a parameter~$k$ if it can be
solved in~$f(k)\cdot n^{\Oh(1)}$ time, where~$f$ is a computable
function and~$n$ is the size of the input.  A problem is
\emph{slice-wise polynomial} {\color{defblue}(\XP)} with respect
to~$k$ if it can be solved in~$\Oh(n^{f(k)})$ time (where $f$ and $n$
are as above).  Clearly, \FPT\ $\subseteq$ \XP.  Note that, given an
\FPT- or \XP-algorithm for $k$-(\textsc{Mixed})\Coloring\ that is
parameterized by the number of colors, $k$, we can easily obtain an
\FPT- or \XP-algorithm parameterized by the chromatic number.
Therefore, we state our results with respect to the parameter
chromatic number (if the runtime is \FPT or \XP parameterized by the
number of colors).

Courcelle's theorem yields for any problem that can be expressed in monadic second-order logic on graphs an \FPT{}-algorithm parameterized by tree- or even cliquewidth (depending on the logic variant used). 
Like cliquewidth, monadic second-order logic on graphs extends intuitively to mixed graphs, and expressing \MixedColoring{} as a monadic second-order formula is straightforward. However, to our knowledge, Courcelle's theorem has not yet been entirely proven for mixed graphs, with Arnborg et al.~\cite{ArnborgLS91MixedCourcelle} only proving the theorem for treewidth but not for cliquewidth. Still, by converting our mixed graphs into directed graphs as in \cref{prop:cwm-tw}, we can apply the directed version of Courcelle's theorem to obtain the following result.

\begin{restatable}[\restateref{thm:logic-fpt-cw-k}]{theorem}{LogicFPTCWK}
  \label{thm:logic-fpt-cw-k}
  The problem \MixedColoring{} is \FPT{} parameterized by mixed cliquewidth plus chromatic number.
\end{restatable}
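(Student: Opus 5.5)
The plan is to reduce to the directed version of Courcelle's theorem for MSO$_1$ on directed graphs of bounded cliquewidth (Courcelle, Makowsky, and Rotics, in the form stated for directed cliquewidth). Given a mixed graph~$G$ without directed cycles, we first build the corresponding directed graph~$D(G)$. By \cref{prop:cwm-tw} we have $\cwd(G)=\cw(D(G))\leq\cwm(G)$. Hence the cliquewidth of~$D(G)$ is bounded by the parameter. A directed expression can also be computed in \FPT{} time, in one of two ways. We can approximate it via rank-width of directed graphs (Kanté–Rao), which gives an expression with $2^{\Oh(\cwm)}$ labels. Alternatively, we first approximate a mixed expression and then convert it as in the proof of \cref{prop:cwm-tw}. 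In either case the width stays bounded by a function of~$\cwm(G)$, which is all Courcelle's theorem needs.

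Next we write, for fixed~$k$, an MSO$_1$ formula~$\varphi_k$ over the arc relation~$R$ of~$D(G)$ that holds iff $G$ has a proper $k$-coloring. In $D(G)$, an undirected edge $\set{u,v}$ of~$G$ corresponds exactly to a pair with $R(u,v)\wedge R(v,u)$. An arc $(u,v)$ of~$G$ corresponds to $R(u,v)\wedge\neg R(v,u)$. This identification is valid because $G$ has no opposite arcs, as noted before \cref{prop:cwm-tw}. The formula is
\[
\exists X_1\dots\exists X_k\Bigl(\forall v\,\bigvee_{i} v\in X_i\ \wedge\ \forall v\,\bigwedge_{i<j}\neg(v\in X_i\wedge v\in X_j)\ \wedge\ \forall u\forall v\,\bigwedge_{i}\bigl((u\in X_i\wedge v\in X_i)\to\neg R(u,v)\bigr)\ \wedge\ \forall u\forall v\,\bigwedge_{j\leq i}\bigl((u\in X_i\wedge v\in X_j\wedge R(u,v))\to R(v,u)\bigr)\Bigr).
\]
The first two conjuncts make $X_1,\dots,X_k$ a partition. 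The third says that no edge or arc joins two vertices of the same class. The fourth says that any pair which is an arc (not an edge) goes from a lower class to a strictly higher one. The length of~$\varphi_k$ depends only on~$k$.

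Applying Courcelle's theorem for directed graphs of bounded cliquewidth then decides $D(G)\models\varphi_k$ in time $f(k,\cwm(G))\cdot n^{\Oh(1)}$. To handle the chromatic number as the parameter, we run this for $k=1,2,\dots$ until the first success. This takes at most $\chi(G)$ rounds, each within the \FPT{} bound, since $k\leq\chi(G)$. If~$G$ has a directed cycle, we reject upfront, which can be checked in polynomial time.

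The main obstacle is the expression-computation step. Courcelle's theorem for cliquewidth normally assumes that an expression is given. So we must cite an \FPT{} approximation that works for directed graphs, or for mixed graphs via \cref{prop:cwm-tw}, and confirm that the resulting width is still bounded by a computable function of~$\cwm(G)$. The correctness of the MSO encoding is routine once the edge/arc distinction in~$D(G)$ is established.
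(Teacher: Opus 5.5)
Your proposal is correct and follows essentially the same route as the paper. Both express proper $k$-colorings in MSO over $D(G)$, reading edges as $R(u,v)\wedge R(v,u)$ and arcs as $R(u,v)\wedge\neg R(v,u)$, then apply Courcelle's theorem for directed graphs of bounded cliquewidth, and transfer the bound via $\cwd(G)\le\cwm(G)$ from \cref{prop:cwm-tw}. You are in fact more careful than the paper on two points it glosses over: obtaining a directed expression in \FPT{} time (the Kant\'e--Rao rank-width approximation suffices, so your alternative of first approximating a mixed expression is unnecessary) and iterating over $k\le\chi(G)$.
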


While Courcelle's theorem yields an \FPT{}-algorithm for \MixedColoring{}, the resulting algorithm is quite impractical. For example, the runtime contains huge constants depending on the length of the formula, i.e., the number of colors, which cannot be bounded by an elementary function unless~$\Poly=\NP$, as shown by Frick and Grohe~\cite{FrickGrohe04CourcelleConstants}. Since polynomials are elementary functions, this also implies that this algorithm is not an \XP{}-algorithm parameterized solely by tree- or cliquewidth.

Another \Coloring{} algorithm can also be generalized to \MixedColoring{}: it
is well known~\cite[Theorem
7.9]{CyganFKLMPPS15ParameterizedAlgorithms} that \Coloring, given a
tree-decomposition of width~$\tw$, can be solved in~$\Oh^*(k^\tw)$
time via a dynamic program. By exchanging the checks for a proper
coloring with checks for a proper mixed coloring (i.e., additionally
checking that no arc is violated), \MixedColoring{} can be solved
within the same time bound, as we now summarize.

\begin{theorem}
  \label{thm:tw-algo-coloring}
  Given a mixed graph and a tree-decomposition of width~$\tw$, the
  problem $k$-\MixedColoring\ can be solved in~$\Oh^*(k^\tw)$ time.
\end{theorem}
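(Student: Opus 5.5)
We follow the standard dynamic program for \Coloring{} over a nice tree-decomposition and only modify the validity checks. First, in polynomial time we convert the given tree-decomposition of width~$\tw$ of the underlying undirected graph into a nice tree-decomposition of the same width with~$\Oh(\tw\cdot n)$ nodes. Each node is a leaf, introduce, forget, or join node, and we may assume that the root and leaf bags are empty. Since a mixed graph and its underlying undirected graph have the same treewidth, every edge and every arc of~$G$ has both endpoints together in some bag. Let~$G_t$ be the mixed graph induced by the vertices in the subtree of node~$t$. For each node~$t$ and each map~$f\colon X_t\to[k]$ on its bag~$X_t$, we store a boolean~$T[t,f]$. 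It is true if and only if some proper $k$-coloring of~$G_t$ restricts to~$f$. There are at most~$k^{\tw+1}$ such maps per node.

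The recurrences are as follows. A leaf node (empty bag) is true for the empty map. For an introduce node~$t$ with child~$t'$ and new vertex~$v$, we set $T[t,f]=T[t',f|_{X_{t'}}]$ if~$f$ is proper on~$G[X_t]$ at~$v$, and false otherwise. Being proper at~$v$ means that~$f(u)\neq f(v)$ for every edge~$\set{u,v}$ with~$u\in X_t$, that~$f(u)<f(v)$ for every arc~$(u,v)$ with~$u\in X_t$, and that~$f(v)<f(w)$ for every arc~$(v,w)$ with~$w\in X_t$. This is the only change compared to classical coloring. For a forget node we take $T[t,f]=\bigvee_{c\in[k]}T[t',f\cup\set{v\mapsto c}]$. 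For a join node we take $T[t,f]=T[t_1,f]\wedge T[t_2,f]$. The answer is~$T[\text{root},\emptyset]$.

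Correctness is proved by induction on the tree, following the classical argument. The only point needing care, and the only possible obstacle, is that every constraint, edge or arc, is checked at some point and is never violated across subtrees. When~$v$ is introduced, none of its neighbors in~$G_t$ outside~$X_t$ can exist: any such neighbor~$w$ would have been forgotten below~$t$, yet it must share a bag with~$v$, and~$v$ does not occur below~$t$. So all constraints of~$v$ within~$G_t$ are checked against the bag at introduction. At join nodes, $V(G_{t_1})\cap V(G_{t_2})=X_t$. There is no edge or arc between $V(G_{t_1})\setminus X_t$ and $V(G_{t_2})\setminus X_t$, because such an edge or arc would need a common bag. Hence combining two proper colorings that agree on~$X_t$ gives a proper coloring of~$G_t$.

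The directedness of arcs plays no role in this argument beyond the asymmetric comparison. In particular, acyclicity is not needed for the DP itself: if~$G$ contains a directed cycle, the table simply evaluates to false. For the running time, each table has~$k^{\tw+1}$ entries. Each entry is computed in polynomial time, with the forget node costing an extra factor~$k$. This gives $\Oh(k^{\tw+1}\cdot k\cdot\mathrm{poly}(n))$ in total. Since~$k\leq n$ may be assumed, because~$\chi(G)\leq n$, this is~$\Oh^*(k^{\tw})$.
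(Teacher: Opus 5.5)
Your proposal is correct and is essentially the paper's argument: the paper simply invokes the standard $\Oh^*(k^\tw)$ dynamic program for \Coloring{} over a tree-decomposition and observes that one only needs to add the check that no arc is violated, which is exactly your modified introduce-node test. You additionally spell out the nice-decomposition recurrences, why every arc constraint is checked when its later-introduced endpoint is introduced and is never split across a join, and why the extra $k^2$ factor is absorbed once $k\le n$ is assumed; all of these steps are sound.
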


As an optimal tree-decomposition can be computed in \FPT-time w.r.t.\ treewidth~\cite{Bodlaender96FPTTW} and the chromatic number is bounded by the number of vertices, this yields \FPT and \XP algorithms for \MixedColoring{}, as summarized in the following.

\begin{corollary}
  \label{cor:tw-fpt-xp}
  \MixedColoring{} is \FPT{} parameterized by treewidth plus chromatic number and \XP{} parameterized solely by treewidth.
\end{corollary}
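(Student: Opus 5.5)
The \FPT{} part is a direct combination of earlier results. Given a mixed graph~$G$ on $n$ vertices, we first compute an optimal tree-decomposition of width~$\tw(G)$ of the underlying undirected graph with Bodlaender's algorithm~\cite{Bodlaender96FPTTW}, in $f(\tw)\cdot n$ time. We then invoke \cref{thm:tw-algo-coloring} for $k=1,2,\dots$ in increasing order and stop at the first~$k$ for which a proper $k$-coloring exists; that value is $\chi(G)$. Each call costs $\Oh^*(k^{\tw})$ and $k\le\chi(G)$, so the total running time is $\chi(G)\cdot\Oh^*(\chi(G)^{\tw})$. This is of the form $g(\tw+\chi)\cdot n^{\Oh(1)}$, which gives fixed-parameter tractability for $\tw+\chi$. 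For the decision version with a given~$k$, we can simply reject when $k$ exceeds the parameter bound on~$\chi$, or note that any $k\ge\chi$ is a yes-instance, so the same bound applies.

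For the \XP{} claim, parameterized solely by treewidth, we use that $\chi(G)\le n$. This holds because $G$ has no directed cycles: coloring the vertices in a topological order of the arcs with colors $1,\dots,n$ gives distinct colors, which satisfies every edge and orients every arc upward. Hence $k$ never needs to exceed~$n$, and the running time above becomes $n\cdot\Oh^*(n^{\tw})=\Oh(n^{\tw+c})$ for some constant~$c$. The tree-decomposition step is $\FPT$ in~$\tw$, so it is in particular polynomial for each fixed~$\tw$. The total is therefore $\Oh(n^{f(\tw)})$, as required.

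Nothing here is a real obstacle; the only point that needs care is checking that \cref{thm:tw-algo-coloring} actually applies. This amounts to confirming that the standard introduce/forget/join dynamic program stays correct when the consistency check for a bag also enforces $c(u)<c(v)$ for each arc $(u,v)$ inside the bag. Every arc, like every edge, lies in some bag, so every arc is checked at least once. That is exactly what the theorem asserts, so we can take it as given.
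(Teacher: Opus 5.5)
Your proposal is correct and follows the paper's own argument: compute an optimal tree-decomposition with Bodlaender's algorithm, run the $\Oh^*(k^{\tw})$ dynamic program of \cref{thm:tw-algo-coloring}, and use $\chi(G)\le n$ (via a topological order of the arcs) to obtain the \XP{} bound. One small slip: in the decision version, an instance whose $k$ exceeds the bound on $\chi$ should be accepted, not rejected; your second remark, that every $k\ge\chi$ is a yes-instance, already states this correctly.
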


Note that this algorithm is an improvement over the \XP-algorithm developed by Ries and de~Werra~\cite{Ries-deWerra08TwoColorProbMixGraph}, which runs in~$\Oh(n^{2\tw+4}m^{\tw+2})$ time, where~$m$ is the number of edges and~$n$ the number of vertices.

We cannot hope for an \FPT{}-algorithm parameterized by treewidth alone, as we show in the following that \MixedColoring{} is \W{1}-hard w.r.t.\ pathwidth and feedback vertex set, which are parameters that bound treewidth. We achieve this via a parameterized reduction from \ListColoring{}, a generalization of \Coloring{} where each vertex can only be assigned a color from a given list. While Fellows et al.~\cite{FellowsFLRSST11ComplexColorParaTW} only showed that \ListColoring{} is \W{1}-hard w.r.t.\ treewidth, their reduction actually even yields \W{1}-hardness w.r.t.\ vertex cover, as we show in the following lemma.

\begin{restatable}[\restateref{thm:list-coloring-w1-vc}]{lemma}{ListColoringWOneVC}
  \label{thm:list-coloring-w1-vc}
  \ListColoring{} parameterized by vertex cover is \W{1}-hard.
\end{restatable}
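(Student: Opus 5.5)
We reduce from \PMulticoloredClique{} parameterized by the solution size~$k$, which is \W{1}-hard, and essentially replay the reduction of Fellows et al.~\cite{FellowsFLRSST11ComplexColorParaTW}, checking that the graph it produces has a vertex cover of size bounded in~$k$. Let the instance be a graph~$H$ whose vertex set is partitioned into $V_1,\dots,V_k$. We ask whether there is a clique that contains exactly one vertex from each class.

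\textbf{Construction.} We build a \ListColoring{} instance~$G$ as follows.
\begin{itemize}
  \item \emph{Selector vertices.} For each class~$i\in[k]$ we create one selector vertex~$s_i$ with list $L(s_i)=V_i$. The vertices of~$H$ themselves serve as colors, so the color of~$s_i$ encodes the vertex chosen from~$V_i$.
  \item \emph{Non-edge gadgets.} For every pair $i<j$ and every non-adjacent pair $u\in V_i$, $v\in V_j$ (that is, $\set{u,v}\notin E(H)$), we create a gadget vertex~$w_{uv}$ with $L(w_{uv})=\set{u,v}$. We make $w_{uv}$ adjacent to both~$s_i$ and~$s_j$.
\end{itemize}
A gadget vertex can always be colored unless its two list colors are already taken by its neighbors. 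This happens exactly when $s_i$ receives~$u$ and $s_j$ receives~$v$. Hence every gadget vertex forbids precisely one combination of selections, namely a non-adjacent pair.

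\textbf{Correctness.} We need both directions.
\begin{itemize}
  \item If $H$ has a multicolored clique $\set{x_1,\dots,x_k}$, set $c(s_i)=x_i$. The selectors are pairwise non-adjacent, so no conflict arises among them. Consider a gadget~$w_{uv}$ with $u\in V_i$, $v\in V_j$. Since $x_i$ and $x_j$ are adjacent in~$H$, we cannot have both $x_i=u$ and $x_j=v$. So at least one of $u,v$ is free, and we give that color to~$w_{uv}$.
  \item Conversely, suppose $G$ has a proper list coloring, and write $x_i=c(s_i)\in V_i$. If some $x_i$ and $x_j$ were non-adjacent in~$H$, then the gadget $w_{x_ix_j}$ would see both of its list colors on its neighbors and could not be colored, a contradiction. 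So the $x_i$ form a multicolored clique.
\end{itemize}

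\textbf{Parameter bound.} Every edge of~$G$ is incident to a selector vertex, because gadget vertices are adjacent only to selectors. Therefore $\set{s_1,\dots,s_k}$ is a vertex cover of size~$k$. The construction takes polynomial time, so this is a parameterized reduction and \W{1}-hardness follows.

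The main obstacle is making sure the gadget structure really has a vertex cover of size~$f(k)$. In the original reduction by Fellows et al., this must be read off carefully, since their construction might encode choices with more vertices. If it does, I would fall back on the self-contained construction above, which avoids the issue by design because all gadgets are pendant to the~$k$ selectors.
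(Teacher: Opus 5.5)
Your proposal is correct and matches the paper's proof. It uses exactly the Fellows et al.\ construction: one color-class vertex with list $V_i$ per class, and one edge-vertex with list $\{u,v\}$, adjacent to both relevant class vertices, for each non-adjacent cross-class pair. The class vertices then form a vertex cover of size $k$, and your fallback hedge is unnecessary because this is precisely the construction the paper reads off from their reduction.
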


To reduce \ListColoring{} to $k$-\MixedColoring, where~$k$ is the
total number of distinct colors in the \ListColoring{} instance, we enforce the color
list of each vertex by adding adequate edges to newly inserted
directed paths of vertex-length~$k$. This yields the following.

\begin{restatable}[\restateref{lem:para-list-coloring-to-mixed-coloring}]{lemma}{ParaListColoringToMixedColoring}
  \label{lem:para-list-coloring-to-mixed-coloring}
  There exists a parameterized reduction w.r.t.\ pathwidth as well as w.r.t.\ feedback vertex set from \ListColoring{} to \MixedColoring{}.
\end{restatable}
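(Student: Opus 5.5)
We reduce an instance $(H,L)$ of \ListColoring{}, where $L(v)\subseteq[k]$ for every $v\in V(H)$, to an instance $(G,k)$ of $k$-\MixedColoring{}. First, we delete every color that occurs in no list. This makes $k$ at most the total size of the lists, so the construction is polynomial.

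\emph{Construction.} We keep $H$ with all its edges undirected. For every vertex $v$ and every forbidden color $i\in[k]\setminus L(v)$, we add a fresh directed path $Q_{v,i}=\angles{q^1_{v,i}\to q^2_{v,i}\to\dots\to q^k_{v,i}}$ of vertex-length $k$. We then add the single edge $\set{v,q^i_{v,i}}$. We deliberately use a separate path per forbidden color, rather than one path per vertex with several edges to it. With a single shared path, $v$ together with two path vertices would form new cycles, which would destroy the feedback vertex set bound. Since $G$ has no arcs inside $H$ and each $Q_{v,i}$ is an oriented path, $G$ has no directed cycles. The size of $G$ is $\Oh(|V(H)|k^2)$ plus the size of $H$.

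\emph{Correctness.} In any proper $k$-coloring of $G$, the vertices of the directed path $Q_{v,i}$ must receive strictly increasing colors from $[k]$. Hence $c(q^j_{v,i})=j$ for all $j$; this is the same argument as for inrank bounding the colors. The edge $\set{v,q^i_{v,i}}$ then forbids color $i$ at $v$. So restricting $c$ to $V(H)$ gives a proper list coloring of $H$. Conversely, take a proper list coloring of $H$ and set $c(q^j_{v,i})=j$. All arcs are satisfied, and each added edge joins $v$, whose color lies in $L(v)$, to a vertex of color $i\notin L(v)$. So this is a proper $k$-coloring of $G$.

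\emph{Parameters.} The main point to check is that each parameter of $G$ is bounded by the same parameter of $H$. For the feedback vertex set, each gadget $Q_{v,i}$ together with its connecting edge is a tree attached to $H$ at the single vertex $v$. Therefore every cycle in the underlying undirected graph of $G$ lies inside $H$. Consequently, any feedback vertex set of $H$ is also one of $G$, and $\fvs(G)=\fvs(H)$.

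For pathwidth, take a path decomposition of $H$ of width $w$. For each pair $(v,i)$, choose a bag $B$ containing $v$. Directly after $B$, insert the bags $B\cup\set{q^j_{v,i},q^{j+1}_{v,i}}$ for $j=1,\dots,k-1$. If several gadgets are assigned to the same bag, we insert their sequences consecutively. This is valid because the gadgets are pairwise disjoint and each touches only $v\in B$. Every vertex of $H$ still occupies a contiguous interval of bags, each gadget vertex occupies at most two consecutive inserted bags, and all edges and arcs are covered, including the edge $\set{v,q^i_{v,i}}$. Hence $\pw(G)\le\pw(H)+2$.

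Both parameters are thus preserved up to an additive constant, and the construction runs in polynomial time. This gives the claimed parameterized reductions.
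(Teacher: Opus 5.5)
Your proof is correct and matches the paper's reduction: one directed path of vertex-length $k$ per forbidden color $i\notin L(v)$, joined to $v$ by a single edge at its $i$-th vertex. As in the paper, the feedback vertex set is unchanged because the gadgets are pendant trees, and pathwidth grows by at most $2$ via bags $B\cup\{q^j,q^{j+1}\}$ inserted after a bag containing $v$. The differences are minor: the paper first proves the pathwidth bound for a variant with one shared path per vertex and then transfers it to the per-color gadgets, whereas you argue directly for the per-color gadgets. You also justify polynomial size by discarding unused colors rather than assuming $\ell\le n^2$.
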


Combining the \W{1}-hardness of \ListColoring{} with the parameterized reduction to \MixedColoring{} we obtain the following theorem.

\begin{theorem}
    \label{thm:w1-fvs-pw}
    \MixedColoring{} is \W{1}-hard w.r.t.\ feedback vertex set as well as w.r.t.\ pathwidth.
\end{theorem}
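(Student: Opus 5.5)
The plan is to compose the two preceding lemmas. By \cref{thm:list-coloring-w1-vc}, \ListColoring{} is \W{1}-hard when parameterized by vertex cover. Vertex cover upper bounds both feedback vertex set and pathwidth. Indeed, $\fvs(G)\le\vc(G)$, since deleting a vertex cover leaves an independent set, which is acyclic. Likewise $\pw(G)\le\vc(G)$: for a vertex cover $C$, order the remaining vertices $v_1,\dots,v_m$ and use the path decomposition with bags $C\cup\set{v_i}$. Hence the identity map is a parameterized reduction from \ListColoring{} parameterized by $\vc$ to \ListColoring{} parameterized by $\fvs$, and also by $\pw$. So \ListColoring{} is \W{1}-hard for each of these two parameters.

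Next, I apply \cref{lem:para-list-coloring-to-mixed-coloring}. It gives a parameterized reduction from \ListColoring{} to \MixedColoring{} with respect to pathwidth, and separately with respect to feedback vertex set. This means the reduction runs in \FPT{} time and the new parameter is bounded by a computable function of the old parameter of the same kind. Parameterized reductions compose: the composition still runs in \FPT{} time, and the parameter bound is a composition of computable functions. Therefore the chain \ListColoring{}$/\vc\to$\ListColoring{}$/\fvs\to$\MixedColoring{}$/\fvs$ shows \W{1}-hardness with respect to $\fvs$, and the analogous chain with $\pw$ handles pathwidth.

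There is no real obstacle, since the substance lies in the two lemmas. The only point that needs care is checking that \cref{lem:para-list-coloring-to-mixed-coloring} is stated for \ListColoring{} parameterized by $\fvs$ and by $\pw$ of the input graph, rather than by its vertex cover. If it were instead phrased with vertex cover as the source parameter, the same composition would still go through without change. That is because the inserted directed paths of length~$k$ are pendant-like structures whose edges attach only to original vertices, so they raise $\fvs$ and $\pw$ by at most a function of the original $\vc$.
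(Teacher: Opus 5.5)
Your proposal is correct and matches the paper's argument: the theorem is obtained by combining the \W{1}-hardness of \ListColoring{} w.r.t.\ vertex cover (\cref{thm:list-coloring-w1-vc}) with the parameterized reduction of \cref{lem:para-list-coloring-to-mixed-coloring}. The paper leaves the facts $\fvs\le\vc$ and $\pw\le\vc$ implicit, and you spell them out with valid justifications.
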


Recall that \Coloring{} is \FPT{} with respect to neighborhood
diversity~\cite{Ganian12NeighborDiversityColoringFPT} and \XP{} with
respect to cliquewidth~\cite{KoblerRotics03XPColoringCliqueWidth}.
In contrast to that, we now show that \MixedColoring{} is \paraNP-hard
with respect to both of these parameters, even in more restricted
settings.

\begin{restatable}[\restateref{thm:para-np-cwm-rank}]{theorem}{MixedColoringParaNPCW}
  \label{thm:para-np-cwm-rank}
  \MixedColoring{} is \paraNP-hard w.r.t.\ mixed cliquewidth plus maxrank.
\end{restatable}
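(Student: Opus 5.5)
We need to show that for some constant $c$, \MixedColoring{} is \NP-hard on mixed graphs with $\cwm+\maxrank\le c$. The natural source problem is classical $3$-\Coloring{}. That problem is \NP-hard but has unbounded cliquewidth, so we cannot simply take it as is. The plan is instead to encode a hard problem whose difficulty lies in the \emph{number of colors} $k$ (which is part of the input), while keeping both the label count and the longest directed path constant.

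The candidate source problem is \Coloring{} on graphs of bounded cliquewidth with unbounded $k$. Fomin et al.\ only give \W{1}-hardness there, and \XP{} algorithms exist, so that route cannot give \paraNP-hardness. I would therefore rather reduce from a problem that is \NP-hard with $k$ unbounded and also has a precedence flavour, namely \PPrecedenceConstrainedScheduling{} (the paper defines a macro for it, which hints at this choice). Its instance consists of unit jobs with a partial order, $m$ machines and a deadline $D$. A natural encoding lets colors be time slots: each job becomes a vertex and each precedence becomes an arc. However, precedence chains create long directed paths, which is bad for maxrank. So I expect the source to be a variant that is \NP-hard even with bounded precedence depth. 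Alternatively, one can encode the capacity constraints in a different way: at most $m$ jobs per slot is modelled by cliques or edges, and unbounded chains are avoided.

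A cleaner plan is to reduce from $3$-\SAT{} or from \Coloring{} itself via a ``color-forcing'' gadget of bounded depth. Take an undirected graph $H$ that is hard to color with $k$ colors, where $H$ is allowed to be a complete multipartite-like graph of cliquewidth at most $2$. For instance, use the \NP-hard problem of list-coloring (equivalently precoloring-extension) on graphs of bounded cliquewidth, such as complete bipartite or cograph-type classes, where hardness stems from the lists. Then simulate the lists using arcs of depth at most $2$. For each color $j$ we would like an anchor vertex forced to color $j$, but that needs a path of length $k$. So anchors are replaced by comparisons: a vertex $v$ whose list is an interval $[a,b]$ can be enforced by arcs from a source vertex and to a sink vertex, provided that vertices realizing ``at least $a$'' exist. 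To avoid long paths, realize thresholds using a big undirected clique $K$ of size $k$ whose vertices receive all colors. Arcs from vertices of $K$ into $v$ then push $c(v)$ above $c$ of those clique vertices. Because clique colors are free, the relative order inside $K$ has to be fixed. This is done by a tournament-like arc structure, but acyclic tournaments have long paths, so instead a single ``middle'' vertex is used to split colors into low and high halves.

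Concretely, I would aim for the following: hardness of \Coloring{}-type instances in which each vertex is restricted to the lower or upper part relative to a few pivot vertices, giving depth at most $2$ (source $\to$ pivot $\to$ sink). Label count stays constant because all vertices of one role share a label, and edges and arcs are added wholesale by $\cwedge$ and $\cwarc$ operations. The main obstacle is finding an \NP-hard source problem whose constraints are expressible with constantly many ``role classes'' in cliquewidth terms. A good candidate is the known \NP-hardness of \Coloring{} with precoloring or with two-sided bounds on co-bipartite or cograph-like graphs, or equivalently bin-packing-style problems encoded on complements of matchings. I would first try reducing from \PPrecedenceConstrainedScheduling{} with precedence depth $2$ (known to be \NP-hard with unbounded machines via the Ullman-style reductions). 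Machine capacity would become a complement-type clique structure, and the resulting expression would be checked to need $O(1)$ labels with maxrank at most $2$ or $3$.
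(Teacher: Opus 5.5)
Your proposal lists candidate strategies but does not contain a proof. You never fix a source problem, a construction, a correctness argument, or a bound on $\cwm$ and $\maxrank$, and you leave the ``main obstacle'' open. What blocks you is the belief that forcing a vertex to a fixed color $j$ needs a directed path of length about $k$, or a shared clique whose internal order must be fixed by arcs. It does not. Let $w$ have incoming arcs from a \emph{private} undirected clique of size $j-1$ and outgoing arcs to a private undirected clique of size $k-j$. Then every proper $k$-coloring has $c(w)=j$, and the longest directed path in the gadget has length $2$. Similarly, two vertices both joined by edges to a private $(k-1)$-clique must receive the same color, with no arcs at all. Private gadgets can be built and their auxiliary vertices moved to a dead label, so they cost only constantly many labels.

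Your other routes also have concrete problems. In \PPrecedenceConstrainedScheduling{}, machine capacity is not a coloring constraint. If you make colors (slot, machine) pairs, an arc only gives $c(u)<c(v)$, not a strictly later slot. The paper's reduction from this problem enforces residues with a directed path on $4D$ vertices, i.e., unbounded maxrank, which is why it is used only for $\ndu$. Moreover, since $\cwm(\ag(G))\le\cwm(G)$, you would need the hard precedence orders to have bounded cliquewidth, which you do not address. Splitting colors into halves at a few pivots gives only constantly many list types, and you never show that this restricted problem is hard.

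The paper reduces from binary \PShortestSuperstring{}. It builds one directed path per string and adds an edge between any two character vertices on different paths that carry different characters. A proper $k$-coloring is then exactly an embedding of all strings into a common superstring of length $k$. These edges depend only on path membership and on the character, which keeps $\cwm\le 6$. Finally, each path vertex is split into an in-sibling and an out-sibling, both joined to a private $(k-1)$-clique; this forces equal colors and brings $\maxrank$ down to $1$.

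With the forcing gadget above, your list-coloring idea can also be completed. \ListColoring{} is \NP-hard on complete bipartite graphs: put one vertex per variable with list $\{x_i,\bar x_i\}$ on one side, and one vertex per clause, listed by its literals, on the other. For each vertex $v$ and each $j\notin L(v)$, join $v$ by an edge to a private vertex forced to color $j$. This gives an equivalent $k$-\MixedColoring{} instance with $\maxrank=2$ and constant $\cwm$. Either way, your write-up is missing an explicit bounded-depth forcing (or equalizing) gadget, combined with a source problem whose constraint pattern needs only constantly many labels.
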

\begin{proof}[Idea]
  Given a set of strings~$\mathcal{S}$ over the binary
  alphabet~$\set{0,1}$ and a positive integer~$k$, the problem
  \PShortestSuperstring asks whether there is a string of length at
  most~$k$ that is a superstring of each string
  in~$\mathcal{S}$.  \PShortestSuperstring is
  \NP-hard~\cite{RaihaUkkonen81SCSBinaryNPComplete}.

  We reduce from \PShortestSuperstring{} by constructing a mixed
  graph~$G$ such that a proper $k$-coloring of~$G$ corresponds to a
  length-$k$ superstring of~$\mathcal{S}$.  We achieve this by
  creating a directed path~$P_S$ for each string~$S$ in~$\mathcal{S}$
  such that the vertices of the path correspond to the characters of
  the string; we add edges such that vertices corresponding to
  different characters cannot receive the same color.  See
  \cref{fig:sss-reduction-instance} for an example.  Thus, the color
  of a vertex corresponds to the position of the corresponding
  character in the superstring, and the directions of the paths
  enforce that the characters of each string appear in the correct
  order.
  
  \begin{figure}[tbp]
    \centering
    \captionsetup[subfigure]{justification=centering}
    \begin{subfigure}[b]{.45\textwidth}
      \centering
      \includegraphics{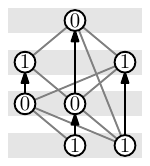}
      \caption{}
      \label{fig:sss-reduction-instance}
    \end{subfigure}
    \begin{subfigure}[b]{.45\textwidth}
      \centering
      \includegraphics{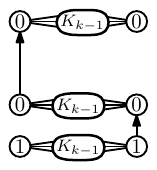}
      \caption{}
      \label{fig:sss-ladder}
    \end{subfigure}
    \caption{
      (a) The mixed graph resulting from an instance of \PShortestSuperstring with strings~$\set{01,100,11}$ and shortest superstring~$1010$.
      (b) The construction resulting from splitting the path corresponding to the string~$100$ at each vertex and connecting the resulting vertices to distinct cliques to enforce that they receive the same color in a proper $k$-coloring.
    } 
    \label{fig:sss-reduction}
  \end{figure}

  To ensure a constant maxrank, we then split each directed path~$P_S$ at every vertex into two vertices, with one vertex obtaining the incoming and the other the outgoing arc. We then connect each new pair of vertices to a distinct $(k-1)$-clique, thus enforcing that the two vertices receive the same color in any proper $k$-coloring; see \cref{fig:sss-ladder}. Therefore, regarding proper $k$-colorings, they can be treated as a single vertex, while the maxrank of the resulting graph is~$1$, due to all directed paths having been split.
  Furthermore, the resulting graph has a mixed cliquewidth of at most~$6$.
\end{proof}

\begin{theorem}
  \label{thm:paraNP-ndu}
  \MixedColoring{} is \paraNP-hard parameterized by neighborhood
  diversity, even if considering the neighborhood diversity of the
  transitive closure.
\end{theorem}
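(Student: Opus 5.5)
The plan is a polynomial-time reduction from an \NP-hard problem to $k$-\MixedColoring{} whose output graphs $G$ satisfy $\ndu(G^+)\le c$ for an absolute constant~$c$. This shows that \MixedColoring{} is \NP-hard already on the slice $\ndu^+\le c$, hence \paraNP-hard w.r.t.\ $\ndu^+$. To get the bound for $\ndu$ itself from the same construction, I will also aim to keep $\ndu(G)\le c$. If the gadget naturally separates the two, I would fall back on two separate constructions. Since \Coloring{} is \FPT{} for $\ndu$, all the hardness must come from the orientations.

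I first record what the restriction means. If $\ndu(G^+)\le c$, then the vertices of $G^+$ split into at most $c$ classes. Each class is an undirected clique or an independent set in the underlying graph. Between two classes, every pair is adjacent or no pair is. The only freedom is therefore in how each adjacent pair is realised: as an edge, as an arc in one direction, or, in $G$ itself, as a non-adjacency that reappears as a transitive arc in $G^+$. On a clique class the colors must be distinct and the arcs impose a linear-extension constraint. On a bipartite pair $A,B$ the per-pair choice between ``$\neq$'', ``$<$'' and ``$>$'' can encode a relation. So I want a source problem that is hard even when all its information lies in such a choice between a constant number of homogeneous groups.

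As source I would use \PPrecedenceConstrainedScheduling{}, which asks whether $n$ unit jobs under a partial order fit on $m$ machines within $D$ time slots; it is \NP-hard when $m$ is part of the input. Colors play the role of slots. One class is an undirected clique of $k$ ``slot'' vertices, which is forced to use all $k$ colors. One class is an independent set of job vertices. A further independent class of ``blocker'' vertices enforces the capacity $m$ per slot. A blocker is joined to a job by an edge when that job may share its color, and by an arc otherwise, so that every color receives at most $m$ jobs. The precedences between jobs are placed on arcs from the jobs to the blockers and back. This keeps the jobs pairwise non-adjacent in $G$ while producing the needed arcs in $G^+$ via paths of length two, in the same spirit as the grid/arc-vertex construction in \cref{prop:ndu-cwu-tc-incomparable}.

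The key steps, in order, are:
\begin{enumerate}
\item Fix the at most $c$ classes and the complete/empty pattern between them.
\item Prove that every proper $k$-coloring yields a feasible schedule.
\item Prove the converse.
\item Verify that taking the transitive closure preserves homogeneity.
\end{enumerate}

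I expect step~4 to be the main obstacle. In $G^+$, two jobs in the same class that are comparable in the precedence order become joined by a transitive arc, while incomparable jobs stay non-adjacent, which breaks the independent-set structure of the job class. My first attempt at a fix is to make the job class an undirected clique in the underlying graph of $G^+$: incomparable jobs get joined by edges, and the capacity is instead modelled through several copies of each slot color, grouping $m$ consecutive colors per time step. If that fails, I would switch the source to a problem whose constraints are already ``complete between groups'', for instance a variant of \PShortestSuperstring{} where the two character classes are complete to each other, and handle the within-class transitive arcs by the same path-splitting-with-cliques trick as in \cref{thm:para-np-cwm-rank}.
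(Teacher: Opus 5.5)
Your diagnosis in step~4 is right. Your first fix (make all job vertices an underlying clique and let blocks of consecutive colors stand for time steps) is exactly the paper's starting point. However, the proposal contains no working construction, because the mechanisms that make the reduction sound are missing.

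First, the blocker gadget does not give a capacity bound. An edge or arc between a blocker and a job only forbids equal colors or orders those two vertices, and nothing in your description turns this into ``at most $m$ jobs per color''. Capacity has to come from pairwise adjacency of the job vertices, as in your fallback.

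More importantly, you have no way to pin colors. An undirected slot clique only forces all $k$ colors to appear and ties no vertex to a specific color. So edges to it cannot restrict a job vertex to a prescribed position within a block. The paper instead uses a directed path $p_1\to\dots\to p_{4D}$. It is forced to receive $c(p_i)=i$, and it becomes a transitive tournament, hence an underlying clique, in $G^+$. Joining a vertex by edges to all $p_i$ with $i$ outside one residue class mod~$4$ confines that vertex to the class, and the path vertices fall into just four types.

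The real gap in your fallback is precedence across blocks. With one vertex per job, an arc $(j,j')$ only yields $c(j)<c(j')$, which permits $j$ and $j'$ in the same block. So a $k$-coloring need not give a valid schedule. The paper uses the variant with two machines and a prescribed machine per task, and splits each task $t$ into two vertices:
\begin{itemize}
\item a start vertex $v_t^-$, pinned to residue $1$ or $2$ according to its machine;
\item an end vertex $v_t^+$, pinned to residue $3$ or $4$;
\end{itemize}
together with the arc $(v_t^-,v_t^+)$. It encodes $t\prec t'$ by the arc $(v_t^+,v_{t'}^-)$. Now $c(v_t^-)<c(v_t^+)<c(v_{t'}^-)$ requires an end-residue color strictly between two start colors, which forces different blocks. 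Your idea of routing precedences through intermediate vertices points this way, but the intermediate vertex must be pinned to a separator residue, and that needs the rigid path. The same gadget works for your $m$-machine source with blocks of $2m$ colors. With all task vertices forming one underlying clique, $G^+$ has eight types.

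You also do not need a second construction for $\ndu$. Since $G$ and $G^+$ have the same proper colorings and $(G^+)^+=G^+$, the reduction can simply output $G^+$.

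Your \PShortestSuperstring{} fallback would not help either. Each sibling pair gets its own $(k-1)$-clique adjacent only to that pair, so the number of types grows with the input.
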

\begin{proof}
  We reduce from the following problem, which is
  \NP-hard~\cite[SS9]{GareyJohnson79ComputersIntractabilityGuideTheoryNPCompleteness}.

  \defdecproblem{\PPrecedenceConstrainedScheduling}
  {Sets~$T_1,T_2$ of unit-length tasks, a partial order~$\prec$
    on~$T_1\cup T_2$, an integer~$D$.}
  {Can the tasks be scheduled with makespan~$D$ on two parallel
    machines~$M_1$ and~$M_2$ such that the tasks in~$T_1$ are assigned
    to~$M_1$, the tasks in~$T_2$ are assigned to~$M_2$, and all
    precedence constraints are respected (i.e., if $t\prec t'$, then
    $t$ must finish before~$t'$ starts)?}

  Given an instance $(T_1,T_2,\prec,D)$ of
  \PPrecedenceConstrainedScheduling, we create a $4D$-\MixedColoring{}
  instance~$G$ such that each block of four consecutive colors corresponds
  to one time unit on the machines.

  First, we create a directed path~$P$ consisting of~$4D$ vertices~$\angles{p_1,\dots,p_{4D}}$, which we will use to ensure that each task is scheduled on its assigned machine. Note that in any proper $4D$-coloring, the path~$P$ has the same unique coloring, with each vertex~$p_i$ receiving the color~$i$.
  For each task~$t\in T_1 \cup T_2$, we create two \emph{task
    vertices}, the \emph{start vertex}~$v_t^-$ and the \emph{end
    vertex}~$v_t^+$, connected via the arc~$(v_t^-,v_t^+)$.
  We denote with~$[i]_k$ the set of integers that are congruent to~$i$ modulo~$k$.
  For each start vertex~$v_t^-$ of a task~$t$ in~$T_1$, we add an edge
  to each vertex~$p_i$ of the path~$P$ with~$i\notin [1]_4$. As a
  result, $v_t^-$ receives a color in~$[1]_4$ in every proper
  $4D$-coloring of~$G$. Analogously, we add edges between each end
  vertex~$v_t^+$ of a task~$t$ in~$T_1$ and each vertex~$p_i$ of the
  path~$P$ with~$i\notin [3]_4$ to ensure that~$v_t^+$ receives a
  color in~$[3]_4$. For tasks in~$T_2$, we similarly add edges to
  ensure that the start vertices receive colors in~$[2]_4$ and the end
  vertices receive colors in~$[4]_4$.
  The precedence constraints are enforced by adding arcs between the corresponding tasks, i.e., for each~$t,t'\in T_1\cup T_2$ with~$t\prec t'$ we add the arc~$(v_t^+,v_{t'}^-)$.
  Lastly, it remains to ensure that no two tasks are scheduled at the same time on the same machine. We achieve this by adding (undirected) edges such that the underlying undirected graph induced by the task vertices forms a clique. 
  An example of the resulting graph is shown in \cref{fig:pcs-reduction}.

  \begin{figure}[tbp]
    \centering
    \includegraphics{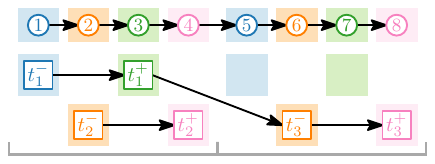}
    \caption{The mixed graph resulting from an instance of \PPrecedenceConstrainedScheduling with~$T_1=\set{t_1}$ and~$T_2=\set{t_2,t_3}$ as well as precedence constraints~$t_1\prec t_3$ and deadline~$2$. Four consecutive colors correspond to one time unit on the machines, and the colored slots correspond to the colors that the task vertices of the corresponding type can receive. 
    The edges are omitted for visual clarity.}
    \label{fig:pcs-reduction}
  \end{figure}

  \begin{restatable}[\restateref{claim:correctness-pcs}]{claim}{CorrectnessPCS}
    \label{claim:correctness-pcs}
    The resulting graph~$G$ has a proper $4D$-coloring if and only if there is a schedule of the tasks on their assigned machines that respects the precedence constraints and finishes by time~$D$.
  \end{restatable}

  \begin{restatable}[\restateref{claim:nd-tc-pcs}]{claim}{NDTCPCS}
    \label{claim:nd-tc-pcs}
    The transitive closure of~$G$ has a neighborhood diversity of~$8$.
  \end{restatable}

  As these claims show, the resulting \MixedColoring{} instance is equivalent to the original \PPrecedenceConstrainedScheduling{} instance, with the neighborhood diversity of the transitive closure being~$8$. Thus, \MixedColoring{} is \paraNP-hard{} w.r.t.\ neighborhood diversity, even when considering the neighborhood diversity of the transitive closure.
\end{proof}

The picture changes if we parameterize by {\em mixed} neighborhood diversity.

\begin{theorem}
  \label{thm:fpt-ndm}
  \MixedColoring{} is \FPT{} parameterized by mixed neighborhood diversity.
\end{theorem}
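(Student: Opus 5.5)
Let $d=\ndm(G)$ and let $V_1,\dots,V_d$ be the mixed neighborhood partition, which can be computed in polynomial time. By \cref{lem:ndm-equiv-relation}, each $V_i$ is an undirected clique or an independent set, and there are no arcs inside a class. Between two classes $V_i,V_j$, either there is nothing, or all pairs are joined by edges, or all pairs are joined by arcs in the same direction. The reason is that all vertices of $V_i$ share their in-, out- and undirected neighbourhoods outside $V_i$, and likewise for $V_j$. So $G$ is a blow-up of a type graph $H$ on $d$ vertices, where each vertex of $H$ is marked ``clique'' or ``independent'' and each pair carries no relation, an edge, or an arc. $H$ is acyclic with respect to arcs, since otherwise $G$ would contain a directed cycle.

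\textbf{Step 1: normal form for colorings.} I will argue that we may assume every independent class $V_i$ is monochromatic. Take any proper coloring and recolor all of $V_i$ with the color $c(v)$ of one fixed $v\in V_i$. Every $u\in V_i$ has the same outside neighbourhood with the same relation types as $v$, so every constraint on $u$ is a constraint $v$ already satisfies. Hence we may contract each independent class to a single vertex. Clique classes $V_i$ need $|V_i|$ pairwise distinct colors, and all of these colors must satisfy the same constraints towards the other classes.

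\textbf{Step 2: ILP with few variables.} The structure is now: $d$ class nodes, where class $i$ requests a set $S_i$ of colors with $|S_i|=n_i$ ($n_i=1$ for independent classes). The constraints are $S_i\cap S_j=\emptyset$ for edge pairs and $\max S_i<\min S_j$ for arcs $(i,j)$. Since $\chi$ may be large (up to $n$), I will guess in $f(d)$ time the relative order of the $2d$ endpoints $\min S_i,\max S_i$. This guess fixes an ordered sequence of at most $4d$ elementary segments of the color line. A segment is either an endpoint position or an open interval between consecutive endpoint values. Within a segment, the set $A$ of classes whose interval $[\min S_i,\max S_i]$ covers it is fixed, and the guessed order must respect all arcs. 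For each segment $s$ and each subset $T\subseteq A_s$ that is independent in $H$ with respect to edges and arcs, introduce an integer variable $x_{s,T}$ counting the colors in segment $s$ used exactly by the classes in $T$. The constraints are as follows. Each class receives $n_i$ colors in total. Endpoint segments have size one and are used by the class(es) whose endpoint they are. Every color counted is used by at least one class, though allowing $T=\emptyset$ and minimizing total colors is fine. The objective is to minimize $\sum x_{s,T}$, or to compare it to $k$. The arc constraints hold automatically by the order guess, and the edge constraints hold by the independence of $T$. There are at most $4d\cdot 2^d$ variables, so Lenstra's algorithm (or Frank--Tardos/Kannan) solves each ILP in FPT time. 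An ILP solution is converted back to a coloring by assigning colors sequentially.

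\textbf{Main obstacle.} The delicate point is justifying that the colors of a clique class need not be contiguous, yet arcs constrain only the extremes. The segment/type-count ILP must therefore capture feasibility exactly, including that endpoint colors really belong to the right classes and that every color in a clique class's span is allowed. I will verify this equivalence carefully, in both directions, for the guessed order.
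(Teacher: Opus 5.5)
Your proposal is correct and follows essentially the same route as the paper. You contract independent types, enumerate the (weak) orders of the $2\,\ndm$ type endpoints that respect all arcs, and for each order solve a Ganian-style ILP whose variables count the colors in each segment used by exactly a given set of pairwise edge-free types, via Lenstra. The only difference is cosmetic: the paper uses half-open intervals $[c_{p^-(C)},c_{p^+(C)})$ with explicit integer variables for the boundary colors, whereas you use tight endpoints as singleton segments and let the segment lengths be implied by the counts.
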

\begin{proof}
  Let~$G$ be a mixed graph. We can assume w.l.o.g.\ that each type
  of~$G$ induces a clique, as all vertices of a type inducing an
  independent set can be merged into a single vertex.  This is due to
  the fact that they have the same neighborhoods and therefore can be
  assigned the same color.
  Let~$\mathcal{S}=\set{C_1,\ldots,C_{\ndm}}$ be the set of types
  of~$G$, where each~$C_i$ is a clique in~$G$.  In the following, we
  write~$\set{C_i,C_j}\in E(G)$ and~$(C_i,C_j)\in A(G)$ to express
  that there is an undirected edge or a directed arc, respectively,
  between the types~$C_i$ and~$C_j$ in~$G$.
  For a coloring~$c$ of~$G$ and a set $U \subseteq V(G)$, let
  $c(U)=\{c(u) : u \in U\}$.  If~$c$ is proper, then, for
  every~$(C_i,C_j)\in A(G)$, all colors in~$c(C_i)$ are smaller than
  all colors in~$c(C_j)$.
  We define a \emph{type-endpoint preorder}~$p$ as a function that
  assigns, to each type~$C$, two integers~$p^-(C),p^+(C)\in[\ell]$
  (with~$p^-(C)<p^+(C)$ and $\ell\le 2\,\ndm$), which we call the
  \emph{type endpoints} of~$C$.
  We say that a coloring~$c$ \emph{corresponds} to a type-endpoint
  preorder~$p$ if there is a sequence $\angles{c_1,\dots,c_\ell}$
  of~$\ell$ ascending (not necessarily consecutive) colors such that,
  for each type~$C$, the set~$c(C)$ is contained in the halfopen
  interval~$[c_{p^-(C)}, c_{p^+(C)})$.
  We say that a (type-endpoint) preorder~$p$ is \emph{proper} if, for each arc~$(C_i,C_j)\in A(G)$, it holds that~$p^+(C_i)\leq p^-(C_j)$. An example of a proper preorder and a corresponding coloring is shown in \cref{fig:nd-ilp}.

  \begin{restatable}[\restateref{claim:proper-preorder-colorings}]{claim}{ProperPreorderColorings}
    \label{claim:proper-preorder-colorings}
    Given a proper (type-endpoint) preorder~$p$, any coloring that corresponds to~$p$ does not violate any arcs.
    Further, any proper coloring corresponds to at least one proper preorder.
  \end{restatable}

  \begin{figure}[tbp]
    \centering
    \includegraphics{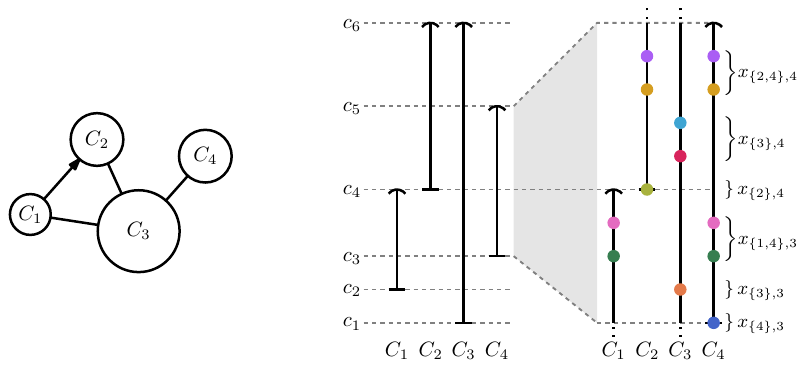}
    \caption{A mixed graph with four types~$C_1,\dots,C_4$ and a
      preorder with~$\ell=6$ type endpoints. Since it holds for its
      only arc~$(C_1,C_2)$ that~$p^+(C_1)=c_4=p^-(C_2)$, the preorder is
      proper.  On the right, we depict a part of a coloring
      corresponding to the preorder, together with the variables of
      the \ILP{}.
      Note that most of the variables are zero, either due to a constraint of the \ILP{}, such as for~$x_{\set{1},4}$ or~$x_{\set{3,4},3}$, or simply due to the fact that there are no colors shared exclusively by the given set of types in the given interval, such as for~$x_{\set{1},3}$.
      }
    \label{fig:nd-ilp}
  \end{figure}

  In total, there are~$2\,\ndm$ type endpoints and thus at most~$(2\,\ndm)!$ orders of these endpoints. Since type endpoints can be equal, there are at most~$(2\,\ndm)!\cdot 2^{2\,\ndm}\in 2^{\Oh(\ndm\log\ndm)}$ (proper) preorders.
  Therefore, we can enumerate all proper preorders in \FPT-time
  w.r.t.\ mixed neighborhood diversity.  We can thus check whether~$G$
  is $k$-colorable by checking, for each proper preorder~$p$, whether
  there exists a proper $k$-coloring of~$G$ that corresponds to~$p$.
  Since we know that such a coloring does not violate any arcs, it
  suffices to check whether it does not violate any edges.  This
  problem has similarities to \Coloring{}, differing in that we
  additionally require the colors within each type to adhere to the
  given preorder.  For \Coloring{},
  Ganian~\cite{Ganian12NeighborDiversityColoringFPT} provided an
  \FPT-algorithm w.r.t.\ neighborhood diversity based on an \ILP. In
  the following, we adapt his \ILP{} to additionally respect a given
  proper preorder~$p$.
  
  Since we only need to decide whether there exists a $k$-coloring~$c$
  of~$G$ that corresponds to the given proper preorder~$p$, we do not
  need an objective function. For an overview of the ILP,
  see the \arxiv{proof of \cref{claim:ilp-correctness} in the appendix}{full version~\cite{FullVersion}}. 
  We have two kinds of variables. 
  First, for every~$i\in[\ell]$, we introduce an integer variable $c_i\in[k]$ 
  to obtain a sequence~$\angles{c_1,\dots,c_\ell}$ of colors.
  We enforce that these colors form an ascending sequence (by adding, for each~$i\in[\ell-1]$, the constraint~$c_i+1 \le c_{i+1}$). 
  Second, for each~$i\in[\ell-1]$ and each subset~$\mathcal{S}'\subseteq \mathcal{S}$, we introduce an integer variable~$x_{\mathcal{S}',i}\in\set{0,\dots,k}$, which represents the number of colors in the interval~$[c_i,c_{i+1})$ that are shared exclusively by the types in~$\mathcal{S}'$.
  Formally, we want that ${x_{\mathcal{S'},i}=\abs{\set{d\in[c_i,c_{i+1})\mid (\forall C\in
    \mathcal{S}'\colon d\in c(C))\land(\forall
    C\in\mathcal{S}\setminus\mathcal{S}'\colon d\notin
    c(C))}}}$, where~$c(C)$ is the set of colors assigned to the vertices of type~$C$ by the coloring~$c$; see \cref{fig:nd-ilp} for an example. 
  For each~$i\in[\ell-1]$, we need to ensure that the ILP does not assign too many colors in the interval~$[c_i,c_{i+1})$. We achieve this by adding the constraint ${\sum_{\mathcal{S}'\subseteq\mathcal{S}}x_{\mathcal{S}',i} \le c_{i+1}-c_i}$.
  Moreover, we want the coloring to correspond to the given
  preorder~$p$.  To achieve this, for each type~$C$, we first ensure
  that no color~$d < c_{p^-(C)}$ is assigned to~$C$ (by adding the
  constraint~$\sum_{i=1}^{p^-(C)-1}\sum_{\mathcal{S}'\subseteq
    \mathcal{S}\colon C\in \mathcal{S}'}x_{\mathcal{S}',i}=0$).
  Second, we ensure that enough colors are assigned to~$C$ in the
  interval formed by its two type endpoints (by adding the
  constraint~$\sum_{i=p^-(C)}^{p^+(C)-1}\sum_{\mathcal{S}'\subseteq
    \mathcal{S}\colon C\in \mathcal{S}'}x_{\mathcal{S}',i}=\abs{C}$).
  Third, we ensure that no color $d \ge c_{p^+(C)}$ is assigned to~$C$ (by
  adding the
  constraint~$\sum_{i=p^+(C)}^{\ell-1}\sum_{\mathcal{S}'\subseteq
    \mathcal{S}\colon C\in \mathcal{S}'}x_{\mathcal{S}',i}=0$).
  It remains to ensure that no two types connected by an edge share a
  color.  To this end, for each~$i\in[\ell-1]$ and each
  set~$\mathcal{S}'\subseteq \mathcal{S}$ that contains types~$C$
  and~$C'$ with $\{C,C'\} \in E(G)$, we add the
  constraint~$x_{\mathcal{S}',i}=0$.

  \begin{restatable}[\restateref{claim:ilp-correctness}]{claim}{ILPCorrectness}
    \label{claim:ilp-correctness}
    For a given proper preorder~$p$, there is a feasible solution to the constructed \ILP{} if and only if there is a proper $k$-coloring of~$G$ that corresponds to~$p$.
  \end{restatable}
  
  It is well known~\cite[Theorem
  9.19]{FlumGrohe06ParameterizedComplexityTheory} that an \ILP{} can
  be solved in \FPT-time w.r.t.\ the number of variables.  Hence, we
  can solve the \ILP{} in \FPT-time w.r.t.\ mixed neighborhood
  diversity as it uses $\Oh(2^{\ndm} \cdot \ndm)$ variables.
  Therefore, by enumerating all proper preorders and checking for each
  preorder whether a corresponding proper $k$-coloring exists, we can
  decide in \FPT-time w.r.t.\ mixed neighborhood diversity whether~$G$
  is $k$-colorable.
\end{proof}

Furthermore, while \MixedColoring{} is \paraNP-hard w.r.t.\ neighborhood diversity, it becomes \FPT{} if we use the chromatic number as an additional parameter.

\begin{theorem}
  \label{thm:fpt-ndu-chrom}
  \MixedColoring{} is \FPT{} parameterized by neighborhood diversity plus chromatic number.
\end{theorem}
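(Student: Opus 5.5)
The plan is a direct branching argument combined with a greedy step, which avoids Courcelle's theorem. The reason is that \cref{thm:logic-fpt-cw-k} needs mixed cliquewidth, and $\ndu$ does not bound $\cwm$ (\cref{prop:cwm-ndu-incomp}). Let $k$ be the number of colors and $\mathcal{T}$ the neighborhood partition of the underlying undirected graph, with $\abs{\mathcal{T}}=\ndu$. I treat each type as a clique type or an independent type. For $\abs{T}\ge 2$ this is forced. For $\abs{T}=1$ I declare it a clique type, so a singleton is constrained like a clique of size one. If some clique type has more than $k$ vertices, the underlying graph already needs more than $k$ colors, so we reject by \cref{prop:chi-lower-bounds}. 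Hence the clique types together contain at most $k\cdot\ndu$ vertices.

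\textbf{Step 1 (guessing).} First I guess a color set $S_T\subseteq[k]$ for every type $T$, which gives at most $2^{k\cdot\ndu}$ choices. We require $S_T\cap S_{T'}=\emptyset$ whenever $T$ and $T'$ are adjacent, since then they form a complete bipartite pair in the underlying graph. Second, I guess an injective coloring of every clique type $T$ into $S_T$, which gives at most $k^{k\cdot\ndu}$ choices. Any proper $k$-coloring $c$ is captured by some guess, namely $S_T=c(T)$ together with the restriction of $c$ to the clique types. Conversely, fix a guess and take any coloring in which every vertex $v$ gets a color in $S_{T(v)}$ and clique-type vertices get their guessed colors. Such a coloring automatically satisfies every edge. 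Edges inside clique types are fine by injectivity. Every edge between distinct types joins adjacent types, whose color sets are disjoint. Independent types contain no internal edges. So only the arc constraints $c(u)<c(v)$ remain to be checked.

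\textbf{Step 2 (the greedy check).} For a fixed guess, every vertex $v$ now has an allowed color set: $D(v)=S_{T(v)}$ for vertices of independent types, and $D(v)=\set{\text{guessed color}}$ for vertices of clique types. We must decide whether some choice $c(v)\in D(v)$ satisfies $c(u)<c(v)$ for all arcs $(u,v)$. The arcs form a DAG, so I process the vertices in topological order. Each $v$ receives the least element of $D(v)$ that is larger than the colors of all its in-neighbors, and we reject this guess if no such element exists. Finally, we verify every arc.

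The main point to argue is that this greedy choice is \emph{optimal}. The claim is that if any feasible assignment $c'$ exists, then the greedy assignment $g$ satisfies $g(v)\le c'(v)$ for all $v$. The proof is by induction along the topological order. Assume $g(u)\le c'(u)$ for every in-neighbor $u$ of $v$. Then $c'(v)$ is an element of $D(v)$ exceeding all $g(u)$, so greedy finds an element and picks one no larger than $c'(v)$. Hence greedy never fails when $c'$ exists, and all arcs are satisfied by construction. The lower-closed monotone nature of the constraints makes this a least-fixpoint argument. Single-element domains on clique-type vertices fit the argument without any change, including out-arcs that point into fixed vertices.

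\textbf{Runtime.} Computing the types takes polynomial time, and each guess is checked in polynomial time. The total is $2^{k\cdot\ndu}\cdot k^{k\cdot\ndu}\cdot n^{\Oh(1)}$. We run this for $k=1,2,\dots$ until success, using the remark that an algorithm for $k$-\MixedColoring{} yields one parameterized by $\chi$. This gives \FPT{} time in $\ndu+\chi$.
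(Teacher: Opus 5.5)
Your proof is correct, but it takes a different route from the paper. The paper's algorithm branches on a Christofides-type structural lemma. Some optimal coloring has as its first color class a maximal independent set of the inrank-$0$ vertices. The number of such sets is at most $(\omega(G)+1)^{\ndu(G)}$, because each independent type is taken entirely or not at all and each clique type contributes at most one vertex. Recursing to depth $k$ gives $\Oh^*((\omega+1)^{\ndu\cdot k})$.

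You instead separate the two kinds of constraints. Guessing the sets $S_T$ (pairwise disjoint for adjacent types) together with injective colorings of the clique types, which have at most $k$ vertices each, absorbs every edge constraint into per-vertex domains $D(v)$. This works because types are modules of the underlying graph. The remaining arc constraints over unary domains are monotone, so your topological-order greedy computes the pointwise least feasible assignment and decides feasibility in polynomial time.

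Your route needs no lemma about the shape of optimal colorings. That matters here, since the plain Christofides property fails for mixed graphs, which is why the paper has to restrict to inrank $0$. Your argument also carries over directly to variants with extra per-vertex restrictions such as lists: intersect $D(v)$ with $L(v)$ and the least-fixpoint argument is unchanged. The cost is a slightly weaker bound, $(2k)^{k\cdot\ndu}\cdot n^{\Oh(1)}$ against the paper's $(\omega+1)^{k\cdot\ndu}\cdot n^{\Oh(1)}$ with $\omega\le k$, although both are $2^{\Oh(k\cdot\ndu\log k)}\cdot n^{\Oh(1)}$. The paper's approach also yields an explicit recursive formula for $\chi$ as a by-product.
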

\begin{proof}
  \begin{figure}[tbp]
    \centering
    \includegraphics{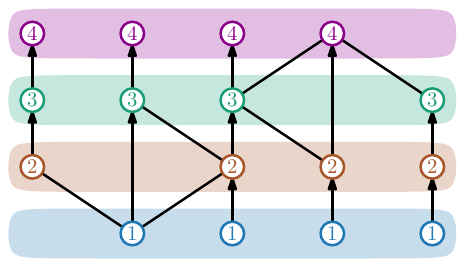}
    \caption{A uniquely $4$-colorable mixed graph where no color class is a maximal independent set.}
    \label{fig:mis-counterexample}
  \end{figure}

  It was shown by Christofides~\cite{Christofides71AlgoChromNum} that there exists an optimal coloring of an (undirected) graph where one color class forms a maximal independent set. However, this property does not hold for mixed graphs; see \cref{fig:mis-counterexample}. Fortunately, we can strengthen the property by only considering maximal independent sets of vertices with inrank~$0$, i.e., vertices without incoming arcs, resulting in the following formula for the chromatic number.

  \begin{restatable}[\restateref{claim:mis0-coloring-property}]{claim}{MISZColoringProp}
    \label{claim:mis0-coloring-property}
    For every (non-empty) mixed graph~$G$ there exists an optimal coloring of~$G$ where the first color class forms a maximal independent set in the subgraph induced by the vertices of inrank~$0$. 
    Thus, for every mixed graph~$G$ it holds that~$\chi(G)=0$ if~$G$ is empty and~$\chi(G)=\min_{I}\chi(G-I)+1$ otherwise, where the minimum is taken over all maximal independent sets~$I$ in the subgraph induced by the vertices of inrank~$0$.
  \end{restatable}

  Furthermore, we can bound the number of maximal independent sets in the subgraph induced by vertices of inrank~$0$ by neighborhood diversity and clique number, as the following shows.

  \begin{restatable}[\restateref{claim:mis0-bound}]{claim}{MISZBound}
    \label{claim:mis0-bound}
    For every (non-empty) mixed graph~$G$ it holds that the number of maximal independent sets in the subgraph induced by the vertices of inrank~$0$ is at most~$(\omega(G)+1)^{\ndu(G)}$.
  \end{restatable}

  We implement the recursive formula for the chromatic number as a branching algorithm for \MixedColoring{} as follows: Given a mixed graph~$G$ and a number of colors~$k$, we branch on all maximal independent sets~$I$ in the subgraph induced by the vertices of inrank~$0$ and recursively check whether~$G-I$ is $(k-1)$-colorable until we either reach the empty graph or run out of colors. Thus, the recursion depth is bounded by~$k$. As in each step we branch into at most~$(\omega(G)+1)^{\ndu(G)}$ subproblems, this results in an overall runtime of~$\Oh^*((\omega(G)+1)^{\ndu(G)\cdot k})$, which is \FPT{} w.r.t.\ neighborhood diversity plus chromatic number, as~$\chi(G)\geq \omega(G)$.
\end{proof}

\section{Open Problems}
\label{sec:conclusion}
While we defined mixed neighborhood diversity and mixed cliquewidth only for mixed graphs without directed cycles, these parameters can also be defined for mixed graphs with directed cycles. For mixed cliquewidth the definition remains the same, while for mixed neighborhood diversity the definition can be adapted by integrating the restrictions for opposite arcs introduced by Fernau et al.~\cite{FernauFMPN25ParaPathPartitions}. The relationships of these more general parameters should be analogous; most proofs can be generalized directly since they do not depend on the requirement that mixed graphs do not contain directed cycles.
We remark that it seems likely that Courcelle's theorem extends to
(general) mixed cliquewidth, which would facilitate its application to
other problems on mixed graphs.  
Further, we think that it would be interesting
to investigate how graph parameters and problem complexities change if
we remove all transitive arcs from input graphs.  
Preliminary computational experiments indicate that \MixedColoring{} algorithms benefit from such a data reduction step.
Lastly, while we explored the parameterized complexity of \MixedColoring{},
exact (non-parameterized) algorithms for \MixedColoring{} remain
largely unexplored (except for ILP-based approaches).

\bibliography{paper}

\clearpage
\appendix

\section{Parameters}
\label{appx:parameters}

The following parameters are defined for undirected graphs, and applied to mixed graphs by taking the parameter value of the underlying undirected graph.

\begin{definition}[Vertex Cover]
  A \emph{vertex cover} of a graph~$G$ is a set of vertices~$C$, such that every edge~$\set{u,v}\in E(G)$ has at least one endpoint in~$C$. The \emph{vertex cover (number)} of~$G$, denoted by~\emph{\vc(G)}, is the size of the smallest vertex cover of~$G$.
\end{definition}

\begin{definition}[Treedepth]
  The \emph{treedepth} of a graph~$G$, denoted by~$\emph{\td(G)}$, is the minimum height of a rooted forest such that for every edge~$\set{u,v}\in E(G)$, the vertices~$u$ and~$v$ have an ancestor-descendant relationship in the forest. The \emph{height} of a rooted forest is the maximum number of vertices on a path from a leaf to the root.
\end{definition}

\begin{definition}[Feedback Vertex Set]
  A \emph{feedback vertex set} of a graph~$G$ is a set of vertices~$F$, such that~$G[V(G)-F]$ contains no cycle. The \emph{feedback vertex set (number)}, denoted by~$\emph{\fvs(G)}$, is the size of the smallest feedback vertex set of~$G$.
\end{definition}

\begin{definition}[Treewidth]
  A \emph{tree-decomposition} of a graph~$G$ is a pair~$\mathcal{T}=(T,\mathcal{B})$ with $\mathcal{B}=\set{B_t}_{t\in{V(T)}}$, where~$T$ is a tree whose every node~$t$ is assigned a subset of vertices~$B_t$, called \emph{bag}, such that the following conditions hold:
  \begin{itemize}
      \item $\bigcup_{t\in V(T)}B_t=V(G)$, i.e., each vertex is in at least one bag.
      \item For every edge~$\set{u,v}\in E(G)$ there is a node~$t\in V(T)$ such that~$u,v\in B_t$.
      \item For every vertex~$v\in V(G)$, the set of nodes containing~$v$,~$\set{t\in V(T)\mid v\in B_t}$, induces a connected subtree of~$T$.
  \end{itemize}
  The \emph{width of a tree-decomposition}~$\mathcal{T}=(T,\mathcal{B})$ is~$\max_{t\in V(T)}\abs{B_t}-1$, i.e., the maximum size of its bags minus~$1$. The \emph{treewidth} of a graph~$G$, denoted by~$\emph{\tw(G)}$, is the minimum possible width of a tree-decomposition of~$G$.
\end{definition}

\begin{definition}[Pathwidth]
  A \emph{path-decomposition} is a special case of a tree-decomposition $\mathcal{T}=(T,\mathcal{B})$, where we restrict~$T$ to be a path. The \emph{pathwidth} of a graph~$G$, denoted by~$\emph{\pw(G)}$, is the minimum possible width of a path-decomposition of~$G$.
\end{definition}

The following parameters are defined for mixed graphs without directed cycles. As they are generalizations of the parameters for undirected graphs, we only provide the definitions for mixed graphs.

\begin{definition}[Mixed Cliquewidth]
  The \emph{mixed cliquewidth} of a mixed graph~$G$, denoted by~$\emph{\cw(G)}$, is the minimum number of distinct labels needed to construct~$G$ using the following operations:
  \begin{description}
    \item[\emph{Introduce:}] creation of a new vertex with label~$i$, denoted~$\emph{i}$ 
    \item[\emph{Union:}] disjoint union of two graphs, denoted~\emph{$\cwunion$}
    \item[\emph{Relabel:}] renaming all labels~$i$ to~$j$, denoted by~\emph{$\cwrelabel_{i\to j}$}
    \item[\emph{Edge:}] adding an (undirected) edge between every vertex with label~$i$ and every vertex with label~$j$, for~$i\neq j$, denoted~\emph{$\cwedge_{i,j}$}
    \item[\emph{Arc:}] adding a (directed) arc from every vertex with label~$i$ to every vertex with label~$j$, for~$i\neq j$, denoted~\emph{$\cwarc_{i,j}$}
  \end{description}
\end{definition}

\begin{definition}[Mixed Neighborhood Diversity]
  Given a mixed graph~$G$, we say that two vertices~$u,v$ of~$G$ are of the same \emph{type}~$\sim_\ndm$ if and only if it holds that ${\inneigh(u)=\inneigh(v)}$, $\outneigh(u)=\outneigh(v)$, and~$\neigh(u)\setminus\set{v}=\neigh(v)\setminus\set{u}$. 
  The equivalence classes \emph{(types)} under~$\sim_\ndm$ form the \emph{mixed neighborhood partition} of~$G$. The \emph{mixed neighborhood diversity} of~$G$, denoted by~$\emph{\ndm(G)}$, is the number of equivalence classes under~$\sim_\ndm$.
\end{definition}

\newpage
\section{Full Proofs}

\CWMtoTW*
\label{prop:cwm-tw*}
\begin{proof}
  As shown by Courcelle and Olariu~\cite{CourcelleOlariu00BoundsCliqueWidth}, it holds for every directed graph~$G$ that $\cwd(G)\leq 2^{\tw(G)+1}+1$. Therefore, it suffices to show the relations between mixed cliquewidth and directed cliquewidth.

  Let~$G$ be a mixed graph with a corresponding mixed $\ell$-expression constructing~$G$. We can replace each edge operation~$\eta_{i,j}$ by two arc operations~$\cwarc_{i,j}$ and~$\cwarc_{j,i}$, thus obtaining a directed $\ell$-expression constructing~$D(G)$. Thus, it holds that~$\cwd(G)\leq \cwm(G)$.

  To show that~$\cwm(G)\leq 2\cdot \cwd(G)$, we show how to transform a directed $\ell$-expression constructing~$D(G)$ into a mixed $2\ell$-expression constructing~$G$. 
  First, we have to transform the given expression into a suitable form, which allows us to easily characterize which arc operations have to be replaced by edge operations. 
  We call a directed expression \emph{irredundant}, if for any arc operation~$\cwarc_{i,j}$ there are no arcs from~$i$ to~$j$ already present in the graph at the time this operation is performed. 
  We call an irredundant expression \emph{canonical}, if between any two union operations the arc operations are performed before the relabel operations, i.e., the expressions are of the form $\cwunion(\cwrelabel\dots\cwrelabel(\cwarc\dots\cwarc(\ldots\cwunion\dots)))$.
  \begin{claim}
    Given a directed $\ell$-expression constructing a directed graph~$\vec{G}$, there exists a canonical irredundant directed $\ell$-expression constructing~$\vec{G}$.
  \end{claim}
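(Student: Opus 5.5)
I will transform the given directed $\ell$-expression in two phases: first make it irredundant, then push arc operations below relabelings within each union-free segment. Both phases keep the label set $[\ell]$ and the constructed graph $\vec{G}$ unchanged.

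\textbf{Irredundancy.} I work bottom-up on the expression tree. For an arc operation $\cwarc_{i,j}$ applied to a labeled graph $H$, let $X_i,X_j$ be the vertex sets of labels $i$ and $j$ in $H$. Since $\cwarc_{i,j}$ adds all arcs $X_i\times X_j$, any arc from $X_i$ to $X_j$ already in $H$ was created by some earlier arc operation inside the subexpression. Deleting an earlier operation is not safe in general, because it may also create arcs to other vertices. So I use a different characterization: $\vec{G}$ is the union of the arc sets produced by all operations, and adding an existing arc has no effect. Hence I may delete any arc operation $\cwarc_{i,j}$ whose full set $X_i\times X_j$ already exists, and only such deletions are safe.

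Getting full irredundancy needs more care. I will rely on a structural fact about label classes: if two vertices carry the same label at a node of the expression tree, they keep the same label forever afterwards and receive identical treatment from all later operations. Suppose some pair $(x,y)\in X_i\times X_j$ already has an arc at the time of $\cwarc_{i,j}$. I expect to argue that it is then enough to handle a standard normal form: one in which each $\cwarc_{i,j}$ is applied directly above a union or another arc/relabel block, and in which $i$ and $j$ each have vertices on both sides of that union. This mirrors the known normal form for undirected cliquewidth expressions due to Courcelle--Olariu / Espelage et al.

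A simpler route that I will try first is to replace irredundancy by the weaker property that the claim is really used for: at each $\cwarc_{i,j}$, either all pairs of $X_i\times X_j$ are already arcs (so the operation can be dropped), or none are. To get this, I will push each arc operation as far down as possible. After a union $H_1\cwunion H_2$ followed by $\cwarc_{i,j}$, arcs among $X_i\times X_j$ inside $H_1$ would come from earlier $\cwarc$ operations in $H_1$ on the same label classes. I would then argue that those earlier arcs cover entire class products, so redundancy is all-or-nothing on each side of the union. Proving this dichotomy is the main obstacle. If it fails, I will fall back to splitting the operation into $\cwarc_{i,j}$ applied separately to cross pairs, which requires adding arc operations directly above the union with temporarily duplicated labels. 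That fallback would break the $\ell$-label bound, so I hope to avoid it.

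\textbf{Canonical form.} Within a maximal union-free chain of unary operations, I commute an arc operation past a relabeling using the rule
\[
\cwarc_{a,b}\,\cwrelabel_{i\to j} = \cwrelabel_{i\to j}\,\cwarc_{a',b'}\ (\text{possibly several}),
\]
where label $j$ after the relabeling corresponds to $\{i,j\}$ before it. So $\cwarc_{j,b}$ becomes $\cwarc_{i,b}\cwarc_{j,b}$, and a term $\cwarc_{i,b}$ on the left side, with $i$ already empty after the relabel, is simply deleted. I then repeat this until every relabeling sits above every arc operation in the chain.

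These rewrites preserve the constructed graph and use no new labels. They may, however, create redundant operations, for example when $\cwarc_{i,j}$ appears with $i$ and $j$ merged, which yields $\cwarc_{j,j}$ and is disallowed. I will drop such operations, since they add no arcs in the original expression either. Finally, I will re-run the irredundancy deletion step. Deletion does not reintroduce relabelings below arcs, so the result is both canonical and irredundant.
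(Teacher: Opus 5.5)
\textbf{The irredundancy step is missing.} Your second phase is sound. Commuting each $\cwarc$ below the $\cwrelabel$'s of its union-free chain, for example $\cwarc_{j,b}\mapsto\cwarc_{i,b}\cwarc_{j,b}$, is exactly the commutation the paper cites from Courcelle and Olariu. Deleting arc operations afterwards cannot destroy canonicity. Your worry about $\cwarc_{j,j}$ is moot: after $\cwrelabel_{i\to j}$ label $i$ is empty, so an operation on $\{i,j\}$ simply disappears.

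The gap is the first phase: you never produce an irredundant $\ell$-expression. You assert that deleting an earlier arc operation is unsafe because it ``may also create arcs to other vertices''. You therefore only allow deleting operations whose whole product $X_i\times X_j$ is already present. That rule cannot remove partial redundancy, and the all-or-nothing dichotomy you hope for is false. In $\cwarc_{1,2}\bigl(\cwarc_{1,2}(1\cwunion 2)\cwunion 1\bigr)$, the outer operation sees the arc from the first $1$-vertex to the $2$-vertex but no arc from the second $1$-vertex. It already sits directly above the union, so pushing it down changes nothing. Your fallback with duplicated labels exceeds $\ell$ labels, as you note yourself, so the claim is not established.

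\textbf{The missing idea.} It is the structural fact you state yourself, applied to the earlier operation rather than the later one. Suppose $\cwarc_{i,j}$ is applied at node $t$ and an arc $(x,y)$ already exists with $x$ labeled $i$ and $y$ labeled $j$. That arc was created by some $\cwarc_{i',j'}$ at a descendant node $t'$, where $x$ had label $i'$ and $y$ had label $j'$. All vertices of the graph built at $t'$ undergo the same relabelings on the way up to $t$. So every $i'$-vertex there carries label $i$ at $t$, and every $j'$-vertex carries label $j$. Hence every arc the earlier operation creates is created again by $\cwarc_{i,j}$. The earlier occurrence can be deleted without changing $\vec{G}$ or the number of labels; in the example above, delete the inner $\cwarc_{1,2}$.

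A deletion only removes arcs from intermediate graphs, so it creates no new redundancy and moves no relabeling. The number of operations strictly decreases, so iterating terminates in an irredundant $\ell$-expression. Doing this after your commutation phase keeps it canonical. This deletion argument is the standard route to the Courcelle--Olariu irredundancy result that the paper invokes.
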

  \begin{claimproof}
    Courcelle and Olariu~\cite{CourcelleOlariu00BoundsCliqueWidth} showed by induction how to obtain an irredundant expression of the same width from any given expression. 
    Further, they showed that we can push arc operations downwards past relabel operations, i.e., we can transform an expression with a sequence~$\cwarc(\cwrelabel(\dots))$ into an equivalent expression with sequence~$\cwrelabel(\cwarc(\dots))$. Applying this to an irredundant expression, we obtain an equivalent canonical irredundant expression.
  \end{claimproof}
  We say that an expression is \emph{separated} if for every union operation~$G_1\oplus G_2$ it holds that the labels of~$G_1$ and~$G_2$ are disjoint at the time of the union operation. We say that a separated expression is \emph{canonical} if it holds for every union operation, that for each of the two subgraphs, no more arcs added between vertices of each subgraph after the union operation, i.e., all arcs contained in each induced subgraph are added before the union operation.
  \begin{claim}
    Given a canonical irredundant directed $\ell$-expression constructing a directed graph~$\vec{G}$, there exists a canonical separated irredundant $(2\ell)$-expression constructing~$\vec{G}$.
  \end{claim}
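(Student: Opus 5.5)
We build the new expression bottom-up along the parse tree of the given canonical irredundant $\ell$-expression, using $2\ell$ labels: a ``left'' copy $i$ and a ``right'' copy $i'$ of each original label $i\in[\ell]$. At each union node $H_1\cwunion H_2$ we want the (recursively constructed) expression for $H_2$ to use only primed labels and the one for $H_1$ only unprimed labels. Then the union is separated, and afterwards we fold the primed labels back onto the unprimed ones.

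Concretely, the construction proceeds by induction on the expression. Vertex creations $i$ are kept as they are. For a union node whose two children were constructed by (inductively obtained) canonical separated irredundant expressions over the unprimed labels $[\ell]$, we first apply $\cwrelabel_{i\to i'}$ for every $i\in[\ell]$ to the right child. Afterwards both operands have disjoint label sets, so we may take the union. The original expression is canonical, so directly above this union come a block of arc operations and then a block of relabel operations. We translate each arc operation $\cwarc_{i,j}$ into the four operations $\cwarc_{i,j}$, $\cwarc_{i',j'}$, $\cwarc_{i,j'}$, $\cwarc_{i',j}$, since vertices originally carrying label $i$ now carry $i$ or $i'$. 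After the arc block, we merge $\cwrelabel_{i'\to i}$ for all $i$ and then apply the original relabel operations.

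Two further points need checking.

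First, the construction must preserve canonicity in the separated sense: no arcs may be added inside a single operand after the union. Since the original expression is irredundant, arcs $\cwarc_{i,j}$ applied after the union that would connect two vertices of the same side would already be present in that side, contradicting irredundance, unless they were absent there. This is where I expect the subtlety. An arc operation right after a union may create new arcs inside $H_1$ (between labels $i$ and $j$ both occurring in $H_1$) that were not yet present. I would handle this by pushing such operations down into the respective child: the portion of $\cwarc_{i,j}$ acting within $H_1$ is instead performed in $H_1$ just before the union. This is valid because labels do not change between that point and the union.

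Second, label preservation must be verified: the constructed subexpression for each subtree produces the same labeled graph, with identical labels, as the original, so correctness follows by induction, and the width is at most $2\ell$.

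The main obstacle is the interplay of the push-down step with irredundance. After moving arc operations into a child, we must re-establish that the child's expression stays canonical and irredundant. I would do this with an induction that repeatedly applies the previous claim (canonicalization without increasing width) to each child before processing its parent.
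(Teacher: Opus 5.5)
Your route is essentially the paper's. You separate the expression by doubling the labels, and your priming construction is essentially the Courcelle--Olariu separation that the paper cites. You then move every arc operation that acts inside one operand of a union below that union. The separation step, the four-fold translation of $\cwarc_{i,j}$, and the width count are fine.

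The gap is in the repair step. The previous claim only deletes redundant arc operations and commutes arc operations below relabel operations. It never moves an arc operation below a union, and it yields canonical irredundant expressions, not canonical \emph{separated} ones. An operation you push into a child must, for canonicity, be commuted below the child's relabel block. It then sits directly above the child's own union, where it can again act inside a single operand. Your induction is bottom-up, so that child is already finished, and re-applying the previous claim leaves the violation in place. For instance, take the canonical irredundant expression $\cwarc_{1,2}\bigl(((1\cwunion 2)\cwunion 3)\cwunion 4\bigr)$. Your procedure moves $\cwarc_{1,2}$ into the left child, where it ends up directly above $s^{*}\cwunion\cwrelabel_{3\to 3'}(3)$ with $s^{*}=\cwrelabel_{2'\to 2}(1\cwunion\cwrelabel_{2\to 2'}(2))$. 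Its only arc still lies inside the operand $s^{*}$, and the previous claim changes nothing.

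The push-down has to cascade, and there are two ways to do it.
\begin{itemize}
\item Process the unions top-down. Each time, move the within-operand part of a post-union arc block through the operand's relabel block to the end of the operand's own post-union arc block, and continue from there. The relabel block includes the priming relabels, so $\cwarc_{i',j'}$ becomes $\cwarc_{i,j}$.
\item Strengthen the induction hypothesis to: for every subexpression $t$ and every block $\beta$ of arc operations with $\beta(t)$ irredundant, the labeled graph $\beta(\mathrm{val}(t))$ has a canonical separated irredundant $2\ell$-expression.
\end{itemize}
Each move takes an operation strictly deeper, and an operation reaching a single vertex adds nothing and can be deleted, so this terminates. The paper's one-sentence push-down implicitly relies on the same recursion.

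Finally, irredundance, which you single out as the main obstacle, comes for free. A moved or translated operation adds a subset of the arcs the original operation added, at a moment when no arcs between the corresponding label classes exist yet. Arc operations never change labels, so separatedness is preserved too.
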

  \begin{claimproof}
    Courcelle and Olariu~\cite{CourcelleOlariu00BoundsCliqueWidth} showed by induction how to obtain a separated expression using at most~$2\ell$ labels from any given expression using~$\ell$ labels. Using the previous claim, we can transform this expression into a canonical irredundant expression while maintaining separability. Since our expression is separated, it holds for each arc operation w.r.t.\ the previous union operation~$G_1\oplus G_2$, that it either adds arcs only in~$G_1$,~$G_2$, or between~$G_1$ and~$G_2$. Thus, we can push all arc operations that work exclusively on~$G_1$ and~$G_2$ downwards past the union operation and the subsequent relabel operations, yielding a canonical separated irredundant expression.
  \end{claimproof}

  Therefore, we can assume w.l.o.g.\ that we are given a canonical separated irredundant directed $2\ell$-expression constructing the directed graph corresponding to~$G$. Note that this expression already adds all the arcs of~$G$. Thus, it remains to deal with edges of~$G$. Consider the case when the directed expression adds an arc~$(u,v)$ instead of an edge~$\set{u,v}$ of~$G$. Since the expression is canonical, we know that the opposite arc~$(v,u)$ must be added before the next union or relabel operation occurs. Thus, both arc operations must affect the same set of vertices and can therefore be combined into one edge operation~$\cwedge_{\ell(u),\ell(v)}$, where~$\ell(u)$ and~$\ell(v)$ are the labels of~$u$ and~$v$ at this point of the construction. As the replaced arc operations cannot add any arcs of~$G$, due to there being no opposite arcs in~$G$, this results in a mixed $2\ell$-expression constructing~$G$.

  A mixed graph~$\tilde{G}$ such that~$\cwm(\tilde{G})\neq \cwd(\tilde{G})$ is the partial orientation of the star~$K_{1,2}$,
  where the center vertex is connected to one leaf by an edge and to the other by an outgoing arc.
  To construct this graph with a mixed expression, we need at least three labels, as we cannot add edges or arcs between vertices of the same label, and we need to be able to distinguish each vertex. However, the directed variant of this graph has directed cliquewidth~$2$. We can first introduce the center vertex with label~$1$ as well as the leaf with the incident edge with label~$2$. We then add the arc going into the center vertex. Following this, we introduce the other leaf with label~$2$ and union both subgraphs. 
  Lastly, we add both arcs going from the center vertex to the leaves with one arc operation, as this operation does not need to distinguish between the two leaves.
\end{proof}

\NDMEquivRelation*
\label{lem:ndm-equiv-relation*}
\begin{proof}
  Clearly,~$\sim_\ndm$ is reflexive and symmetric. To show that it is transitive, let~$u,v,w$ be vertices of~$G$ such that~$u\sim_\ndm v$ and~$v\sim_\ndm w$. The directed neighborhoods are simple, as~$\inneigh(u)=\inneigh(v)=\inneigh(w)$ and~$\outneigh(u)=\outneigh(v)=\outneigh(w)$ follow directly. Note that, as we do not allow loops and opposite arcs, there cannot be any arc between~$u$,~$v$, and~$w$.
  For the undirected neighborhoods, it follows from the definition that ${\uneigh(u)\setminus\set{v,w}=\uneigh(v)\setminus\set{u,w}=\uneigh(w)\setminus\set{u,v}}$. Thus, it remains to show that if~$v$ is adjacent to one of~$u,w$ it is also adjacent to the other.
  If~$\set{u,v}\in E(G)$, it holds that~$u\in\uneigh(v)$. Since~$\uneigh(v)\setminus\set{w}=\uneigh(w)\setminus\set{v}$, it follows that~$u\in\uneigh(w)$ and thus that~$\set{u,w}\in E(G)$. It follows that~$w\in\uneigh(u)$ and thus that~$w\in\uneigh(v)$, yielding~$\set{v,w}\in E(G)$. It follows that $\uneigh(u)\setminus\set{v,w}=\uneigh(w)\setminus\set{u,v}$ is equivalent to~$\uneigh(u)\setminus\set{w}=\uneigh(w)\setminus\set{u}$, and thus~$u\sim_\ndm w$. 
  Therefore, the relation~$\sim_\ndm$ is transitive and vertices of the same type induce a clique or an independent set.
\end{proof}

\NDMtoCWM*
\label{prop:ndm-cwm*}
\begin{proof}
  We show how, given a mixed graph~$G$ with types~$\set{V_1,\dots,V_w}$, we can construct a mixed $(w+1)$-expression constructing~$G$. The labels~$1$ to~$w$ are used to construct the~$w$ subgraphs induced by the different types, while the label~$w+1$ is used as an auxiliary label to construct cliques induced by types. If a type~$V_i$ induces an independent set, we can simply introduce all vertices of~$V_i$ with label~$i$. 
  If a type~$V_i$ induces a clique, we create~$G[V_i]$ by repeating the following steps for each vertex in~$V_i$: introduce the vertex with label~$w+1$, connect the new vertex with all previously added vertices of type~$i$ (using the operation~$\cwedge_{i,w+1}$), and lastly relabeling the new vertex to~$i$ (using the operation~$\cwrelabel_{{w+1}\to i}$).
  After constructing all subgraphs induced by the types, we can union them and add all remaining edges and arcs between types using the respective operations, as at this point vertices of the same type have the same label.
\end{proof}

\CWMtoNDUIncomp*
\label{prop:cwm-ndu-incomp*}
\begin{proof}
  As $\cwu$ does not bound $\ndu$ neither does $\cwm$. Further, the graphs~$G_\ell$ constructed in the proof of \cref{prop:cwm-cwu} have as underlying undirected graph a complete graph, thus having~$\ndu=1$, but~$\cwm(G_\ell)\geq \ell+1$.
\end{proof}

\ParamTCUpperBound*
\label{prop:param-tc-upper-bound*}
\begin{proof}
  For any of these parameters~$\alpha$, it is well known and easy to see that their values do not increase on subgraphs. 
  As the underlying undirected graph of~$G$ is a subgraph of the underlying undirected graph of~$G^+$, it follows that~$\alpha(G)\leq \alpha^+(G)$.

  To show that the parameters do not bound their transitive closure variants, consider the family of graphs~$(G_\ell)_{\ell\geq 1}$ where~$G_\ell$ consists of vertices~$u_1,\dots,u_\ell,v,w_1,\dots,w_\ell$ and arcs~$(u_i,v),(v,w_i)$ for~$i\in[\ell]$. Thus, the underlying undirected graph of~$G_\ell$ is the star~$K_{1,2\ell}$ with center~$v$. It holds that~$\vc(G_\ell)=1$ and, since vertex cover bounds all of these parameters, all the other parameters are also constant. However, in the underlying undirected graph of the transitive closure, the subgraph induced by the $u$'s and $w$'s is the complete bipartite graph~$K_{\ell,\ell}$, which has treewidth at least~$\ell$. Therefore, as treewidth is a lower bound for all of these parameters, it follows that all these parameters are at least~$\ell$ on the transitive closure.
\end{proof}

\CWMDirCompGraph*
\label{claim:cwm-dir-comp-graph*}
\begin{claimproof}
  An acylic tournament graph contains a directed Hamiltonian path~$\angles{v_1,\dots,v_\ell}$ with each vertex~$v_i$ having incoming arcs from all previous vertices~$v_j$ with~$j<i$. In the following, we present a mixed $2$-expression that incrementally constructs the graph. 
  First, the vertex~$v_1$ is introduced with label~$1$. 
  For each subsequent vertex~$v_i$, we proceed as follows: We first introduce~$v_i$ with label~$2$. Then, we connect~$v_i$ to all preceding vertices, which are all labeled~$1$, using the operation~$\cwarc_{1,2}$. Finally, we relabel~$v_i$ to~$1$.
  This results in a mixed $2$-expression constructing the acylic tournament graph.
\end{claimproof}

\NDMCWMUpperBoundTC*
\label{prop:ndm/cwm-upper-bound-tc*}
\begin{proof}
  We first show the claim for neighborhood diversity, by showing that two vertices of the same type in~$G$ remain of the same type in~$G^+$. Suppose that two vertices~$u,v$ are of the same type in~$G$, but not in~$G^+$. As the neighborhood of a vertex can only increase in~$G^+$ compared to~$G$ by the addition of arcs, there must be a vertex~$w$ which is adjacent to~$u$ but not to~$v$ in~$G^+$. Assume w.l.o.g.\ that~$w\in\outneigh_{G^+}(u)$. Therefore, there is a directed path~$\angles{u,x,\dots,w}$ in~$G$. As~$u$ and~$v$ are of the same type in~$G$, it follows that there is also the directed path~$\angles{v,x,\dots,w}$ in~$G$, and thus~$w\in\outneigh_{G^+}(v)$, a contradiction.

  For cliquewidth, we show in the following how to adapt a given mixed $\ell$-expression constructing~$G$ into a mixed $(4^\ell\cdot \ell)$-expression constructing~$G^+$. 
  We proceed inductively on the given mixed expression. We label each vertex~$v$ using a new composite label~$(i,I,O)$, where~$i$ is the old label and~$I$ and~$O$ are subsets of the original labels. We set~$I$ and~$O$ such that it holds before every union operation that~$I$ is the set of original labels (of vertices) from which there is a directed path to~$v$ and~$O$ is the set of original labels (of vertices) to which there is a directed path from~$v$ (with the original label~$i$ of~$v$ included in both~$I$ and~$O$). 
  Therefore, when introducing a vertex with label~$i$, we instead introduce said vertex with label~$(i,\set{i},\set{i})$. 
  Any union operation remains unchanged, as it does not modify any labels.
  Any relabel operation~$\cwrelabel_{i\to j}$ is replaced by the relabel operation~$\cwrelabel_{(i,I,O)\to (j,I,O)}$ for each possible choice of~$I$ and~$O$. Further, we also have to update all labels where~$i$ is contained in the second and third component. Thus, we replace each label~$(x,I,O)$ with~$i\in I$ using the relabel operation~$\cwrelabel_{(x,I,O)\to (x,I\setminus\set{i}\cup\set{j},O)}$, and each label~$(x,I,O)$ with~$i\in O$ using the relabel operation~$\cwrelabel_{(x,I,O)\to (x,I,O\setminus\set{i}\cup\set{j})}$.
  Any edge operation~$\cwedge_{i,j}$ is replaced by the edge operation~$\cwedge_{(i,I,O),(j,I',O')}$ for each possible choice of~$I$,~$O$,~$I'$, and~$O'$.

  Similarly, any arc operation~$\cwarc_{i,j}$ is replaced by the arc operation~$\cwarc_{(i,I,O),(j,I',O')}$ for each possible choice of~$I$,~$O$,~$I'$, and~$O'$ (note that not all of these operations will add arcs, as it may happen that there is no vertex with such a label). As the added arcs create new directed paths, we also have to update the labels accordingly. 
  Further, we have to add the transitive arcs. For each pair of labels~$(x,I,O)$ and~$(y,I',O')$ such that~$i\in O$ and~$j\in I'$, we add the arc operation~$\cwarc_{(x,I,O),(y,I',O')}$, as there is now a directed path from vertices with label~$(x,I,O)$ to vertices with label~$(y,I',O')$.

  Currently, these additional arc operations may add arcs multiple times and result in arcs being parallel to edges. However, as shown by Courcelle and Olariu~\cite{CourcelleOlariu00BoundsCliqueWidth}, this can be avoided by removing redundant arc and edge operations, without increasing the number of labels.
  Thus, we obtain a mixed expression using at most~$4^\ell\cdot \ell$ labels that constructs~$G^+$.
\end{proof}

\ChiUpperBound*
\label{thm:chi-upper-bound*}
\begin{proof}
  Given a mixed graph~$G$ and its layering~$\angles{L_0,L_1,\ldots,L_{\maxrank}}$, we obtain a proper coloring~$c$ as follows: 
  We color each layer~$L_i$ with an optimal coloring of the undirected graph~$G[L_i]$ using colors from the set~$\set{\sum_{j=0}^{i-1}\chi_\mathrm{u}(G[L_j])+1,\ldots,\sum_{j=0}^{i}\chi_\mathrm{u}(G[L_j])}$.
  As each arc is oriented towards the higher layer, this ensures that for every arc~$(u,v)$ (and also every edge~$\set{u,v}$) with~$u\in L_i$ and~$v\in L_j$ that~$c(u)<c(v)$ since~$i<j$. Furthermore, for each edge~$\{u,v\}$ with~$u,v\in L_i$ it holds that~$c(u)\neq c(v)$, as we used a proper coloring of~$G[L_i]$. Thus, $c$ is a proper coloring of~$G$ using~$\sum_{i=0}^{\maxrank}\chi_\mathrm{u}(G[L_i])$ colors.

  To show that this bound is tight, we construct a family of mixed graphs~$(G_{\ell,k})_{\ell,k\geq 1}$ such that~$\maxrank(G_{\ell,k})=\ell$,~$\chi_\mathrm{u}(G_{\ell,k}[L_i])=k$ for each layer~$L_i$, and~$\chi(G_{\ell,k})=(\ell+1)\cdot k$. We achieve this by constructing~$G_{\ell,k}$ from~$\ell+1$ copies of the complete graph~$K_k$ which form its layers, where each vertex in layer~$L_i$ has an outgoing arc to every vertex in layer~$L_{i+1}$ for~$i\in[\ell-1]$. Thus, each layer has chromatic number~$k$, and as the arcs enforce that the color intervals used by each layer are disjoint, the chromatic number of the entire graph is~$(\ell+1)\cdot k$.
\end{proof}

\RankBounds*
\label{prop:rank-bounds*}
\begin{proof}
  The result for~$\ndm^+$ follows directly from \cref{lem:ndm-dir-path} and the fact that~$\maxrank(G)=\maxrank^+(G)$.

  For $\vc$ and $\td$ and~$\tw$, it is well known and easy to see that their value on a subgraph is a lower bound for the value on the entire graph. Further, it holds for a path~$P_\ell$ of length~$\ell$ that~$\vc(P_\ell)=\ceil{\ell/2}$ and~$\td(P_\ell)=\ceil{\log_2(\ell+2)}$. As a mixed graph with maxrank~$\ell$ contains a directed path of length~$\ell$ as a subgraph, the bounds for~$\vc$ and~$\td$ follow.
  Regarding~$\tw^+$, we know that a directed path of length~$\ell$ leads to a clique of size~$\ell+1$ in the underlying undirected graph of the transitive closure. As a clique of size~$\ell+1$ has treewidth~$\ell$, the bound follows.
\end{proof}

\ChiBounds*
\label{cor:chi-bounds*}
\begin{proof}
  It is well known that~$\chi_\mathrm{u}(G)\leq \td(G)-1$,~$\chi_\mathrm{u}(G)\leq \vc(G)+1$, and~$\chi_\mathrm{u}(G)\leq\tw(G)+1$. Together with \cref{thm:chi-upper-bound} and \cref{prop:rank-bounds}, the bounds follow.

  Regarding the second bound:
  We construct a proper $(2\vc(G)+1)$-coloring~$c$ of~$G$ given a vertex cover~$C$ of size~$\vc(G)$ as follows: We first sort the vertices of the cover~$C$ topologically (according to the arcs between them in the transitive closure~$G^+$, i.e., we sort~$G^+[C]$ topologically), resulting in an ordering~$\angles{v_1,\dots,v_{\vc(G)}}$. We then color these vertices with distinct colors, with each vertex~$v_i$ obtaining the color~$2i$. This yields a proper coloring of~$G^+[C]$.
  For any remaining uncolored vertex~$v\in V(G)\setminus C$ it holds that~$v$ only has neighbors in~$C$, and thus that the neighbors of~$v$ are already colored. Let $c^-(v)=\max_{u\in N^-(v)}c(u)$ and~$c^+(v)=\min_{u\in N^+(v)}c(u)$ be the biggest and smallest color of the incoming and outgoing neighbors of~$v$, respectively. In the case that~$v$ has no incoming or no outgoing neighbors we set~$c^-(v)=0$ and~$c^+(v)=2\vc(G)+2$, respectively.     
  It holds that~$c^-(v)<c^+(v)$. Indeed, suppose that~$c^+(v)\geq c^-(v)$. Then there is a~$u^+\in N^+(v)$ with~$c(u^+)=c^+(v)$ and a~$u^-\in N^-(v)$ with~$c(u^-)=c^-(v)$. It follows, that there are arcs~$(u^-,v)$ and~$(v,u^+)$ in~$A(G)$, and thus there is the arc~$(u^-,u^+)\in A(G^+)$. As~$u^-,u^+\in C$, it follows that $c^-(v)=c(u^-)<c(u^+)=c^+(v)$ due to our initial coloring being according to the topological order of~$G^+[C]$, a contradiction to~$c^-(v)\geq c^+(v)$. Thus, it holds that~$c^+(v)-c^-(v)\geq 2$, due to all colors of vertices in~$C$ being even. We therefore set~$c(v)=c^-(v)+1$. Now all incoming neighbors have a smaller color, all outgoing neighbors have a bigger color, and all remaining neighbors (adjacent via edges) are in~$C$ and have thus an even color, which cannot be~$c(v)$, as~$c(v)$ is uneven. Thus, the coloring remains proper. By coloring all vertices in~$V(G)\setminus C$ in this manner, we obtain a proper coloring of~$G$ with~$2\vc(G)+1$ colors.
  See \cref{fig:path-color-vc} for an example of the obtained coloring.

  \begin{figure}[tbp]
      \centering
      \includegraphics{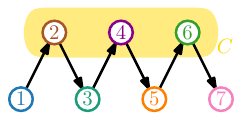}
      \caption{The path~$P_6$ with a smallest vertex cover~$C$ of size~$3$ and resulting (optimal) $7$-coloring.}
      \label{fig:path-color-vc}
  \end{figure}

  It remains to show that the bound is tight. For any~$\ell\in\N$, the directed path~$P_{2\ell}$ of length~$2\ell$ has chromatic number~$2\ell+1$. Furthermore, the smallest vertex cover contains every second vertex and has thus a size of~$\ell$.
\end{proof}

\LogicFPTCWK*
\label{thm:logic-fpt-cw-k*}
\begin{proof}
  In \emph{monadic second-order logic on mixed graphs}, we can construct expressions using the following primitives: 
  \begin{itemize}
    \item variables for vertices $(x,y)$ and sets of vertices $(X)$
    \item binary predicates for equality $(=)$, membership $(\in)$, as well as edge and arc predicates $E(x,y)$ and $A(x,y)$, which are true if there is an edge or arc from vertex~$x$ to vertex~$y$, respectively
    \item standard logic operators: $\lnot,\land,\lor,\rightarrow$
    \item quantifiers: $\forall$ and $\exists$
  \end{itemize}
  \newcommand{\fmpartition}{\mathsf{partition}}
  \newcommand{\fmcheckarcs}{\mathsf{checkArcs}}
  \newcommand{\fmcheckedges}{\mathsf{checkEdges}}
  \newcommand{\fmpropercoloring}{\mathsf{properColoring}}
  \newcommand{\fmcolorable}{\mathsf{colorable}}

  To express \MixedColoring{}, we first construct the formula~$\fmpartition$ that checks if a given collection of vertex sets~$X_1,\dots,X_k$ corresponds to the color classes of a proper coloring. This formula can be expressed as follows~\cite{Barr20Courcelle}:
  \begin{gather*}
    \fmpartition(X_1,\dots,X_k):=\forall x\left[\bigvee_{i=1}^k (x\in X_i)\right]\land\lnot\exists x\left[\bigvee_{i\neq j}^k\left(x\in X_i\land x\in X_j\right)\right]
  \end{gather*}
  We further need to check that no arcs or edges are violated, which we achieve using the following formulas:
  \begin{gather*}
    \fmcheckarcs(X_1,\dots,X_k):=\forall x,y\left[A(x,y)\rightarrow\lnot\left(\bigvee_{i\geq j}^k (x\in X_i\land y\in X_j)\right)\right]\\
    \fmcheckedges(X_1,\dots,X_k):=\forall x,y\left[E(x,y)\rightarrow\lnot\left(\bigvee_{i=1}^k (x\in X_i\land y\in X_i)\right)\right]
  \end{gather*}
  Combining these formulas, we construct a formula that checks if a given collection of vertex sets~$X_1,\dots,X_k$ corresponds to the color classes of a proper mixed coloring:
  \begin{align*}
    \fmpropercoloring(X_1,\dots,X_k):=\fmpartition(X_1,\dots,X_k)&\land\fmcheckarcs(X_1,\dots,X_k)\\
    &\land\fmcheckedges(X_1,\dots,X_k)
  \end{align*}
  By existentially quantifying over the vertex sets, we obtain a formula that checks if the graph is $k$-colorable:
  \begin{gather*}
    \fmcolorable:=\exists X_1,\dots,X_k\colon\fmpropercoloring(X_1,\dots,X_k)
  \end{gather*}

  Using the version of Courcelle's theorem shown by Arnsborg et al.~\cite{ArnborgLS91MixedCourcelle}, we obtain that \MixedColoring{} is \FPT{} parameterized by treewidth plus chromatic number.

  To obtain the result for mixed cliquewidth, we transform a given mixed graph~$G$ into the directed graph~$D(G)$ as introduced before \cref{prop:cwm-tw}. 
  Instead of the predicates~$E(x,y)$ and~$A(x,y)$, we can now only use an arc predicate~$D(x,y)$ that is true if there is an arc from~$x$ to~$y$ in~$D(G)$.
  We therefore have to adapt our expression to work on the transformed graph. Since we only consider mixed graphs without directed cycles, and thus without opposite arcs, it holds that~$D(G)$ contains opposite arcs between a pair of vertices~$x,y$ if and only if there is an edge between~$x$ and~$y$ in~$G$. Thus, we can substitute our predicates as follows: $E(x,y)$ is substituted by $D(x,y)\land D(y,x)$ and $A(x,y)$ is substituted by $D(x,y)\land\lnot D(y,x)$. 
  Applying Courcelle's theorem for directed graphs~\cite{CourcelleEngelfriet12GSaMSOL} yields that \MixedColoring{} is \FPT{} parameterized by the directed cliquewidth of~$G$ plus chromatic number. As mixed and directed cliquewidth are equivalent,~\cref{prop:cwm-tw}, we obtain the desired result.
\end{proof}

\ListColoringWOneVC*
\label{thm:list-coloring-w1-vc*}
\begin{proof}
    The problem \PMulticoloredClique is defined as follows.
    \defdecproblem{\PMulticoloredClique}
    {Graph~$G$ together with an $\ell$-coloring.}
    {Is there an $\ell$-clique containing exactly one vertex of each color?}

    Given an instance~$(G,c)$ of \textsc{MulticoloredClique}, Fellows et al.~\cite{FellowsFLRSST11ComplexColorParaTW} construct an instance $(G',L')$ of \ListColoring{} as follows: 
    For each color class~$V_i$ of~$c$, they create a \emph{color-class-vertex}~$v_i$ in~$G'$ and set its list~$L(v_i)$ to $V_i$.\footnote{The vertices of~$G$ can be treated as colors by numbering them arbitrarily. For simplicity, we use the vertices as colors, without explicit numbering.} For every pair of differently colored non-adjacent vertices~$x,y$, i.e., for~$x\in V_i$ and~$y\in V_j$ such that~$i\neq j$ and~$\set{x,y}\not\in E(G)$, they add an \emph{edge-vertex}~$v_{xy}$ to~$G'$ and set it adjacent to~$v_i$ and~$v_j$ with~$L(v_{xy})=\set{x,y}$. Since the color-class-vertices~$v_i$ form a vertex cover of~$G'$, the vertex cover number is at most~$\ell$. An example of the reduction can be seen in \cref{fig:list-color-multicolored-clique}.

    \begin{figure}[tbp]
        \centering
        \begin{subfigure}[b]{0.45\textwidth}
            \centering
            \includegraphics{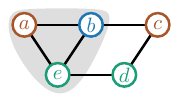}
            \caption{}
            \label{fig:multicolored-clique}
        \end{subfigure}
        \hspace{0.05\textwidth}
        \begin{subfigure}[b]{0.45\textwidth}
            \centering
            \includegraphics{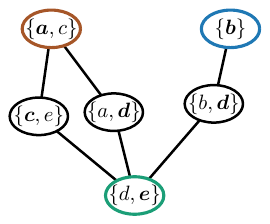}
            \caption{}
            \label{fig:list-color-from-multicolored-clique}        
        \end{subfigure}
        \caption{(a) An instance of \textsc{MulticoloredClique} with three colors and five vertices. It has a multicolored $3$-clique consisting of the vertices~$a$,~$b$, and~$e$. 
        (b) The instance of \ListColoring{} resulting from the \textsc{MulticoloredClique} instance. The vertices with a bold (colored) rim are the color-class-vertices corresponding to the color classes of the \textsc{MulticoloredClique} instance. 
        Since the \textsc{MulticoloredClique} instance has a multicolored $3$-clique, the \ListColoring{} instance has a proper list coloring (highlighted in bold).}
        \label{fig:list-color-multicolored-clique}
    \end{figure}

    We now show that a multicolored $\ell$-clique in~$G$ corresponds to a proper list coloring of~$G'$. 
    Given a multicolored $\ell$-clique~$C$, let~$\angles{w_1,\dots,w_\ell}$ be the vertices in~$C$ and assume w.l.o.g. that~$c(w_i)=i$. We color each color-class-vertex~$v_i\in G'$ with the color~$w_i$. For each edge-vertex~$v_{xy}$ it holds that~$\set{x,y}\not\in E(G)$. Thus,~$x$ and~$y$ cannot be both contained in~$C$. Therefore, as we treat vertices from~$G$ as colors in~$G'$, at least one of these two vertices can be used to color~$v_{xy}$, resulting in a proper list coloring of~$G'$.

    Given a proper list coloring~$c'$ of~$G'$, the set~$C=\set{c'(v_i)\mid i\in[\ell]}$ is a multicolored $\ell$-clique. Indeed, suppose that there are~$x,y\in C$ with~$\set{x,y}\not\in E(G)$. Let~$c(x)=i$ and~$c(y)=j$. Then,~$v_{xy}$ is adjacent to~$v_i$ and~$v_j$ which are colored~$x$ and~$y$, respectively. As the color of~$v_{xy}$ must be either~$x$ or~$y$, the coloring~$c'$ would be improper, a contradiction.
\end{proof}

\ParaListColoringToMixedColoring*
\label{lem:para-list-coloring-to-mixed-coloring*}
\begin{proof}
    Given an instance~$(G,L)$ of \ListColoring{} with color list~$L(v)\subset [\ell]$ for each vertex~$v$ of~$G$, we construct the following mixed graph~$G'$: For each vertex~$v$ of $G$ and~$j\in[\ell]-L(v)$, we add the directed path~$\angles{w^{v,j}_1,\dots,w^{v,j}_{\ell}}$, denoted~$P^{v,j}_{\ell-1}$. For each color~$j\in [\ell]-L(v)$, we add the edge~$\set{v,w^{v,j}_j}$. See \cref{fig:list-to-mixed-tw} for an example of the additions for each vertex.
    In any proper $\ell$-coloring of~$G'$, a vertex~$w^{v,j}_i$ must be colored with the color~$i$, as it is part of a directed path of length~$\ell-1$. As~$v$ is adjacent to vertices~$w^{v,j}_j$ for~$j\in[\ell]-L(v)$, it must subsequently be colored with a color in~$L(v)$. Thus, a proper~$\ell$-coloring of~$G'$ corresponds to a proper list coloring of~$G$ and vice versa.

    \begin{figure}[tbp]
        \centering
        \includegraphics{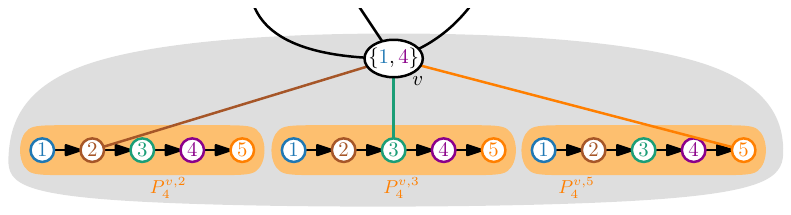}
        \caption{An example of the additions for a vertex~$v$ in an instance of \ListColoring{} with~$\ell=5$ and~$L(v)=\set{1,4}$. Each added path~$P^{v,j}_{4}$ prevents~$v$ from being colored with the color~$j$ in a proper $5$-coloring.}
        \label{fig:list-to-mixed-tw}
    \end{figure}

    As we can assume w.l.o.g.\ that~$\ell\leq n^2$, where~$n$ is the number of vertices in~$G$, the reduction is possible in polynomial time.
    It remains to show that the parameters remain bounded. We only modified~$G$ by adding paths, each connected via a single edge to a vertex in~$G$. Thus, no new cycles were created, and feedback vertex set remains the same.

    To show that pathwidth remains bounded using this reduction, we first show that we can perform a slightly different reduction that gives a simpler argument for pathwidth remaining bounded. Instead of adding the path~$P^{v,j}_{\ell-1}$ for each forbidden color~$j\in[\ell]-L(v)$, we add the path~$P^{v}_{\ell-1}$ once for each vertex~$v$, and connect it to~$v$ via the edges~$\set{v,w^{v}_j}$ for each forbidden color~$j\in[\ell]-L(v)$. This still enforces the color list of~$v$, and is thus a valid reduction.
    In the following, we show that this reduction increases the pathwidth by at most~$2$. 
    Given a path-decomposition~$\mathcal{T}$ of~$G$ of minimal width~$\tw(G)$, we construct a path-decomposition~$\mathcal{T}'$ of $G'$ as follows: For each vertex~$v$ of~$G$, let~$B_t$ be a bag containing~$v$ that contains at most~$\tw(G)+1$ vertices. To obtain~$\mathcal{T}'$, we replace the node~$t$ with a path~$\angles{t_0,t_1,\dots,t_{\ell-1},t_\ell}$. 
    Let~$\angles{t^{v}_{1},\dots,t^{v}_{\ell-1}}$ be the nodes of the path-decomposition of~$P^{v}_{\ell-1}$ with bags~$B_{t^{v}_i}=\set{w^{v}_i,w^{v}_{i+1}}$ for~$i\in[\ell-1]$.
    We set~$B_{t_0}=B_t$, $B_{t_i}=B_t\cup B_{t^{v}_i}$ for~$i\in[\ell-1]$, and~$B_{t_\ell}=B_t$. As~$P^{v}_{\ell-1}$ is only connected to~$v$, which is contained in~$B_t$, this results in a valid path-decomposition of~$G'$. As the new bags contain at most two additional vertices, the width of the path-decomposition of~$G'$ is at most the width of the path-decomposition of~$G$ plus~$2$.
    For the original reduction, where for each vertex we added several adjacent paths, we can apply the same argument to each added path, resulting in an overall increase of pathwidth by at most~$2$.
\end{proof}

\MixedColoringParaNPCW*
\label{thm:para-np-cwm-rank*}
\begin{proof}
  A string~$S^*$ is a \emph{superstring} of a string~$S$ if~$S$ is obtained from~$S^*$ through the deletion of characters.
  Given a set~$\mathcal{S}$ of strings over the binary alphabet~$\set{0,1}$ and an integer~$k$ the \PShortestSuperstring problem asks whether there is a superstring~$S^*$ of length at most~$k$ that is a superstring of each string in~$\mathcal{S}$. This problem was shown to be \NP-complete by Räihä and Ukkonen~\cite{RaihaUkkonen81SCSBinaryNPComplete}.

  Given a set of strings~$\mathcal{S}$ and an integer~$k$, we create the following mixed graph~$G$:
  For each string~$S\in\mathcal{S}$, we create a distinct directed path~$P_S$ that consists of so-called \emph{character vertices} that correspond to the characters in~$S$. 
  Further, we add edges between any two character vertices, belonging to different paths, that correspond to different characters. 
  The graph is constructed such that the colors of the character vertices in a proper $k$-coloring of~$G$ correspond to the positions of the characters in a superstring of~$\mathcal{S}$ of length~$k$. 
  See \cref{fig:sss-reduction*} for an example of the construction.

  \begin{figure}[tbp]
    \centering
    \captionsetup[subfigure]{justification=centering}
    \begin{subfigure}[b]{.45\textwidth}
      \centering
      \includegraphics{shortest-superstring}
      \caption{}
      \label{fig:sss-reduction-instance*}
    \end{subfigure}
    \begin{subfigure}[b]{.45\textwidth}
      \centering
      \includegraphics{shortest-superstring-split}
      \caption{}
      \label{fig:sss-ladder*}
    \end{subfigure}
    \repeatcaption{fig:sss-reduction}{
      (a) The mixed graph resulting from an instance of \PShortestSuperstring with strings~$\set{01,100,11}$ and shortest superstring~$1010$.
      (b) The construction resulting from splitting the path corresponding to the string~$100$ at each vertex, and connecting the resulting vertices to distinct cliques to enforce that they receive the same color in a proper $k$-coloring.
    } 
    \label{fig:sss-reduction*}
  \end{figure}

  \proofsubparagraph{Correctness of Reduction}
  In the following, we show that~$G$ is $k$-colorable if and only if there is a superstring of~$\mathcal{S}$ of length~$k$. 
  We denote with~$v_S^i$ the $i$-th character vertex in~$P_S$. 

  Given a superstring~$S^*$ of~$\mathcal{S}$ of length~$k$, let~$p(S_i)$ denote the position of the $i$-th character of~$S$ in the superstring~$S^*$ for~$S\in\mathcal{S}$ and~$i\in[\abs{S}]$. 
  We color~$G$ by assigning for each~$S\in\mathcal{S}$ and~$i\in[\abs{S}]$ the color~$p(S_i)$ to the vertex~$v_S^i$, thereby using~$k$ colors in total. 
  Since consecutive characters of the same string must appear in increasing order in the superstring, no arc of~$P_S$ is violated. Further, all vertices of the same color must correspond to the same character as they correspond to the same position in the superstring and thus no edge is violated. Therefore, this is a proper $k$-coloring of~$G$.

  Conversely, given a proper $k$-coloring~$c$ of~$G$, we create a superstring~$S^*$ of~$\mathcal{S}$ of length~$k$ as follows: 
  Let~$t_i$ for~$i\in[k]$ be the unique character (i.e., $0$ or~$1$) that corresponds to the vertices colored with color~$i$. As we know that vertices corresponding to different characters cannot receive the same color, due to the edges, this is well-defined. This results in the string~$t_1,\dots,t_k$, which we call~$S^*$. It remains to show that~$S^*$ is a superstring of~$\mathcal{S}$. For each string~$S\in\mathcal{S}$, each character~$S_i$ of~$S$ appears at position~$c(v_S^i)$ in~$S^*$. Since the coloring is proper and~$P_S$ is directed from the first to the last character, it holds that~$c(v_S^i)<c(v_S^{i+1})$ for each~$i\in[\abs{S}-1]$, and thus the characters of~$S$ appear in order in~$S^*$. Therefore,~$S^*$ is a superstring of~$\mathcal{S}$.
  
  \proofsubparagraph{Constant Rank}
  In~$G$, every vertex has at most one incoming and one outgoing arc. 
  To obtain a mixed graph~$G'$ with maxrank~$1$ we modify~$G$ as follows: We split each character vertex into two \emph{siblings}, with one sibling having the incoming arc of the original vertex and the other sibling having the outgoing arc (if the character vertex is the first or last in its corresponding path~$P_S$ only one sibling has an incident arc). Further, between each pair of sibling vertices we add a distinct $(k-1)$-clique, with each of the siblings being adjacent (via edges) to the whole clique, thus ensuring that the siblings receive the same color in any proper $k$-coloring. Therefore, the resulting graph~$G'$ is $k$-colorable if and only if~$G$ is $k$-colorable.

  \proofsubparagraph{Constant Cliquewidth}
  It remains to show that~$G'$ has constant mixed cliquewidth. We show this in two steps, first showing how to construct each split path corresponding to a string~$S\in\mathcal{S}$ using~$6$ labels and then how to construct the whole graph using~$6$ labels.
  \begin{claim}
    Each split path corresponding to a string~$S\in\mathcal{S}$ can be constructed by a mixed expression using at most~$6$ labels such that in the resulting graph all vertices corresponding to $(k-1)$-cliques have label~$2$ and all character vertices have label~$0$ or~$1$, depending on which character they correspond to.
  \end{claim}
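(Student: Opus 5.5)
We build the split path for $S=s_1s_2\cdots s_m$ incrementally from left to right. Each character vertex $v_S^i$ has been split into an \emph{in-sibling} $a_i$, which carries the arc from the previous position, and an \emph{out-sibling} $b_i$, which carries the arc to the next position. Between them sits a $(k-1)$-clique $K_i$ that is completely joined to both $a_i$ and $b_i$ by edges. The arcs of the split path are $(b_i,a_{i+1})$. The invariant we maintain after processing position $i$ has two parts. First, every finished character vertex, i.e.\ every $a_j,b_j$ with $j<i$ together with $a_i$, carries label $0$ or $1$ according to $s_j$. Second, all clique vertices built so far carry label $2$, and the single pending vertex $b_i$ carries a special label $3$. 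Since $b_i$ is the only vertex that still needs an outgoing arc, it must stay distinguishable. Labels $4$ and $5$ serve as auxiliary labels.

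\textbf{Step 1: one gadget.} We construct $a_i$, $K_i$ and $b_i$ as a separate subexpression. We build $K_i$ by the standard clique construction: introduce a vertex with label $5$, apply $\cwedge_{4,5}$, and relabel $\cwrelabel_{5\to 4}$, repeating $k-1$ times. This leaves all clique vertices with label $4$. Next we take the union with $a_i$ introduced under label $5$ and with $b_i$ introduced under label $3$, and apply $\cwedge_{4,5}$ and $\cwedge_{4,3}$. Both siblings are then joined to the whole clique, while $a_i$ and $b_i$ remain non-adjacent. Finally we relabel $\cwrelabel_{4\to 2}$. The gadget now has its clique at label $2$, $a_i$ at label $5$ and $b_i$ at label $3$.

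\textbf{Step 2: attaching the gadget.} Before attaching, the previous pending vertex $b_{i-1}$ in the partial expression must be moved to a label distinct from $3$. We relabel it to $4$ before taking the union. After the union we apply $\cwarc_{4,5}$, which adds exactly the arc $(b_{i-1},a_i)$ because labels $4$ and $5$ are unique at that moment. We then relabel $4$ and $5$ to $s_{i-1}$ and $s_i$ respectively. The only other operation between labels $0,1,2$ is the edges to label $2$, which were performed inside the gadget before the union. Consequently no unwanted edges or arcs arise, since we never apply an operation to labels $0$, $1$ or $2$ after a union.

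For $i=1$ there is no arc to add. After position $m$ we relabel the final pending vertex $3\to s_m$. The labels used are $\{0,1,2,3,4,5\}$, six in total, and the required end state holds.

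\textbf{Main obstacle.} The delicate point is ordering the relabels so that only $b_{i-1}$ and $a_i$ are singled out when the arc operation is applied. A naive reuse of the auxiliary labels inside the gadget would collide with the relabeled $b_{i-1}$. We resolve this by building the gadget's clique entirely inside the separate subexpression, which makes the reuse of label $4$ safe: the clique is relabeled to $2$ before the union, so after the union label $4$ belongs to $b_{i-1}$ alone.
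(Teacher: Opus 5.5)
Your proposal is correct and takes essentially the same approach as the paper: a left-to-right sweep over the split path using six labels, namely two character labels, one label for finished cliques, and three working labels for the current siblings and clique. The only difference is cosmetic: you assemble each gadget $a_i,K_i,b_i$ as a closed subexpression and attach it with the single arc $\cwarc_{4,5}$, whereas the paper attaches the in-sibling, the clique, and the out-sibling to the growing graph one at a time.
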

  \begin{claimproof}
    It is easy to see that a clique can be constructed using two labels. Thus, we can assume that we are given an expression that uses two labels and constructs a $(k-1)$-clique with all vertices being labeled~$3$ (at the end). We iteratively construct the split path as follows: For each pair of sibling vertices, we call the sibling with an incoming arc the \emph{incoming sibling} and the sibling with an outgoing arc the \emph{outgoing sibling}. We use the label~$\ell_i$ to introduce an incoming sibling and~$\ell_o$ to introduce an outgoing sibling (we could for example set~$\ell_i=4$ and~$\ell_o=5$). 
    We assume that we already have constructed the path up to the $i$-th character vertex, and proceed for the next pair of siblings as follows: We introduce the incoming sibling with label~$\ell_i$ and add the arc from the previous vertex with label~$\ell_o$ to it (this step is omitted for the first character vertex, which has no incoming arc). Since the previous vertex has no more connections to be added (in this path), we relabel it to its corresponding character. We then add the $(k-1)$-clique with label~$3$ and connect it to the incoming sibling. As the incoming sibling has no more connections to be added, we relabel it to its corresponding character. We then introduce the outgoing sibling with label~$\ell_o$ and connect it to the $(k-1)$-clique. Following this, the $(k-1)$-clique has no more connections to be added, so we relabel it to~$2$. The outgoing sibling is relabeled to its corresponding character after adding the connection to the incoming sibling of the next character vertex (except if it is itself the last character vertex, then it is relabeled to its corresponding character immediately). In total, we used $6$ labels: two for the clique, two for the siblings, and two for the corresponding characters of character vertices.
  \end{claimproof}
  \begin{claim}
    The whole graph~$G'$ can be constructed by a mixed expression using at most~$6$ labels and has mixed cliquewidth at most~$6$.
  \end{claim}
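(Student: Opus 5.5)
The plan is to build $G'$ one split path at a time, merging each new path into the part already constructed. The invariant is that the already-constructed part of $G'$ uses only the labels $0$, $1$, and~$2$. Every character vertex that corresponds to character~$0$ carries label~$0$, every character vertex corresponding to character~$1$ carries label~$1$, and every clique vertex carries label~$2$. This is exactly the labeling produced by the previous claim, so the accumulated graph and each new gadget share the same label scheme.

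The problem is that if we union a new split path directly with the accumulated graph, the two parts have identical labels. We could then not add the cross edges, which must exist only between \emph{different} paths. To avoid this, we first build the split path for the next string $S$ with the expression from the previous claim, but permute its labels. Its character-$0$ vertices end with label~$3$, its character-$1$ vertices with label~$4$, and its clique vertices with label~$2$. Only the final relabelings of the claim change, so it still uses at most $6$ labels. Internally the claim needs labels for the clique under construction ($3$ plus one auxiliary label) and for the siblings $\ell_i,\ell_o$. Label~$2$ can be reused for the auxiliary clique label, since the finished cliques are relabeled to~$2$ only after their own construction is complete. The labels $0,1$ (here renamed $3,4$) must be chosen to avoid collisions, and reshuffling which six label names are used at which moment is the one step I would check carefully. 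A clean option is to route the final character labels $0,1$ of the gadget through the labels $\ell_i,\ell_o$ or $3$ once they become free at the end.

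The merging step is then as follows. Take the disjoint union of the accumulated graph (labels $0,1,2$) with the new gadget (labels $2,3,4$). Next apply $\cwedge_{0,4}$ and $\cwedge_{1,3}$. This adds exactly the edges between character vertices of the new path and character vertices of earlier paths that correspond to different characters. No edges are added to clique vertices, since label~$2$ is untouched, and no edges are added within a single path. Finally relabel $\cwrelabel_{3\to 0}$ and $\cwrelabel_{4\to 1}$ to restore the invariant.

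Iterating this over all strings in $\mathcal{S}$ produces $G'$. Every edge and arc of $G'$ is created exactly once. Within a path, everything comes from the claim. Between paths, the only adjacencies in $G'$ are the character-mismatch edges, and these are added precisely at the union of the later path. At most $6$ labels are ever used, so $\cwm(G')\le 6$. Together with maxrank $1$, this shows that \MixedColoring{} is \paraNP-hard for constant $\cwm+\maxrank$. The main obstacle is the label bookkeeping inside the path gadget: we must confirm that the gadget can finish with its characters on labels $3,4$ while label~$2$ is in use, without exceeding six labels.
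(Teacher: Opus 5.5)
Your proposal is correct and follows essentially the paper's approach. The split paths are added one at a time, the new path's labels are moved off $\{0,1,2\}$ before the union, the character-mismatch edges are added, and the labels are moved back. The one difference is that you leave the new cliques on label~$2$, which is harmless because they receive no cross edges, whereas the paper moves all three labels to fresh ones.

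The bookkeeping you flag is not a real obstacle. The new path is built in its own sub-expression, which by the previous claim ends with all its vertices on labels $0,1,2$. You can therefore simply apply $\cwrelabel_{0\to 3}$ and $\cwrelabel_{1\to 4}$ to that finished sub-expression before the union, just as the paper relabels the completed path.
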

  \begin{claimproof}
    We construct the graph iteratively, adding the split paths one by one. Before adding a new path to the existing graph, we first relabel all the vertices of the new path to ensure that the labels of the new path are distinct from the labels of the existing graph, to enable the addition of edges between the new path and the existing graph. This leads to us using six labels, as a split path consists of three labels. After adding the new path to the existing graph via a union operation, we add edges between vertices corresponding to different characters. Finally, we relabel the labels of the new path back to the original labels, so that the next path can be added in the same manner. 
    As each split path is constructed using at most~$6$ labels, and we only use~$6$ labels in total to combine all paths, the mixed cliquewidth of the whole graph is at most~$6$.
  \end{claimproof}

  Thus, the resulting graph~$G'$ is $k$-colorable if and only if there is a superstring of~$\mathcal{S}$ of length~$k$, has maxrank~$1$, and has mixed cliquewidth at most~$6$. Therefore, \MixedColoring{} is \paraNP-hard parameterized by mixed cliquewidth plus maxrank.
\end{proof}

\CorrectnessPCS*
\label{claim:correctness-pcs*}
\begin{claimproof}
  Let~$\sigma$ be a schedule of length~$D$ starting at time~$0$. We denote with~$\sigma(t)$ the starting time of task~$t$.
  We create a $4D$-coloring of the resulting graph by assigning for each task~$t\in T_1$ the color~$4\sigma(t)+1$ to~$v_t^-$ and the color~$4\sigma(t)+3$ to~$v_t^+$, while for each task~$t\in T_2$ we assign the color~$4\sigma(t)+2$ to~$v_t^-$ and the color~$4\sigma(t)+4$ to~$v_t^+$. Furthermore, we assign the (directed) path~$P$ its unique $4D$-coloring. As at each point in time, at most one task is scheduled on each machine, no two task vertices receive the same color. Additionally, we colored the task vertices according to their types, ensuring that no edge between a task vertex and the path~$P$ is violated. Lastly, as the schedule respects the precedence constraints, it holds for each arc~$(v_t^+,v_{t'}^-)$ that~$\sigma(t)+1\leq \sigma(t')$, and thus~$v_t^+$ receives a smaller color than~$v_{t'}^-$. This results in all arcs being respected. Therefore, we have obtained a proper $4D$-coloring of the resulting graph.

  Conversely, given a proper $4D$-coloring~$c$ of the resulting graph, we create a schedule~$\sigma$ by assigning each task~$t\in T_1 \cup T_2$ the start time~$\sigma(t)=\floor{c(v_t^-)/4}$. As the task vertices were colored according to their types, there can be at most two tasks with the same start time, one on each machine. 
  It remains to show that the schedule respects the precedence constraints. For two tasks~$t\prec t'$ it holds due to the arcs that~$c(v_t^-)<c(v_t^+)<c(v_{t'}^-)$. As~$c(v_t^+)$ is~$3$ or~$4$ modulo~$4$ and~$c(v_{t'}^-)$ is~$1$ or~$2$ modulo~$4$, it must hold that~$\floor{c(v_t^+)/4}<\floor{c(v_{t'}^-)/4}$, and thus~$\sigma(t)+1\leq\sigma(t')$.
  Therefore, the schedule respects the precedence constraints and is a valid schedule of length~$D$.
\end{claimproof}

\NDTCPCS*
\label{claim:nd-tc-pcs*}
\begin{claimproof}
  Recall that by construction, the task vertices form a clique in the underlying graph of~$G$ and therefore also in the underlying graph of the transitive closure~$G^+$.
  Since~$P$ is a directed path, the vertices of~$P$ form a clique in the underlying graph of~$G^+$ as well.
  Thus, there are four types of task vertices, depending on their assigned machine and whether they are a start or end vertex, and there are four types of vertices on the path~$P$, depending on their position, as this determines to which of the task types they are (not) adjacent.
\end{claimproof}

\ProperPreorderColorings*
\label{claim:proper-preorder-colorings*}
\begin{claimproof}
  Given a coloring~$c$ that corresponds to~$p$, we know that there is a list of~$\ell$ ascending colors~$\angles{c_1,\dots,c_\ell}$ such that for each type~$C$ it holds that~$c(C)\subseteq[c_{p^-(C)}, c_{p^+(C)})$. As~$p$ is proper, for each arc~$(C_i,C_j)\in A(G)$ it holds that~$p^+(C_i)\leq p^-(C_j)$, and thus that $c_{p^+(C_i)}\leq c_{p^-(C_j)}$. Therefore, it holds for each pair of vertices~$v_i\in C_i$ and~$v_j\in C_j$ that~$c(v_i)<c_{p^+(C_i)}\leq c_{p^-(C_j)}\leq c(v_j)$, and thus no arc is violated.

  Given a proper coloring~$c$ of~$G$, we obtain a proper type-endpoint preorder~$p$ to which~$c$ corresponds as follows: 
  We denote for each type~$C$ with~$c^-(C)$ and~$c^+(C)$ the \emph{colored type-endpoints}, i.e., the largest and smallest color such that~$c(C)\subseteq[c^-(C),c^+(C))$. We obtain a proper type-endpoint preorder~$p$ by setting~$p^-(C)$ to~$i$ if~$c^-(C)$ is the $i$-th smallest colored type-endpoint (we proceed analogously for~$p^+(C)$). Clearly,~$c$ corresponds to this preorder~$p$. It remains to show that the preorder~$p$ is proper. For each arc~$(C_i,C_j)\in A(G)$, let~$v_i$ be the vertex with the largest color in~$C_i$ and let~$v_j$ be the vertex with the smallest color in~$C_j$ (under coloring~$c$). As we defined the colored type-endpoints to be as tight as possible, it holds that~$c^+(C_i)=c(v_i)+1$ and~$c^-(C_j)=c(v_j)$. Since~$c$ is a proper coloring, it holds that~$c(v_i)<c(v_j)$, and thus~$c^+(C_i)\leq c^-(C_j)$. Therefore, it holds that~$p^+(C_i)\leq p^-(C_j)$, and thus~$p$ is a proper preorder. 
\end{claimproof}

\ILPCorrectness*
\label{claim:ilp-correctness*}
\begin{claimproof}
  Recall the \ILP{} constructed for a given proper type-endpoint preorder~$p$ and a number of colors~$k$:
  
  \begin{subequations}
  \begin{align}
    c_i+1 & \leq c_{i+1} \quad&&\forall i\in [\ell-1]\label{constr:preorder-endpoints}\\
    \sum_{\mathcal{S}'\subseteq \mathcal{S}}x_{\mathcal{S}',i} & \leq c_{i+1}-c_i \quad&&\forall i\in [\ell-1]\label{constr:interval-size}\\
    \sum_{i=1}^{p^-(C)-1}\sum_{\mathcal{S}'\subseteq \mathcal{S}\colon C\in \mathcal{S}'}x_{\mathcal{S}',i} & =0 \quad&&\forall C\in \mathcal{S}\label{constr:lower-type-endpoint}\\
    \sum_{i=p^-(C)}^{p^+(C)-1}\sum_{\mathcal{S}'\subseteq \mathcal{S}\colon C\in \mathcal{S}'}x_{\mathcal{S}',i} & =\abs{C} \quad&&\forall C\in \mathcal{S}\label{constr:clique-size}\\
    \sum_{i=p^+(C)}^{\ell-1}\sum_{\mathcal{S}'\subseteq \mathcal{S}\colon C\in \mathcal{S}'}x_{\mathcal{S}',i} & =0 \quad&&\forall C\in \mathcal{S}\label{constr:upper-type-endpoint}\\
    x_{\mathcal{S}',i} & =0 \quad&&\forall i\in [\ell-1]\forall \mathcal{S}'\subseteq \mathcal{S}\colon \nonumber\\
    & &&\exists C,C'\in \mathcal{S}'\colon \set{C,C'}\in E(G)\label{constr:edge}\\
    c_i & \in [k] \quad&&\forall i\in [\ell]\label{bound:color-endpoints}\\
    x_{\mathcal{S}',i} & \in\set{0,\dots,k} \quad&&\forall i\in [\ell-1]\forall \mathcal{S}'\subseteq \mathcal{S}\label{bound:color-count}
  \end{align}
  \end{subequations}

  Given a proper $k$-coloring~$c$ of~$G$ that corresponds to the given proper preorder~$p$ via the list of colors~$\angles{c_1',\dots,c_\ell'}$, we can find a feasible solution for the \ILP{} by setting each variable~$c_i$ to~$c_i'$ and counting for each interval~$[c_i,c_{i+1})$ and each subset~$\mathcal{S'}\subseteq\mathcal{S}$ how many colors in the interval are used exclusively by the types in~$\mathcal{S'}$. Formally, we set~$x_{\mathcal{S'},i}=\abs{\set{d\in[c_i,c_{i+1})\mid \forall C\in \mathcal{S'}\colon d\in c(C)\land \forall C\notin \mathcal{S'}\colon d\notin c(C)}}$. This clearly satisfies all constraints.

  Conversely, given a feasible solution to the \ILP{}, we can construct a proper $k$-coloring~$c$ of~$G$ as follows: For each interval~$[c_i,c_{i+1})$, we iteratively assign the colors to the types by going through the subsets~$\mathcal{S'}$ of $\mathcal{S}$ in some arbitrary order and assigning each type in the subset~$\mathcal{S'}$ the next~$x_{\mathcal{S'},i}$ unassigned colors in the interval. When assigning colors to a type, we color some arbitrary uncolored vertices of the type. It remains to show that this results in a proper $k$-coloring of~$G$.
  First off, the constraint~\labelcref{constr:clique-size} ensures that each type gets assigned exactly~$\abs{C}$ colors to color all its vertices distinctly. Furthermore, the constraint~\labelcref{constr:edge} ensures that no two types (and therefore no two vertices) connected via an edge receive the same color. Lastly, for each arc~$(C_i,C_j)$ it holds that~$p^+(C_i)\leq p^-(C_j)$ (since~$p$ is a proper preorder) and due to the constraints~\labelcref{constr:lower-type-endpoint,constr:upper-type-endpoint} it holds that all colors assigned to vertices in~$C_i$ are smaller than the colors assigned to vertices in~$C_j$, thus respecting all arcs. Finally, the bounds ensure that at most~$k$ colors are used, resulting in~$c$ being a proper $k$-coloring of~$G$.
\end{claimproof}

\MISZColoringProp*
\label{claim:mis0-coloring-property*}
\begin{claimproof}
  Clearly, if~$G$ is empty, then~$\chi(G)=0$. Otherwise, we first show that~$\chi(G)\leq 1+\chi(G-I)$ for each maximal independent set~$I$ in the subgraph induced by the vertices of inrank~$0$. Let~$\angles{V_2,\dots,V_k}$ be the color classes of an optimal coloring of~$G-I$. Since~$I$ is an independent set and has no incoming arcs,~$\angles{I,V_2,\dots,V_k}$ forms the color classes of a proper~$k$ coloring of~$G$. Therefore, there exist proper colorings of~$G$ where the first color class forms a maximal independent set in the subgraph induced by the vertices of inrank~$0$. It remains to show that there exists such a coloring that is optimal.

  Let~$\angles{V_1,\dots,V_k}$ be the color classes of an optimal coloring of~$G$. The set~$V_1$ must be independent and cannot contain vertices of inrank greater than~$0$, as these vertices are colored with the color~$1$ and can therefore have no incoming arcs. Let~$V_1'\supseteq V_1$ be a maximal independent set in the subgraph induced by the vertices of inrank~$0$. As~$G-V_1'\subseteq G-V_1$ it follows that~$G-V_1'$ is~$(k-1)$-colorable and thus that~$\chi(G)\geq 1+\chi(G-V_1')$. 
\end{claimproof}

\MISZBound*
\label{claim:mis0-bound*}
\begin{claimproof}
  Let~$V_0$ be the set of vertices of inrank~$0$. It holds that~$\ndu(G[V_0])\leq \ndu(G)$ and $\omega(G[V_0])\leq \omega(G)$. For each type~$I$ that induces an independent set, it holds that either all or none of the vertices of~$I$ are contained in a maximal independent set of~$G[V_0]$, as all vertices in~$I$ have the same neighborhood. Furthermore, for each type~$C$ that induces a clique, it holds that at most one vertex of~$C$ is contained in a maximal independent set of~$G[V_0]$. Thus, as~$\omega(G)\geq 1$, it holds for each of the at most~$\ndu(G)$ types that there are at most~$\omega(G)+1$ possibilities to choose vertices of the type for a maximal independent set, resulting in at most~$(\omega(G)+1)^{\ndu(G)}$ maximal independent sets of~$G[V_0]$ in total.
\end{claimproof}

\end{document}
